\definecolor{since}{rgb}{0.5,0.5,0.5}
\definecolor{newred}{HTML}{ED2024}
\definecolor{newgreen}{HTML}{109A48}
\definecolor{newblue}{HTML}{535DAA}
\definecolor{neworange}{HTML}{F79420}
\renewcommand*\env@matrix[1][*\c@MaxMatrixCols c]{%
  \hskip -\arraycolsep
  \let\@ifnextchar\new@ifnextchar
  \array{#1}}
\setlist{itemsep=0mm}
\newclass{\QPCP}{QPCP}
\newclass{\QCPCP}{QCPCP}
\newclass{\QCMAcomp}{QCMA-complete}
\newclass{\sharpP}{\#P}
\newtheorem{theorem}{Theorem}
\newtheorem*{theorem*}{Theorem}
\newtheorem*{proposition*}{Proposition}
\newtheorem{fact}[theorem]{Fact}
\newtheorem*{fact*}{Fact}
\newtheorem{lemma}[theorem]{Lemma}
\newtheorem*{lemma*}{Lemma}
\newtheorem{corollary}[theorem]{Corollary}
\newtheorem*{corollary*}{Corollary}
\newtheorem*{conjecture*}{Conjecture}
\theoremstyle{definition}
\newtheorem{definition}[theorem]{Definition}
\newtheorem*{definition*}{Definition}
\theoremstyle{remark}
\newtheorem*{remark*}{Remark}
\newtheorem*{example}{Example}
\newcommand{\CC}{\ensuremath{\mathbb{C}}}
\newcommand{\FF}{\ensuremath{\mathbb{F}}}
\newcommand{\mcC}{\ensuremath{\mathcal{C}}}
\newcommand{\mcO}{\ensuremath{\mathcal{O}}}
\newcommand{\mcP}{\ensuremath{\mathcal{P}}}
\newcommand{\mcS}{\ensuremath{\mathcal{S}}}
\newcommand{\mcX}{\ensuremath{\mathcal{X}}}
\newcommand{\mcZ}{\ensuremath{\mathcal{Z}}}
\DeclareMathOperator\stab{Stab}
\DeclareMathOperator\unitary{U}
\DeclareMathOperator\rad{rad}
\newcommand{\ham}{\mathcal{H}}
\newcommand{\sdim}{\ensuremath{\dim_{\stab}}}
\newcommand{\destab}{\ensuremath{e^{i\frac{\pi}{8}Y}}}
\newcommand{\Hn}{\ensuremath{\ham^{(n)}}}
\newcommand{\dmagic}{\ensuremath{{\ham}_D}}
\newcommand{\n}{\ensuremath{^{\otimes n}}}
\newcommand{\br}[1]{\ensuremath{\left\{{#1}\right\}}}
\newcommand{\symp}[2]{\ensuremath{\omega(#1,#2)}}
\DeclareMathOperator\had{H}
\DeclareMathOperator\eye{\mathbb{I}}
\DeclareMathOperator\phase{S}
\DeclareMathOperator\T{T}
\DeclareMathOperator\CNOT{CNOT}
\newcommand\restr[2]{{
  \left.\kern-\nulldelimiterspace 
  #1 
  \right|_{#2} 
  }}
\newcommand{\nnote}[1]{}
\newcommand{\mnote}[1]{}
\newcommand{\jnote}[1]{}
\newcommand{\snote}[1]{}
\title{\vspace{-2.5em}Hamiltonians whose low-energy states require $\Omega(n)$ T gates}
\author[1]{Nolan J. Coble \footnote{\href{mailto:ncoble@terpmail.umd.edu}{ncoble@terpmail.umd.edu}, \href{mailto:mcoudron@umd.edu}{mcoudron@umd.edu}, \href{mailto:nelson1@umd.edu}{nelson1@umd.edu}, \href{mailto:sajjad@umd.edu}{sajjad@umd.edu}}}
\author[1,2]{Matthew Coudron}
\author[1]{Jon Nelson}
\author[1]{Seyed Sajjad Nezhadi}
\affil[1]{Joint Center for Quantum Information and Computer Science (QuICS), University of Maryland}
\affil{Department of Computer Science, University of Maryland}
\affil[2]{National Institute of Standards and Technology}
\date{}
\begin{document}

\maketitle

\vspace{-3em}
\begin{abstract}

The recent resolution of the NLTS Conjecture \cite{ABN22} establishes a prerequisite to the Quantum PCP (QPCP) Conjecture through a novel use of newly-constructed QLDPC codes \cite{LZ22}. Even with NLTS now solved, there remain many independent and unresolved prerequisites to the QPCP Conjecture, such as the NLSS Conjecture of \cite{GL22}.  In this work we focus on a specific and natural prerequisite to both NLSS and the QPCP Conjecture, namely, the existence of local Hamiltonians whose low-energy states all require $\omega(\log n)$ $\T$ gates to prepare.  In fact, we prove a stronger result which is not necessarily implied by either conjecture: we construct local Hamiltonians whose low-energy states require $\Omega(n)$ $\T$ gates. We further show that our procedure can be applied to the NLTS Hamiltonians of \cite{ABN22} to yield local Hamiltonians whose low-energy states require both $\Omega(\log n)$-depth and $\Omega(n)$ $\T$ gates to prepare. In order to accomplish this we define a ``pseudo-stabilizer'' property of a state with respect to each local Hamiltonian term, and prove an additive local energy lower bound for each term at which the state is pseudo-stabilizer. By proving a relationship between the number of $\T$ gates preparing a state and the number of terms at which the state is pseudo-stabilizer, we are able to give a constant energy lower bound which applies to any state with $\T$-count less than $c \cdot n$ for some fixed positive constant $c$. This result represents a significant improvement over \cite{CCN+23} where we used a different technique to give an energy bound which only distinguishes between stabilizer states and states which require a non-zero number of T gates.

\vspace{-0.7em}
\end{abstract}

\section{Introduction}
The study of the ground and low-energy spaces of local Hamiltonians is an integral part of quantum complexity and condensed-matter theory. The local Hamiltonian problem (LH) asks one to decide whether a local Hamiltonian has ground-state energy below some value $a$ or above some value $b$. The difficulty of LH is largely controlled by the \emph{promise gap} of the problem, the value of $\delta(n)\equiv b-a$. In the case where $\delta(n)=1/\poly(n)$ it is known that LH is complete for the class $\QMA$, the quantum analogue of $\NP$ \cite{KSV02}. Intuitively, this result shows that for arbitrary local Hamiltonians one cannot approximate the energy of a low-energy state to precision $1/\poly(n)$ in $\BQP$, assuming the widely believed conjecture $\QMA\neq\BQP$.

Though LH is $\QMA$-hard for $\delta(n)=1/\poly(n)$, little is known about the hardness of LH for larger values of $\delta(n)$. The classical analogue of $k$-LH is the decision version of MAX-$k$-CSP: given an instance of $k$-CSP, decide whether a greater than $b$ fraction of clauses can be satisfied, or if less than an $a$ fraction can be satisfied. Like LH, for $\delta(n)\equiv b-a=1/\poly(n)$, MAX-$k$-CSP is $\NP$-complete for $k\geq 2$ by a reduction from 3-SAT. Further, by the classical PCP theorem \cite{AS92,ALM+98} it is known that MAX-$k$-CSP is $\NP$-complete even for $\delta(n)=\Omega(1)$, a constant. The Quantum PCP (QPCP) Conjecture asserts an analogous result in the quantum case: that LH is $\QMA$-complete for $\delta(n)=\Omega(1)$.\footnote{We assume that all local Hamiltonians are normalized by the number of interaction terms. For un-normalized Hamiltonians the assumption in the QPCP Conjecture is that the \emph{relative} promise gap is $\Omega(1)$.} Despite two decades of study the QPCP Conjecture remains open. If the QPCP conjecture is true, for arbitrary local Hamiltonians one cannot hope to approximate the energy of low-energy states to better than some constant precision in $\BQP$.

A recent trend in studying the QPCP Conjecture has been to examine its consequences for local Hamiltonians \cite{ABN22,GL22,CCN+23,WFC23}. In particular, if solving LH with constant relative promise gap is $\QMA$-hard then low-energy states of $\QMA$-hard local Hamiltonians should not admit energy approximation algorithms in a complexity class smaller than $\QMA$. For example, states produced by constant-depth circuits (``trivial'' states) have energy approximation algorithms in $\NP$ via a lightcone argument. The recently-solved No Low-Energy Trivial States (NLTS) Theorem \cite{ABN22}, first conjectured by \cite{FH13}, confirms the existence of an explicit family of NLTS local Hamiltonian whose low-energy space requires $\Omega(\log n)$-depth, which was a necessary consequence of the QPCP Conjecture assuming $\QMA\neq\NP$. An analogous result for Fermionic Hamiltonians was recently shown in \cite{HAT+23}.

Gharibian and Le Gall showed that states with classical query and sampling access (``sampleable'' states) have energy approximation algorithms in $\MA$, and so assuming the QPCP Conjecture + $\QMA\neq\MA$, local Hamiltonians with no low-energy sampleable states (NLSS) must exist \cite{GL22}. Their NLSS Conjecture, which asserts that families of such NLSS Hamiltonians exist, remains open.

As a step towards understanding the NLSS Conjecture of \cite{GL22}, we consider a sub-class of sampleable states which we call \emph{almost-Clifford} states. We say that $\ket\psi$ is an \textbf{almost-Clifford} state if it can be prepared by Clifford gates + $O(\log n)$ $\T$ gates.\footnote{We actually allow arbitrary Pauli-rotation gates, i.e. gates of the form $e^{i\theta P}$ for $P\in\mcP_n$, which the $\T$ gate is an example of.} A family of $k$-local Hamiltonians, $\ham=\frac{1}{m}\sum h_i$, is said to have the \textbf{$\epsilon$-No Low-Energy Almost-Clifford States} (NLACS) property, if every state with energy $\bra\psi \ham \ket\psi\leq \epsilon$ requires $\omega(\log n)$ $\T$ gates to prepare. $\ham$ has the NLACS property if there is a constant $\epsilon$ such that $\ham$ is $\epsilon$-NLACS.
An extension of the Gottesman-Knill Theorem \cite{Got98,BG16} shows that almost-Clifford states are, indeed, sampleable, so the existence of NLACS Hamiltonians is a prerequisite to the NLSS Conjecture. In fact, almost-Clifford states also admit $\NP$ energy approximation algorithms, so NLACS Hamiltonians are independently guaranteed by the QPCP Conjecture + $\QMA\neq\NP$.

A recent work by the present authors proved a weaker result, namely, the existence of so-called ``No Low-Energy Stabilizer States'' (NLCS) Hamiltonians \cite{CCN+23}.
We constructed local Hamiltonians which satisfy the NLCS property by considering a variant of the $\T$ magic-state Hamiltonian. 
\begin{definition*}
    Let $D\equiv\destab$. We define the $D$ magic-state Hamiltonian as $\dmagic \equiv \frac{1}{n}\sum_{i=1}^n D\ketbra{-}_i D^\dagger$.
That is, $\dmagic$ is the average of 1-local projectors onto the state $D\ket-$, and has unique ground state $(D\ket+)\n$.
\end{definition*} 
In \cite{CCN+23}, we conjectured that $\dmagic$ is, in fact, a family of NLACS Hamiltonians.
Starting from a stabilizer state, one way to reduce its energy under $\dmagic$ is to simply ``undo'' the $D$ gates rotating the Hamiltonian. The states
$\ket{\psi_t}\equiv\left(\bigotimes_{i\in[t]}D\ket{+}\right)\otimes \ket{0}^{\otimes n-t}$ have energy exactly $\bra{\psi_t}\dmagic\ket{\psi_t} = \left(1-\frac{t}{n}\right)\sin^2\left(\frac{\pi}{8}\right)$. 
In \cite{CCN+23} we conjectured that this was the best one can hope to do. In particular, we conjectured a lower bound on the energies of states prepared with at most $t\leq n$ $\T$ gates under $\dmagic$, but were only able to prove it for $t\leq 1$. The first main result of the present work is a positive resolution to this conjecture:

\begin{restatable}[Conjecture 2 in \cite{CCN+23}]{theorem}{energyLB}\label{thm:lb-t-T-gates}
Let $\ket\psi$ be an $n$-qubit state prepared by Clifford gates plus at most $t$ Pauli-rotation gates, $e^{i\theta P}$. For the $D$ magic-state Hamiltonian, the energy of $\ket\psi$ is lower-bounded as
\begin{equation}\label{eq:dmagic-energy-lb}
    \bra{\psi}\dmagic\ket{\psi}\geq \left(1-\frac{t}{n}\right)\sin^2\left(\frac{\pi}{8}\right).
\end{equation}
\end{restatable}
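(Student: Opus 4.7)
The plan follows the three-step program outlined in the abstract: (i) extract from the circuit preparing $\ket\psi$ a partial stabilizer group $\mcS$ of $\FF_2$-dimension at least $n-t$, (ii) declare $\ket\psi$ to be \emph{pseudo-stabilizer at site $i$} whenever some element of $\mcS$ acts non-trivially on qubit $i$ and derive a local additive energy lower bound of $\sin^2(\pi/8)$ at each such site, and (iii) bound the number of non-pseudo-stabilizer sites by $t$ using the isotropic-dimension inequality for abelian Pauli subgroups. For (i) I would first commute every Clifford to the right through the Pauli rotations so that $\ket\psi = e^{i\theta_t Q_t}\cdots e^{i\theta_1 Q_1}\ket{\phi_0}$, where $\ket{\phi_0}$ is a stabilizer state and each $Q_j$ is a Pauli (conjugation of a Pauli by a Clifford remains a Pauli). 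Letting $\mcS_0$ be the stabilizer group of $\ket{\phi_0}$ and defining recursively $\mcS_j = \{s\in\mcS_{j-1} : [s,Q_j]=0\}$, the identity $s\,e^{i\theta_j Q_j} = e^{i\theta_j Q_j}\,s$ whenever $[s,Q_j]=0$ implies by induction that $s\ket\psi = \ket\psi$ for every $s\in\mcS\equiv\mcS_t$; each restriction is either trivial or cuts out the kernel of a single symplectic linear functional, so $\dim_{\FF_2}\mcS \ge n-t$.

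For (ii), a direct computation using $\ketbra{-}=\tfrac12(I-X)$ and $DXD^\dagger=(X-Z)/\sqrt 2$ gives $D\ketbra{-}D^\dagger = \tfrac12\bigl(I-(X-Z)/\sqrt 2\bigr)$, so the local energy at site $i$ is
\begin{equation*}
\bra\psi D\ketbra{-}_i D^\dagger\ket\psi \;=\; \tfrac12\Bigl(1 - \tfrac{1}{\sqrt 2}\bigl(\langle X_i\rangle - \langle Z_i\rangle\bigr)\Bigr).
\end{equation*}
If $\ket\psi$ is pseudo-stabilizer at $i$, pick $s\in\mcS$ with $s_i\neq I$; then $s$ anticommutes with at least one of $X_i, Z_i$, and together with $\langle\psi|s|\psi\rangle = 1$ this forces the corresponding one-local expectation to vanish. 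Since the other expectation lies in $[-1,1]$, we obtain $\langle X_i\rangle - \langle Z_i\rangle \le 1$ and hence the local energy is at least $\tfrac12(1-1/\sqrt 2) = \sin^2(\pi/8)$.

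For (iii), let $B\subseteq[n]$ be the set of ``bad'' (non-pseudo-stabilizer) sites. Passing to the symplectic representation, $\mcS$ corresponds to an isotropic subspace $V\subseteq\FF_2^{2n}$ of dimension at least $n-t$, and $B$ is exactly the set of indices at which every vector in $V$ has zeros in its two $(X_i,Z_i)$-coordinates. The restriction $V\hookrightarrow\FF_2^{2(n-|B|)}$ is therefore injective and preserves the symplectic form, with image still isotropic; since isotropic subspaces of $\FF_2^{2m}$ have dimension at most $m$, we conclude $|B|\le n-\dim V\le t$. Since each term of $\dmagic$ is a positive operator, summing the bound from (ii) over the $n-|B|\ge n-t$ pseudo-stabilizer sites yields
\begin{equation*}
\bra\psi\dmagic\ket\psi \;\ge\; \frac{n-|B|}{n}\sin^2(\pi/8) \;\ge\; \Bigl(1-\tfrac{t}{n}\Bigr)\sin^2(\pi/8).
\end{equation*}

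The crux of the argument is the isotropy bound in (iii): Step (i) only delivers $\dim\mcS\ge n-t$, which \emph{a priori} bounds $|B|$ only by the naive count $|B|\le(n+t)/2$ obtained from the fact that each bad site kills two symplectic coordinates. Isotropy is precisely what halves this to $|B|\le t$ and produces the tight factor $(1-t/n)$ matching the upper bound from the family $\ket{\psi_t}$ of \cite{CCN+23}; carefully setting up the pseudo-stabilizer notion so that this symplectic input is available at the site level is the key technical step.
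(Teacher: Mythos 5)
Your proof is correct and follows essentially the same route as the paper: a stabilizer-dimension lower bound of $n-t$, a per-site energy bound of $\sin^2(\pi/8)$ wherever some stabilizer acts non-trivially, and the isotropic-dimension argument showing at most $t$ sites fail this condition. The only (harmless) differences are that you prove the dimension bound inline rather than citing it, and you obtain the local bound by directly killing one of $\langle X_i\rangle,\langle Z_i\rangle$ instead of invoking the paper's fidelity lemma for states with anti-commuting stabilizers (your sign in $DXD^\dagger=(X-Z)/\sqrt2$ differs from the paper's convention $DXD^\dagger=\had$, but this does not affect the argument).
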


\cref{thm:lb-t-T-gates} implies that $\dmagic$ is an $\alpha\sin^2\left(\frac{\pi}{8}\right)$-NLACS Hamiltonian for any $\alpha\in[0,1)$. In fact, it immediately yields a much stronger result which is not a priori guaranteed by either the QPCP Conjecture or the NLSS Conjecture:
\begin{restatable}{corollary}{NLACS}\label{thm:NLACS}
    For any constant $\epsilon \in (0,\sin^2\left(\frac{\pi}{8}\right))$, if $\ket\psi$ has energy $\bra\psi\dmagic\ket\psi\leq\epsilon$ then $\ket\psi$ requires $\Omega(n)$ $\T$ gates to prepare.
\end{restatable}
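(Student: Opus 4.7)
The plan is to derive the corollary as an immediate consequence of \cref{thm:lb-t-T-gates}, by contrapositive. Suppose $\ket\psi$ is an $n$-qubit state that can be prepared using Clifford gates plus at most $t$ Pauli-rotation gates (in particular, at most $t$ $\T$ gates) and has energy $\bra\psi\dmagic\ket\psi\leq\epsilon$. Applying \cref{thm:lb-t-T-gates} to $\ket\psi$ yields
\[
\epsilon\;\geq\;\bra{\psi}\dmagic\ket{\psi}\;\geq\;\left(1-\tfrac{t}{n}\right)\sin^2\!\left(\tfrac{\pi}{8}\right).
\]

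I would then solve for $t$: rearranging gives
\[
t\;\geq\;n\left(1-\frac{\epsilon}{\sin^2(\pi/8)}\right).
\]
Since $\epsilon$ is a fixed constant strictly less than $\sin^2(\pi/8)$, the coefficient $c_\epsilon \equiv 1-\epsilon/\sin^2(\pi/8)$ is a strictly positive constant independent of $n$. Hence any such $\ket\psi$ requires at least $c_\epsilon\cdot n = \Omega(n)$ Pauli-rotation gates, and in particular $\Omega(n)$ $\T$ gates, to prepare.

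There is essentially no obstacle here: the content is entirely in \cref{thm:lb-t-T-gates}, and the corollary is just the quantitative consequence of reading that inequality ``the other way.'' The only thing to be careful about is noting that the hypothesis $\epsilon<\sin^2(\pi/8)$ is exactly what is needed to make $c_\epsilon$ strictly positive, and that the theorem's hypothesis covers $\T$ gates since $\T = e^{i(\pi/8)Z}\cdot e^{-i\pi/8}$ is (up to global phase) a Pauli-rotation of the form permitted by \cref{thm:lb-t-T-gates}.
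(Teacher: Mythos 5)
Your proposal is correct and is exactly the argument the paper intends: the corollary is obtained by reading the inequality of \cref{thm:lb-t-T-gates} in the contrapositive direction, with the hypothesis $\epsilon<\sin^2(\pi/8)$ guaranteeing the constant $1-\epsilon/\sin^2(\pi/8)$ is positive. No gaps.
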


In addition to implying both NLTS and NLACS separately, the QPCP Conjecture also implies the existence of \emph{simultaneous} NLTS/NLACS Hamiltonians. The second main result of this work is to show that one can construct a family of Hamiltonians on a single system\footnote{The $2n$-qubit Hamiltonian $\ham_{NLTS}\otimes\eye+\eye\otimes\dmagic$, where $\ham_{NLTS}$ is the family of NLTS Hamiltonians from \cite{ABN22}, is simultaneously NLTS/NLACS, albeit for a slight smaller energy. This construction is not particularly illuminating. For example, we can prepare the ground-state of $\ham_{NLTS}$ on the first $n$ qubits using 0 $\T$ gates, and simultaneously prepare the ground-state of $\dmagic$ in depth 1.  So, although the \emph{joint} state is neither trivial nor almost-Clifford, the state on the first $n$ qubits \emph{is stabilizer}, and the state on the second $n$ qubits \emph{is trivial}.} which simultaneously satisfies NLTS and NLACS. Just as in \cref{thm:NLACS}, we will actually show that low-energy states of this family of Hamiltonians require a linear number of $\T$ gates to prepare.
\begin{restatable}{theorem}{NLTSNLACS}\label{thm:joint-NLTS-NLACS}
    Let $\ham_{NLTS}$ be the NLTS Hamiltonian family from \cite{ABN22} and consider its $D$-rotated version, $\Tilde{\ham}_{NLTS}\equiv D\n\ham_{NLTS}D^{\dagger\otimes n}$. There exists a constant $\epsilon=\Omega(1)$ such that $\Tilde{\ham}_{NLTS}$ is $\epsilon$-NLTS, and any state of energy $\bra\psi\Tilde{\ham}_{NLTS}\ket\psi\leq\epsilon$ requires $\Omega(n)$ Pauli-rotation gates to prepare. In particular, $\Tilde{\ham}_{NLTS}$ is also $\epsilon$-NLACS.
\end{restatable}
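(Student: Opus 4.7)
My plan is to prove the NLTS property and the $\Omega(n)$ Pauli-rotation-gate lower bound separately. The former is essentially immediate; the latter requires lifting the pseudo-stabilizer framework used in the proof of Theorem~\ref{thm:lb-t-T-gates} from the $1$-local Hamiltonian $\dmagic$ to the $k$-local rotated Hamiltonian $\Tilde{\ham}_{NLTS}$.

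For the NLTS half, the key observation is that $D^{\otimes n}$ is a depth-$1$ layer of single-qubit gates, so conjugation by $D^{\otimes n}$ changes the circuit depth of any preparation by at most an additive constant. If $\ket{\psi}$ is constant-depth with $\bra{\psi} \Tilde{\ham}_{NLTS}\ket{\psi} \leq \epsilon$, then $\ket{\phi} \equiv D^{\dagger \otimes n}\ket{\psi}$ is also constant-depth and satisfies $\bra{\phi} \ham_{NLTS}\ket{\phi} = \bra{\psi} \Tilde{\ham}_{NLTS}\ket{\psi} \leq \epsilon$. Choosing $\epsilon$ strictly below the NLTS energy threshold of $\ham_{NLTS}$ produces a contradiction with its NLTS property, so $\Tilde{\ham}_{NLTS}$ is $\epsilon$-NLTS.

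For the $\Omega(n)$ Pauli-rotation-gate half, I would mirror the structure of the proof of Theorem~\ref{thm:lb-t-T-gates}. Decompose $\Tilde{\ham}_{NLTS} = \frac{1}{m}\sum_{i=1}^m \Tilde h_i$ with $\Tilde h_i = D^{\otimes n} h_i D^{\dagger \otimes n}$ a $D$-rotated local Pauli projector. The two facts to establish are (i) an additive per-term energy lower bound: if $\ket{\psi}$ is pseudo-stabilizer with respect to $\Tilde h_i$, then $\bra{\psi} \Tilde h_i\ket{\psi} \geq c_1$ for some fixed $c_1 > 0$; and (ii) a counting bound: if $\ket{\psi}$ is prepared by Clifford gates plus $t$ Pauli-rotation gates, then it is pseudo-stabilizer with respect to at least $m - c_2 t$ of the $\Tilde h_i$. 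Since the NLTS family of \cite{ABN22} has $m = \Theta(n)$ terms, taking $t < cn$ for a small enough $c > 0$ yields $\bra{\psi} \Tilde{\ham}_{NLTS}\ket{\psi} \geq \Omega(1)$, which gives both NLACS and the stronger $\Omega(n)$ bound.

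The main obstacle will be transferring the pseudo-stabilizer arguments from the $1$-local projectors $D\ketbra{-}_i D^\dagger$ of $\dmagic$ to the $k$-local rotated terms $\Tilde h_i$. I expect part (i) to generalize directly, since pseudo-stabilizer is a local property and the rotated-Pauli energy calculation on the $\leq k$-qubit support of each $\Tilde h_i$ is the same kind of computation as in Theorem~\ref{thm:lb-t-T-gates}. The counting bound (ii) is the more delicate point: a Pauli-rotation gate $e^{i\theta P}$ need not be local, but conjugation by such a gate maps each Pauli to a linear combination of at most two Paulis, so the pseudo-stabilizer status of each $\Tilde h_i$ should change by a controlled amount per Pauli-rotation gate, with the constant $c_2$ absorbing the maximum term-locality and per-qubit degree of $\ham_{NLTS}$.
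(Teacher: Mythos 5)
Your high-level skeleton matches the paper's: the NLTS half follows exactly as you say (conjugation by the depth-$1$ layer $D^{\otimes n}$ preserves NLTS), and the $\Omega(n)$ half is indeed assembled from a per-term energy bound for pseudo-stabilizer terms plus a counting bound showing that low $\T$-count forces the pseudo-stabilizer property at $\Omega(n)$ terms. However, both of your sketched justifications have genuine gaps. For ingredient (i), the generalization from $\dmagic$ is \emph{not} direct. For the $1$-local terms of $\dmagic$ it suffices that a single stabilizer acts non-trivially on qubit $i$; for a $k$-local Hadamard-type term $\frac{1}{2}(\eye-\had^{\otimes k})_A$ one must first define pseudo-stabilizer at $A$ as the existence of a subset of $\stab(\ket\psi)$ whose local view at $A$ is a \emph{maximal} ($\dim = \abs{A}$) commuting subgroup, and then prove (via a Gauss--Jordan/parity argument, using that $\abs{A}$ is odd, which requires choosing the \cite{ABN22} Hamiltonian with odd-weight terms) that such a subgroup must contain either an odd-weight all-$Y$ element or an element with an odd number of $X$'s and $Z$'s; only then does the anti-commutation trick of \cref{lem:fidelitybound} apply. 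Your proposal never pins down the definition or this existence argument, and the bound fails without it.

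For ingredient (ii), your proposed mechanism --- tracking how the pseudo-stabilizer status of each term changes per Pauli-rotation gate, with a constant $c_2$ absorbing locality --- is the wrong argument and would not go through: a single non-local rotation $e^{i\theta P}$ can in principle perturb the local view at every term simultaneously, and for generic $\theta$ conjugation does not map Pauli stabilizers to Pauli stabilizers at all, so there is no per-gate, per-term accounting to be done. The correct argument is global: $t$ Pauli rotations imply $\sdim(\ket\psi)\geq n-t$ (\cref{lem:stab-dim-bound}); one then selects $\Omega(n)$ terms with \emph{disjoint} supports $\br{A_i}$ (possible because each qubit meets $O(1)$ terms), and proves the key inequality $\dim\stab(\ket\psi)\leq\sum_i \dim\rho_{A_i}(G_i)$ where each $G_i$ is a maximal locally-commuting subset (\cref{prop:group-size-bound}, itself resting on the canonical-basis/center structure of Pauli subgroups). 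Since each local view contributes at most $\abs{A_i}=k$, and contributes at most $k-1$ whenever the term fails to be pseudo-stabilizer, the lower bound $n-t$ forces all but roughly $t$ of the $\Omega(n)$ disjoint terms to be pseudo-stabilizer. Without this dimension-counting step over a disjoint cover, the counting bound is unsupported.
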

\cref{thm:joint-NLTS-NLACS} is a strengthening of Theorem 12 in \cite{CCN+23} where we  only proved the joint NLTS/NLCS property of $\Tilde{\ham}_{NLTS}$. We note that the results in this work apply more generally to $D$-rotated CSS Hamiltonians in the same way as \cite{CCN+23}, but we will state all results specifically for the NLTS Hamiltonian family (which is CSS).

\vspace{-0.5em}
\subsection{Comparison with prior work}
This subsection is dedicated to highlighting some differences between the key ideas of the current document and prior work, particularly a previous result by the authors \cite{CCN+23}. We do so by giving some intuition for the proofs of \cref{thm:lb-t-T-gates} and \cref{thm:joint-NLTS-NLACS}, and by comparing this with the intuition behind \cite{CCN+23}. The proofs in \cite{CCN+23} that $\dmagic$ and $\Tilde{\ham}_{NLTS}$ satisfy NLCS rely crucially on the local structure of a stabilizer state. In particular, for a $k$-local Hamiltonian term acting on qubits $A\subseteq[n]$, the reduced state of an $n$-qubit stabilizer state on $A$ is a mixture of $k$-qubit stabilizer states. The NLCS property followed directly by proving all $k$-qubit stabilizer states have high energy under the local terms of $\dmagic$ and $\Tilde{\ham}_{NLTS}$.

There are a few reasons why the techniques of \cite{CCN+23} are insufficient to prove \cref{thm:lb-t-T-gates} and \cref{thm:joint-NLTS-NLACS}. First, the results in \cite{CCN+23} rely entirely on the local properties of stabilizer states and are therefore not applicable when a state requires $\T$ gates to prepare.
A key conceptual difference between \cite{CCN+23} and the current work is that now we examine the local structure of a state's \emph{Pauli stabilizer group} instead of the local structure of a state, itself. We ultimately must find a way to relate the number of $\T$ gates preparing a state to \emph{local} properties of its stabilizer group.

To detail further barriers and give intuition for the energy lower bound in the present work, we first recall that an $X$-type term of a CSS Hamiltonian becomes a Hadamard-type term of a $D$-rotated CSS Hamiltonian, $\frac{\eye-(\had^{\otimes k})_A}{2}$, where $A\subseteq[n]$ is the set of qubits acted on non-trivially (see \cref{subsec:prelim-ham}). Since\vspace{-0.5em}
\begin{equation}
    \bra{\psi}\frac{\eye-(\had^{\otimes k})_A}{2}\ket\psi = \frac{1}{2}-\frac{\bra\psi(\had^{\otimes k})_A\ket\psi}{2},
\end{equation}
an upper bound on the overlap of $\ket\psi$ and $(\had^{\otimes k})_A\ket\psi$ yields a lower bound on the energy of $\ket\psi$ at the particular local term. In \cite{CCN+23}, we only considered the case where $\ket\psi$ and consequently $(\had^{\otimes k})_A\ket\psi$ were stabilizer states. The local energy lower bound was then the result of an upper bound on the fidelity of distinct stabilizer states: for two stabilizer states with $\abs{\braket{\psi}{\varphi}}\neq 1$, it must be true that $\abs{\braket{\psi}{\varphi}}\leq \frac{1}{\sqrt{2}}$. 
A second, more technical, barrier to extending the results of \cite{CCN+23} is that this bound does not apply to states which require $\T$ gates to prepare. To circumvent this, we use a more general bound for arbitrary states: if $\ket{\psi}$ and $\ket{\phi}$ are stabilized by anti-commuting Pauli operators then $\abs{\braket{\psi}{\varphi}}\leq \frac{1}{\sqrt{2}}$ (see \cref{subsec:states}).

In order to obtain a local energy bound from this, we must show that for an almost-Clifford state $\ket\psi$, $\ket\psi$ and $(\had^{\otimes k})_A\ket\psi$ have anti-commuting Pauli stabilizers. This will not happen in general, e.g., if the stabilizer group of $\ket\psi$ acts trivially on qubits in $A$. To ensure that $\ket\psi$ and $(\had^{\otimes k})_A\ket\psi$ are stabilized by anti-commuting Paulis we introduce the notion of a ``pseudo-stabilizer state at $A$'' in \cref{sec:simultaneous}. Intuitively, $\ket\psi$ is pseudo-stabilizer at $A\subseteq[n]$ if its Pauli stabilizer group ``looks like'' a maximally-sized stabilizer group at $A$; we prove in \cref{subsec:local-bound} that this condition is sufficient for the desired anti-commutation, and hence the local energy bound.

The requirement that $\ket\psi$ be pseudo-stabilizer at the qubits where a local Hamiltonian term acts gives a new barrier that was not present in \cite{CCN+23}: this condition will likely not be satisfied at every local term of $\dmagic$ and $\Tilde{\ham}_{NLTS}$. A main technical difficulty of the present work is showing that when $\ket\psi$ is prepared by not too many $\T$-gates it \emph{must} be pseudo-stabilizer at many local terms, and indeed we prove this in \cref{subsec:many-pseudo-stabilizer}. One motivation that this may be true comes from a relationship between the number of $\T$-gates in a circuit and the size of the output state's stabilizer group, which has been utilized by several recent works in the context of learning low $\T$-count states \cite{BCH+20,GIK+23a,GIK+23b,GIK+23c,HG23,JW23,LSH22,LSH23}. In particular, an upper-bound on $\T$-count implies a lower bound on the size of the stabilizer group. In \cref{subsec:many-pseudo-stabilizer} we exploit the structure of the Pauli group to \emph{upper-bound} the size of a stabilizer group in terms of so-called ``locally-commuting subgroups'', which, in turn, allows us to relate $\T$-count to the pseudo-stabilizer condition we define.

Ultimately, the proofs of \cref{thm:lb-t-T-gates} and \cref{thm:joint-NLTS-NLACS} are each the combination of two main ingredients:
\begin{enumerate}[leftmargin=*]\vspace{-0.5em}
    \item For a given Hamiltonian term, there is a local condition on the Pauli stabilizer group of $\ket\psi$--- the pseudo-stabilizer state property--- which implies an energy lower bound for that term.
    \item If $\ket\psi$ can be prepared by at most $cn$ $\T$ gates for some $c\in(0,1)$, then the stabilizer group of $\ket\psi$ satisfies this condition for linearly many Hamiltonian terms.
\end{enumerate}
For $\dmagic$, the local energy bound is rather straightforward and we do not need the ``pseudo-stabilizer state'' terminology: a local energy bound follows if a single stabilizer acts non-trivially on qubit $i$. We deal with the more general case of $\Tilde{\ham}_{NLTS}$ in \cref{sec:simultaneous}.

\vspace{-0.5em}
\subsection{Future directions}

\begin{enumerate}[label=(\arabic*), leftmargin=*]
\item A natural question to ask following our construction of NLACS Hamiltonians is whether or not the NLSS Conjecture is true. We suggested in \cite{CCN+23} to consider rotating $\ham_0 = \frac{1}{n}\sum\ketbra{1}_i$ by a Haar-random low-depth circuit. The ground-state of the $C\ham_0 C^\dagger$ is not sampleable unless $\P=\sharpP$ \cite{BFC+18, Mov20}. We hope that the same is true for states of low-enough constant energy, but new techniques would be necessary to show this. If true, $C\ham_0 C^\dagger$ would be a family of NLSS Hamiltonians unless $\P=\sharpP$.

\item It is important to note that the technique of rotating Hamiltonians by a constant-depth circuit, while potentially useful for establishing NLSS, seemingly cannot provide certain other prerequisites of the QPCP Conjecture.  Consider ``locally-approximable'' states, i.e. states whose local density matrices can be computed in $\NP$ \cite{CCN+23,WFC23}. Hamiltonians without low-energy locally-approximable states (NLLS) are guaranteed by the Quantum PCP Conjecture + $\QMA\neq\NP$, so it is natural to conjecture that NLLS Hamiltonians exist.
Rotating a Hamiltonian by a constant-depth circuit preserves the NLLS property via a lightcone argument \cite{CCN+23}, thus ruling out the use of rotated CSS Hamiltonians in solving the NLLS conjecture (as stabilizer states are locally-approximable).

\item Apart from Hamiltonian implications, we still know little about the computational hardness of the constant-gap LH problem. Natural complexity-theoretic prerequisites of the Quantum PCP Conjecture are, for instance, the $\BQP$, $\MA$, or $\QCMA$ hardness of constant-gap LH. Prior works have considered non-standard $\MA$-complete and $\QCMA$-complete problems related to LH, and they showed that the corresponding constant-gap versions of these problems are, in fact, in $\NP$ \cite{AG19,AG21,WFC23}. Perhaps a similar non-standard LH problem is $\QMA$-complete for inverse polynomial gap, but in $\NP,\MA,$ or $\QCMA$ for constant-gap.
\end{enumerate}

\subsection{Outline}
In \cref{sec:prelim} we define the notions used throughout the document and prove two key lemmas. In \cref{sec:main-results} (resp. \cref{sec:simultaneous}) we prove \cref{thm:lb-t-T-gates} (resp. \cref{thm:joint-NLTS-NLACS}) using the ingredients mentioned in the introduction. 
Lastly, in \cref{app:symplectic} we (1) give an introduction to the symplectic vector space formalism of the Pauli group, (2) give some proofs of well-known structural results on the Pauli group, and (3) reprove the results of \cref{subsec:many-pseudo-stabilizer} in the language of symplectic vector spaces.

\vspace{-0.7em}
\section{Preliminaries}\label{sec:prelim}
Throughout, $\log$ will denote the base-2 logarithm. For a natural number, $n$, we denote $[n]\equiv\{1,\dots,n\}$.

The single-qubit \textbf{Pauli group} is the set $\mcP_1\equiv\{i^\ell P\mid P\in\{\eye,X,Y,Z\}, \ell\in\{0,1,2,3\}\}$, and the $n$-qubit Pauli group is its $n$-fold tensor-power, $\mcP_n=\bigotimes_{i\in[n]}\mcP_1$. For a Pauli operator $P\in\mcP_n$, let $P_i$ denote the $i$-th operator in the tensor product. For two Pauli operators, $P,Q\in\mcP_n$, their \textbf{commutator}, $\llbracket P,Q\rrbracket$, is defined as $+1$ if $P$ and $Q$ commute in $\mcP_n$, and $-1$ if they anti-commute.

The \textbf{phase-less $n$-qubit Pauli group}, $\hat\mcP_n$, is defined as the Pauli group modulo phases, i.e. $\hat\mcP_n\equiv\mcP_n / \langle i\eye\rangle$. Note that by definition every pair of operators commute in $\hat\mcP_n$, but we will find it useful to redefine the notion of commutation in this case. 
We say that two phase-less Pauli operators, $P, Q\in\hat\mcP_n$, \textbf{commute} if $\llbracket P,Q\rrbracket=+1$, and \textbf{anti-commute} otherwise. That is, phase-less Pauli operators are said to commute if they commute as operators in $\mcP_n$. The value of $\llbracket P,Q\rrbracket$ is independent of the choice of representative in $\mcP_n$, so commutation in the phase-less Pauli group is well-defined. A \textbf{commuting subgroup} of the phase-less Pauli group, $M\leq \hat\mcP_n$, is a subgroup for which every pair of operators commute.

A \textbf{stabilizer group} is an Abelian subgroup of the Pauli group, $G\leq\mcP_n$ not containing $-\eye$. Consider the natural projection of $G$ onto the phase-less Pauli group, $\pi:G\rightarrow\hat\mcP_n$, given by $\pi(g)\equiv g   \langle i\eye\rangle$ for $g\in G$. We call $\hat G\equiv \pi(G)$ the \textbf{phase-less version} of $G$. $\hat G$ is a commuting subgroup of $\hat\mcP_n$ as all elements in $G$ commute as elements of $\mcP_n$. Further, since $-\eye\notin G$ (implying $\pm i\eye\notin G$) the kernel of $\pi$ is identity, and so $G\cong \hat G$, as groups.

For any Abelian subgroup, $G\leq \mcP_n$, we can always find a list of commuting mutually independent generators, $\mcS=\{S^{(1)},\dots,S^{(k)}\}$, of $G$. The \textbf{dimension} of $G$ is defined to be the size of any such generating set, $\dim(G) = \log \abs{G}$. The dimension of a subgroup of $\hat\mcP_n$ is defined the same way. Note that for a stabilizer group, by the previous paragraph we have that $\dim G = \dim \hat G$. We will use this fact later in \cref{clm:condition} when we upper-bound $\dim G$.

The following well-known fact\footnote{For the interested reader, a self-contained proof of \cref{clm:max-Abelian} can be found in \cref{app:symplectic}.} gives an upper bound on the dimension of commuting subgroups.
\begin{restatable}{fact}{maxAbeliansize}\label{clm:max-Abelian}
    Every commuting subgroup of $\hat\mcP_\ell$ is generated by at most $\ell$ independent Pauli operators. That is, if $M$ is a commuting subgroup of $\hat\mcP_\ell$, then $\dim M\leq\ell$.
\end{restatable}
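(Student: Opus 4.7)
The plan is to translate the statement about commuting subgroups of $\hat\mcP_\ell$ into a statement about isotropic subspaces of a non-degenerate symplectic vector space over $\FF_2$, and then invoke a standard linear-algebra bound. Concretely, I would first establish the well-known group isomorphism
\[
\phi:\hat\mcP_\ell \;\xrightarrow{\sim}\; \FF_2^{2\ell},\qquad \phi(X^a Z^b) = (a,b),
\]
for $a,b\in\FF_2^\ell$. This map is well-defined on the phase-less group because two representatives of the same class in $\hat\mcP_\ell$ differ only by a phase in $\langle i\eye\rangle$, and it is a bijection because $\hat\mcP_\ell$ is an elementary abelian $2$-group of order $4^\ell$ (every element of $\hat\mcP_\ell$ squares to the identity). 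Under $\phi$, the commutator in $\hat\mcP_\ell$ corresponds to the standard symplectic form on $\FF_2^{2\ell}$:
\[
\llbracket P,Q\rrbracket = (-1)^{\symp{\phi(P)}{\phi(Q)}},\qquad \symp{(a,b)}{(a',b')} = a\cdot b' + b\cdot a' \pmod 2,
\]
as follows by a direct calculation using the single-qubit anti-commutation $XZ=-ZX$. The form $\omega$ is non-degenerate: if $\symp{v}{w}=0$ for every $w\in\FF_2^{2\ell}$, then $v=0$.

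Next I would observe that under $\phi$, a commuting subgroup $M\leq\hat\mcP_\ell$ corresponds to an $\FF_2$-linear subspace $V=\phi(M)\subseteq \FF_2^{2\ell}$ which is \emph{isotropic}, meaning $\symp{v}{w}=0$ for all $v,w\in V$. Moreover, the group dimension $\dim M$ (the size of a smallest independent generating set, equivalently $\log_2\abs{M}$) coincides with the vector-space dimension $\dim_{\FF_2} V$, since every non-identity element of $M$ has order $2$ and hence generating the group as an abelian group is the same as spanning $V$ over $\FF_2$. So the fact $\dim M \leq \ell$ reduces to the purely linear-algebraic claim that every isotropic subspace of $(\FF_2^{2\ell},\omega)$ has dimension at most $\ell$.

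To finish, I would invoke the standard fact from symplectic linear algebra: for any subspace $V\subseteq\FF_2^{2\ell}$ the symplectic complement $V^{\perp_\omega}\equiv\{u\in\FF_2^{2\ell} : \symp{u}{v}=0 \text{ for all } v\in V\}$ satisfies $\dim V + \dim V^{\perp_\omega} = 2\ell$. This identity follows from rank-nullity applied to the linear map $u\mapsto\symp{u}{\cdot}|_V$, whose kernel is $V^{\perp_\omega}$ and whose image has dimension $\dim V$ by non-degeneracy of $\omega$. If $V$ is isotropic then by definition $V\subseteq V^{\perp_\omega}$, and hence $2\dim V \leq \dim V + \dim V^{\perp_\omega} = 2\ell$, giving $\dim V\leq \ell$ as required.

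The main obstacle is really only the bookkeeping in the first step --- carefully setting up $\phi$ on the phase-less group and verifying that the commutator translates to $\omega$ on the nose, including the treatment of arbitrary Pauli phases modulo $\langle i\eye\rangle$. Once that identification is in hand, the dimension bound is an immediate consequence of the symplectic-complement identity, and since \cref{app:symplectic} develops this formalism from scratch, the argument here becomes a short corollary of the setup there.
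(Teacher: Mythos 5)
Your proof is correct and follows essentially the same route as the paper: both translate a commuting subgroup of $\hat\mcP_\ell$ into an isotropic subspace of the non-degenerate binary symplectic space $(\FF_2^{2\ell},\omega)$ (the paper's \cref{fact:Abelian-to-isotropic}) and then bound its dimension by $\ell$. The only cosmetic difference is the last step: you obtain the bound directly from $V\subseteq V^{\perp}$ together with $\dim V+\dim V^{\perp}=2\ell$ via rank--nullity, whereas the paper reaches the same inequality (\cref{cor:max-iso-dimension}) by first showing every isotropic subspace is contained in a Lagrangian subspace of dimension exactly $\ell$, machinery it develops because it is reused for the degenerate case elsewhere in \cref{app:symplectic}.
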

\begin{corollary}
    Every stabilizer group on $\ell$ qubits is generated by at most $\ell$ independent operators.
\end{corollary}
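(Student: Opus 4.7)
The plan is to reduce the statement to the standard linear-algebraic fact that an isotropic subspace of a $2\ell$-dimensional symplectic vector space has dimension at most $\ell$. I would introduce the binary symplectic representation of the phase-less Pauli group: each $P\in\hat\mcP_\ell$ has a unique representative of the form $X^a Z^b$ with $a,b\in\FF_2^\ell$, and the map $\varphi:\hat\mcP_\ell\to\FF_2^{2\ell}$ sending $X^a Z^b\mapsto(a,b)$ is a group isomorphism. A direct computation on single-qubit generators verifies that $\llbracket P,Q\rrbracket=(-1)^{\omega(\varphi(P),\varphi(Q))}$, where $\omega((a,b),(a',b'))=a\cdot b'+a'\cdot b\pmod 2$ is the standard symplectic form on $\FF_2^{2\ell}$. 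Under $\varphi$, a commuting subgroup $M\leq\hat\mcP_\ell$ of dimension $k$ corresponds to a $k$-dimensional isotropic subspace $W\subseteq\FF_2^{2\ell}$, i.e., a subspace on which $\omega$ vanishes identically, since group-theoretic independence of Pauli generators translates to $\FF_2$-linear independence of their images.

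It then suffices to show $\dim W\leq\ell$ for any isotropic $W\subseteq\FF_2^{2\ell}$. Define the symplectic complement $W^\perp\equiv\{v\in\FF_2^{2\ell}:\omega(v,w)=0\text{ for all }w\in W\}$. Non-degeneracy of $\omega$ (its Gram matrix is the invertible block matrix with identity blocks on the anti-diagonal) implies that the linear map $\FF_2^{2\ell}\to W^*$ given by $v\mapsto\omega(v,\cdot)|_W$ is surjective, so rank-nullity yields $\dim W^\perp=2\ell-\dim W$. Isotropy of $W$ is exactly the condition $W\subseteq W^\perp$, which gives $\dim W\leq 2\ell-\dim W$, hence $\dim W\leq\ell$, and therefore $\dim M\leq\ell$.

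The only non-trivial obstacle is verifying the commutator-to-symplectic-form identity $\llbracket P,Q\rrbracket=(-1)^{\omega(\varphi(P),\varphi(Q))}$, which is routine but requires some sign bookkeeping from swapping $X$'s past $Z$'s on each tensor factor, together with confirming that the value is independent of the choice of phase representative (already observed in the excerpt). The paper's Appendix A appears to develop exactly this symplectic machinery, so the proof in practice reduces to a short appeal to that framework combined with the one-line rank-nullity argument above. An alternative, more self-contained approach would be a greedy extension argument: start with any nontrivial $P_1\in M$, and at step $i$ show that the centralizer of $\{P_1,\dots,P_i\}$ inside $\hat\mcP_\ell$ drops in dimension by at least one whenever we extend by a new independent element; this bookkeeping is essentially equivalent to the symplectic proof, so I would prefer the cleaner symplectic version.
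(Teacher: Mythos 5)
Your proposal is correct and follows essentially the same route as the paper: the paper derives this corollary from Fact~\ref{clm:max-Abelian} (using that a stabilizer group $G$ is isomorphic to its phase-less version $\hat G$), and proves that fact in Appendix~\ref{app:symplectic} by exactly the symplectic-representation argument you describe, mapping commuting subgroups to isotropic subspaces of $(\FF_2^{2\ell},\omega)$ and bounding their dimension via $\dim W+\dim W^\perp=2\ell$. The only cosmetic difference is that you apply rank-nullity directly to $W\subseteq W^\perp$, whereas the paper first embeds the isotropic subspace in a Lagrangian one; both rest on the same dimension identity.
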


The $n$-qubit \textbf{Clifford group}, $\mcC_n$, is the set of unitary operators which stabilize the Pauli group. , i.e.
\begin{equation*}
    \mcC_n\equiv \left\{U\in\unitary(2^n)\mid \forall P\in\mcP_n\;\;\exists P'\in\mcP_n \text{ s.t. } UPU^\dagger = P'\right\}.
\end{equation*}
$\mcC_n$ is generated by the Hadamard, phase, and controlled-NOT gates, $\{\had,\phase,\CNOT\}$, and $\abs{\mcC_n}=\poly(n)$.

\subsection{States}\label{subsec:states}
It is well-known that adding any non-Clifford gate to $\{\had,\phase,\CNOT\}$ yields a universal gate set. Here, we will consider adding the set of \textbf{arbitrary Pauli-rotation gates},\footnote{Excluding $\phase$, which is equivalent to $e^{-i\frac{\pi}{4}Z}$ up to a global phase.} i.e. gates of the form $e^{i\theta P}$ for $\theta\in[0,2\pi)$ and $P\in\mcP_n$. Note that $\T=e^{-i\frac{\pi}{8}Z}$ is included in this set.

A \textbf{quantum circuit} is any product of Clifford and Pauli-rotation gates. The depth of a quantum circuit is the minimum number of layers of gates in the circuit, and a circuit is said to be \textbf{constant-depth} if its depth is independent of $n$. A \textbf{Clifford circuit} is a quantum circuit consisting of only Clifford gates. Since $\abs{\mcC_n}=\poly(n)$, every $C\in\mcC_n$ can be written using a polynomial number of gates from $\{\had,\phase,\CNOT\}$. An \textbf{almost-Clifford circuit} is a quantum circuit consisting of Clifford gates and $O(\log n)$ Pauli-rotation gates.

Let $\ket\psi$ be a pure state on $n$ qubits. If there is a quantum circuit, $C$, such that $C\ket{0}\n$ then we say that $C$ \textbf{prepares} $\ket\psi$. $\ket\psi$ is a \textbf{trivial state} if there is a constant-depth quantum circuit preparing it. $\ket\psi$ is a \textbf{Clifford state} if there is a polynomial-sized Clifford circuit preparing it, and an \textbf{almost-Clifford state} if there is a polynomial-sized almost-Clifford circuit preparing it.

The \textbf{stabilizer group} of a pure state, $\ket{\psi}$, is the subgroup of the Pauli group defined by $\stab(\ket\psi)\equiv \left\{ P\in\mcP_n\mid P\ket{\psi} = \ket{\psi}\right\}$.
We say that a $P\in\stab(\ket{\psi})$ \textbf{stabilizes} $\ket{\psi}$. Note that $\stab(\ket{\psi})$ is an Abelian subgroup of the Pauli group not containing $-\eye$, and so it is a valid stabilizer group as before. The \textbf{stabilizer dimension} of $\ket\psi$ is defined to be the dimension of its stabilizer group, $\sdim(\ket\psi)\equiv\dim(\stab(\ket\psi))=\log\abs{\stab(\ket\psi)}$.
$\ket{\psi}$ is said to be a \textbf{stabilizer state} if $\sdim(\ket\psi)=n$. Stabilizer states and Clifford states are equivalent.

As mentioned in the introduction, our local energy lower bounds will require the following fact which we prove in \cref{app:fidelity-proof}.
\begin{restatable}{lemma}{fidelitybound}
    \label{lem:fidelitybound}
    Consider states $\ket{\psi}$ and $\ket{\phi}$ and suppose that their stabilizer groups contain anti-commuting Pauli operators. Then their overlap can be upper bounded as $\left| \braket{\psi}{\varphi} \right| \leq \frac{1}{\sqrt{2}}$.
\end{restatable}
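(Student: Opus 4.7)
The plan is to reduce the desired bound to Bessel's inequality applied to a pair of orthonormal vectors built from $\ket\psi$. Let $P \in \stab(\ket\psi)$ and $Q \in \stab(\ket\phi)$ be the promised anti-commuting Pauli operators, so that $P\ket\psi = \ket\psi$, $Q\ket\phi = \ket\phi$, and $PQ = -QP$.

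The first step is to observe that $\ket\psi$ and $Q\ket\psi$ form an orthonormal pair. Unitarity of $Q$ makes $Q\ket\psi$ a unit vector, while anti-commutation gives $P(Q\ket\psi) = -QP\ket\psi = -Q\ket\psi$, placing $Q\ket\psi$ in the $(-1)$-eigenspace of the Hermitian operator $P$, whereas $\ket\psi$ lies in its $(+1)$-eigenspace. Since eigenvectors of a Hermitian operator with distinct eigenvalues are orthogonal, $\braket{\psi}{Q\psi} = 0$.

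Next, Bessel's inequality applied to the unit vector $\ket\phi$ and the orthonormal pair $\{\ket\psi, Q\ket\psi\}$ yields
\[
    |\braket{\psi}{\phi}|^2 + |\braket{Q\psi}{\phi}|^2 \leq 1.
\]
To close the argument, I use the stabilizer condition on $\ket\phi$: because $Q$ is Hermitian and $Q\ket\phi = \ket\phi$, we also have $\bra\phi Q = \bra\phi$, and hence $|\braket{Q\psi}{\phi}| = |\bra\psi Q \ket\phi| = |\braket{\psi}{\phi}|$. Substituting gives $2|\braket{\psi}{\phi}|^2 \leq 1$, i.e.\ $|\braket{\psi}{\phi}| \leq 1/\sqrt{2}$.

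There is no real obstacle once the orthogonal pair $\{\ket\psi, Q\ket\psi\}$ is identified; the only care required is to exploit that Pauli operators are involutive Hermitian unitaries and that anti-commutation sends the $(+1)$-eigenspace of $P$ into the $(-1)$-eigenspace. A more direct route via $\braket{\psi}{\phi} = \bra\psi PQ \ket\phi = -\bra\psi QP \ket\phi$ does not by itself close the bound, which is why it is cleaner to package the argument through Bessel's inequality.
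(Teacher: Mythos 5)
Your proof is correct and takes a genuinely different route from the paper's. The paper argues at the operator level: it writes $\abs{\braket{\psi}{\phi}}^2=\text{Tr}[\ketbra{\psi}\ketbra{\phi}]$, absorbs the projectors $\tfrac{\eye+g_1}{2}$ and $\tfrac{\eye+g_2}{2}$ that fix the two states, and applies H\"older's inequality together with the identity $\tfrac{\eye+g_2}{2}\tfrac{\eye+g_1}{2}\tfrac{\eye+g_2}{2}=\tfrac{\eye+g_2}{4}$ (which is where anti-commutation enters) to bound the relevant operator norm by $1/\sqrt{2}$, yielding $\abs{\braket{\psi}{\phi}}^2\le\abs{\braket{\psi}{\phi}}/\sqrt{2}$. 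You argue at the vector level: anti-commutation sends the $+1$-eigenvector $\ket{\psi}$ of $P$ to the $-1$-eigenvector $Q\ket{\psi}$, so $\br{\ket{\psi},Q\ket{\psi}}$ is orthonormal, and Bessel's inequality combined with $Q\ket{\phi}=\ket{\phi}$ gives $2\abs{\braket{\psi}{\phi}}^2\le 1$ directly. The two arguments are two faces of the same computation --- the paper's operator identity is precisely the statement that the $+1$-eigenprojector of $g_1$ bisects the range of that of $g_2$ --- but yours avoids trace norms and H\"older entirely, does not need to divide through by $\abs{\braket{\psi}{\phi}}$ at the end (so the degenerate case $\braket{\psi}{\phi}=0$ needs no separate remark), and makes the geometry transparent. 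The one hypothesis you rely on, that $P$ and $Q$ are Hermitian involutions, is indeed guaranteed for elements of a stabilizer group (any $g$ with $g\ket{\psi}=\ket{\psi}$ in a group excluding $-\eye$ satisfies $g^2=\eye$ and equals $\pm$ a Hermitian Pauli), and you correctly flag this as the only point requiring care.
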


The following slight extension of Lemma 4.2 from \cite{GIK+23c} gives a relation between ``non-Clifford-ness'' of a state and its stabilizer dimension.
\begin{restatable}{lemma}{stabdimbound}
\label{lem:stab-dim-bound}
If $\ket{\psi}$ can be prepared by a quantum circuit with at most $t$ arbitrary Pauli-rotation gates, then the stabilizer dimension of $\ket\psi$ is at least $n-t$.
\end{restatable}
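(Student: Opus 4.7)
The plan is to induct on the number of Pauli-rotation gates in the circuit preparing $\ket\psi$, tracking how the stabilizer dimension evolves as we apply gates one at a time to the initial state $\ket{0}^{\otimes n}$.

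The base case is clear: $\ket{0}^{\otimes n}$ is stabilized by $\langle Z_1, \ldots, Z_n\rangle$, so its stabilizer dimension equals $n$. For the inductive step I would prove two facts separately. First, Clifford gates preserve the stabilizer dimension: if $C\in\mcC_n$, the map $g\mapsto CgC^\dagger$ is a bijection from $\stab(\ket\phi)$ to $\stab(C\ket\phi)$, since conjugation by a Clifford sends Paulis to Paulis and $(CgC^\dagger)(C\ket\phi)=Cg\ket\phi = C\ket\phi$. Second, a single Pauli-rotation gate $e^{i\theta P}$ decreases the stabilizer dimension by at most one. Combining, after $t$ Pauli-rotation gates (and arbitrarily many Cliffords) we still have $\sdim(\ket\psi)\geq n-t$.

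The key lemma is the per-rotation-gate bound. Let $G=\stab(\ket\phi)$ and define $H=\{g\in G : gP=Pg\}$, the centralizer of $P$ inside $G$. The map $g\mapsto \llbracket g,P\rrbracket$ is a homomorphism from $G$ into $\{\pm 1\}$, whose kernel is exactly $H$; hence $H$ has index $1$ or $2$ in $G$, giving $\dim H\geq \dim G - 1$. For any $h\in H$, $h$ commutes with $P$ and therefore with $e^{i\theta P}$, so
\begin{equation*}
    h\bigl(e^{i\theta P}\ket\phi\bigr) = e^{i\theta P}h\ket\phi = e^{i\theta P}\ket\phi,
\end{equation*}
showing that $H\leq \stab(e^{i\theta P}\ket\phi)$. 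Thus $\sdim(e^{i\theta P}\ket\phi)\geq \dim H\geq \sdim(\ket\phi)-1$, as required.

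I do not anticipate a serious obstacle: the argument is purely group-theoretic, and the only subtle point is that even though $e^{i\theta P}$ is generally not a Clifford gate (so the full group $G$ need not continue to stabilize the rotated state), the commuting subgroup $H$ survives, and the index-$2$ bound is exactly what pins the per-gate loss at one. The mild extension beyond Lemma~4.2 of \cite{GIK+23c} is precisely that $\theta$ can be arbitrary and $P$ can be an arbitrary element of $\mcP_n$ rather than only a single-qubit $Z$; neither generalization affects the centralizer argument, since we only ever use that $P$ is Pauli and that the scalar factor $e^{i\theta P}$ commutes with every operator that commutes with $P$.
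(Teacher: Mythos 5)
Your proof is correct, but it takes a different route from the paper's. The paper's proof is a reduction: it invokes Lemma~4.2 of \cite{GIK+23c} (which handles single-qubit gates diagonal in the $Z$ basis) and observes that any $e^{i\theta P}$ can be conjugated by a Clifford $C$ into $C^\dagger e^{i\theta Z_1} C$, so the general statement follows from the cited one. You instead give a fully self-contained group-theoretic argument: induct over the gates, note that Cliffords preserve stabilizer dimension via the conjugation bijection $g\mapsto CgC^\dagger$, and show that a single rotation $e^{i\theta P}$ loses at most one dimension because the centralizer $H=\{g\in G:\llbracket g,P\rrbracket=+1\}$ is the kernel of the homomorphism $g\mapsto\llbracket g,P\rrbracket$ into $\{\pm1\}$, hence has index at most $2$ in $G$, and every element of $H$ commutes with $e^{i\theta P}=\cos\theta\,\eye+i\sin\theta\,P$ and therefore continues to stabilize the rotated state. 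All steps check out: the commutator map is indeed multiplicative on Paulis, $H$ survives as a subgroup of $\stab(e^{i\theta P}\ket\phi)$, and the base case $\sdim(\ket{0}^{\otimes n})=n$ is right. What your version buys is independence from the external reference and immediate generality in both $\theta$ and $P$; what the paper's version buys is brevity, at the cost of leaning on \cite{GIK+23c} for exactly the centralizer argument you spell out. Your only cosmetic slip is calling $e^{i\theta P}$ a ``scalar factor''---it is an operator, and the relevant fact is that any operator commuting with $P$ commutes with every polynomial (hence the exponential) in $P$.
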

\begin{proof}[Proof outline]
    The proof of Lemma 4.2 in \cite{GIK+23c} shows the statement 
    is true for arbitrary single-qubit gates which are diagonal in the $Z$ basis. For any $P\in\mcP_n$ there is a Clifford circuit $C\in\mcC_n$ such that $CPC^\dagger = Z_1$, Pauli $Z$ on the first qubit. For each $e^{i\theta_jP^{(j)}}$ in the preparing circuit of $\ket\psi$ choose the $C^{(j)}\in\mcC_n$ which maps $P^{(j)}$ to $Z_1$, and replace $e^{i\theta_jP^{(j)}}$ with $C^\dagger Ce^{i\theta_jP^{(j)}} C^\dagger C = C^\dagger e^{i\theta_j Z_1} C$.
\end{proof}
\vspace{-1em}


\subsection{Hamiltonians}\label{subsec:prelim-ham}
A (normalized) \textbf{$k$-local Hamiltonian}, $\Hn$, is a Hermitian operator on the space of $n$ qubits, $({\CC^2})\n$, which can be written as a sum $\Hn =\frac{1}{m}\sum_{i=1}^m h_i$, where each $h_i$ is a Hermitian matrix acting non-trivially on only $k$ qubits and which has spectral norm $0\leq\norm{h_i}\leq 1$. We will also assume for the sake of convenience that each qubit is acted on by at most $k$ terms of the $\Hn$, which will be true of all Hamiltonians we consider. A \textbf{family of $k$-local Hamiltonians}, $\{\Hn\}$, is a countable set of $k$-local Hamiltonians indexed by system size, $n$, where $k=\mcO(1)$ and $m=\poly(n)$. Throughout, we will simply use the term ``local Hamiltonian'' and represent $\{\Hn\}$ as $\ham$; it should be understood that we are always referring to a family of $k$-local Hamiltonians and studying the asymptotic properties of this family.

The \textbf{ground-state energy} of $\ham$ is defined as $E_0\equiv\min_{\ket\psi} \bra\psi\ham\ket\psi$, where the minimization is taken over all $n$-qubit states. Note that $E_0=\lambda_{\min}(\ham)$, the minimum eigenvalue of $\ham$. Any state achieving this value is said to be a \textbf{ground state} of $\ham$. A state, $\ket\psi$, is an \textbf{$\epsilon$-low-energy state} of $\ham$ if $\bra\psi\ham\ket\psi \leq E_0 + \epsilon$. The Hamiltonians we consider will always be \textbf{frustration-free}, i.e. $E_0=0$.

Given a stabilizer group, $G$, with generating set $\mcS$ such that $\abs{\mcS}=\Theta(n)$, the \textbf{stabilizer Hamiltonian} associated to $\mcS$ is $\ham_\mcS \equiv \frac{1}{\abs{\mcS}}\sum_{S\in\mcS} \frac{1}{2}(\eye-S)$.
If each $S\in\mcS$ acts non-trivially on at most $k=O(1)$ qubits, and each qubit is acted on by at most $O(1)$ elements of $\mcS$, then $\ham_\mcS$ is a local Hamiltonian. If each $S\in\mcS$ consists of only $X$ and $Z$ type Pauli operators, then $\ham_\mcS$ is a \textbf{CSS Hamiltonian}.

For $D\equiv\destab$ and a CSS Hamiltonian, $\ham_\mcS$, consider the $D$-rotated Hamiltonian, $\Tilde{\ham}_\mcS\equiv D\n\ham_\mcS D^{\dagger\otimes n}$. The $X$ and $Z$ type terms of $\ham_\mcS$ yield the following terms of $\Tilde{\ham}_\mcS$:
\begin{align*}
    D\n\left(\frac{\eye-(X^{\otimes k})_A}{2}\right)D^{\dagger\otimes n} &= \frac{\eye-(\had^{\otimes k})_A}{2}, \\
    D\n\left(\frac{\eye-(Z^{\otimes k})_A}{2}\right)D^{\dagger\otimes n} &= \frac{\eye-((-X\had X)^{\otimes k})_A}{2}.
\end{align*}
See Claim 6 of \cite{CCN+23} for a proof. We refer to the former as a \textbf{Hadamard-type} term of $\Tilde{\ham}_\mcS$.

\vspace{-0.7em}
\section{Magic-State Hamiltonian}\label{sec:main-results}
Recall the magic-state Hamiltonian $\dmagic \equiv \frac{1}{n}\sum_{i=1}^n D\ketbra{-}_i D^\dagger$. Since $\ketbra{-}=\frac{1}{2}(\eye-X)$, we have
\begin{equation*}
    \dmagic = \frac{1}{n}\sum_{i=1}^n \frac{\eye-\had_i}{2},
\end{equation*}
where $\had_i$ is the Hadamard matrix on the $i$-th qubit.

We prove a conjecture from \cite{CCN+23}, which we were originally only able to show when $t\leq 1$:

\energyLB*

As mentioned in the introduction, the proof of \cref{thm:lb-t-T-gates} relies on two main ingredients:
\begin{enumerate}[leftmargin=*]\vspace{-0.5em}
    \item For a given Hamiltonian term, there is a condition on the stabilizer group of $\ket\psi$ which implies a local energy lower bound. In particular, if a single $P\in\stab(\ket\psi)$ acts non-trivially on the $i$-th qubit, then $\ket\psi$ has constant energy under the $i$-th Hamiltonian term. The proof of this uses \cref{lem:fidelitybound} (\cref{subsec:states}).
    \item If $\ket\psi$ can be prepared by at most $t$ Pauli-rotation gates, then the stabilizer group of $\ket\psi$ satisfies this condition for at least $n-t$ qubits. This follows almost directly from \cref{lem:stab-dim-bound} (\cref{subsec:states}).
\end{enumerate}
We now prove these two statements.

\begin{restatable}{lemma}{localboundsimple}\label{clm:dmagic-1}
    If there exists a Pauli $P\in\stab(\ket\psi)$ which acts non-trivially on qubit $i$, then $$\bra\psi\frac{\eye-\had_i}{2}\ket\psi\geq \sin^2(\pi/8).$$
\end{restatable}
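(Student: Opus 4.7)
The plan is to upper-bound the real overlap $\braket{\psi}{\phi}$, where $\ket{\phi} := \had_i\ket{\psi}$, via the identity
\[
\bra{\psi}\frac{\eye - \had_i}{2}\ket{\psi} = \frac{1 - \braket{\psi}{\phi}}{2}.
\]
Since $\sin^2(\pi/8) = (1 - 1/\sqrt{2})/2$, it suffices to prove $\braket{\psi}{\phi} \leq 1/\sqrt{2}$, and we will in fact establish the stronger bound $|\braket{\psi}{\phi}| \leq 1/\sqrt{2}$ via \cref{lem:fidelitybound}. The entry point is the observation that $P \in \stab(\ket{\psi})$ implies $\had_i P \had_i \in \stab(\ket{\phi})$; the proof strategy is then to analyze the commutation relation between $P$ and $\had_i P \had_i$ via the single-qubit identities $\had X \had = Z$, $\had Z \had = X$, and $\had Y \had = -Y$, casing on the non-trivial operator $P_i \in \{X,Y,Z\}$.

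If $P_i \in \{X,Z\}$, then $\had_i P \had_i$ differs from $P$ only by swapping $X \leftrightarrow Z$ on qubit $i$, so the two Paulis agree on every other qubit. The global commutator is therefore determined by qubit $i$, where $X$ and $Z$ anti-commute, so $P$ and $\had_i P \had_i$ anti-commute globally. Applying \cref{lem:fidelitybound} to $\ket{\psi}$ and $\ket{\phi}$ immediately yields $|\braket{\psi}{\phi}| \leq 1/\sqrt{2}$.

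The main obstacle is the case $P_i = Y$, where $\had_i P \had_i = -P$ commutes with $P$ (being a scalar multiple of it), so \cref{lem:fidelitybound} does not directly apply. Fortunately, this case admits a sharper conclusion by direct manipulation of the two stabilizer conditions: from $P\ket{\psi} = \ket{\psi}$ we obtain $\bra{\psi}P^\dagger = \bra{\psi}$, while $(-P)\ket{\phi} = \ket{\phi}$ together with unitarity of $P$ gives $P^\dagger\ket{\phi} = P^{-1}\ket{\phi} = -\ket{\phi}$. Inserting $P^\dagger$ between $\bra{\psi}$ and $\ket{\phi}$ and evaluating on each side then yields $\braket{\psi}{\phi} = \bra{\psi}P^\dagger\ket{\phi} = -\braket{\psi}{\phi}$, forcing $\braket{\psi}{\phi} = 0$. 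Combining both cases gives $|\braket{\psi}{\phi}| \leq 1/\sqrt{2}$ in general, and the advertised energy lower bound follows.
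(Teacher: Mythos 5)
Your proof is correct and follows essentially the same route as the paper's: both case on $P_i$, both handle $P_i\in\{X,Z\}$ by noting that $P$ and $\had_i P\had_i$ anti-commute and invoking \cref{lem:fidelitybound}, and both show the overlap vanishes outright when $P_i=Y$ (your eigenvalue-flip manipulation is the same computation as the paper's $\bra\psi\had_i\ket\psi=\bra\psi(-P)\had_i\ket\psi$ step). No gaps.
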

\begin{proof}
    First, note that
    \begin{equation}
        E_i\equiv \bra\psi\frac{\eye-\had_i}{2}\ket\psi = \frac{1}{2}\left(1-\bra\psi\had_i\ket\psi\right).
    \end{equation}
    Suppose $P=P_1\otimes\dots\otimes P_n\in\stab(\ket\psi)$ is a Pauli such that $P_i\neq\eye$. If $P_i=Y$, then $\bra\psi\had_i\ket\psi=\bra\psi\had_iP\ket\psi=\bra\psi (-P)\had_i\ket\psi=-\bra\psi\had_i\ket\psi=0$, since $\had Y\had = -Y$. So, $E_i=1/2>\sin^2(\pi/8)$.

    On the other hand, consider if $P_i=Z$ (the case of $P_i=X$ is similar). $P\in\stab(\ket\psi)$ implies that $\had_i P\had_i\in\stab(\had_i\ket\psi)$ since
    \begin{equation*}
        \had_i P\had_i \had_i \ket\psi = \had_i P\ket\psi = \had_i\ket\psi.
    \end{equation*}
    But $\llbracket\had_i P\had_i , P\rrbracket = \llbracket \had Z \had, Z\rrbracket = \llbracket X,Z\rrbracket = -1$, since $\had Z \had = X$. By \cref{lem:fidelitybound} we must have that $\abs{\bra\psi \had_i\ket\psi}\leq \frac{1}{\sqrt{2}}$, and so
    \begin{equation}
        E_i \geq  \frac{1}{2}\left(1-\frac{1}{\sqrt{2}}\right) = \sin^2\left(\frac{\pi}{8}\right).
    \end{equation}
\end{proof}
\vspace{-1em}
\begin{restatable}{lemma}{conditionsimple}\label{clm:dmagic-2}
    If $\ket\psi$ can be prepared by at most $t$ Pauli-rotation gates, then at least $n-t$ qubits are acted on non-trivially by an element from $\stab(\ket\psi)$.
\end{restatable}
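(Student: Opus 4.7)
The plan is to prove the contrapositive of a slight reformulation: bound the number of qubits on which $\stab(\ket\psi)$ acts trivially, using the stabilizer dimension lower bound from \cref{lem:stab-dim-bound} together with the dimension upper bound from \cref{clm:max-Abelian}. Concretely, by \cref{lem:stab-dim-bound}, a state prepared by at most $t$ Pauli-rotation gates satisfies $\sdim(\ket\psi)\geq n-t$; we will pair this with an upper bound on $\sdim(\ket\psi)$ in terms of the number of ``non-trivial'' qubits to extract the desired count.

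First I would define $T\subseteq [n]$ to be the set of qubits $i$ such that every $P\in\stab(\ket\psi)$ satisfies $P_i=\eye$; the conclusion of the lemma is equivalent to $|T|\leq t$. Every element of $\stab(\ket\psi)$ then has the form $\eye_T\otimes Q$ for some $Q\in\mcP_{[n]\setminus T}$. Passing to the phase-less Pauli group via the projection $\pi:\stab(\ket\psi)\to\widehat{\stab(\ket\psi)}$ (which is an isomorphism of groups since $-\eye\notin\stab(\ket\psi)$, as noted in \cref{sec:prelim}), the image $\widehat{\stab(\ket\psi)}$ is a commuting subgroup of $\hat\mcP_n$ whose elements all restrict to identity on $T$. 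This naturally identifies $\widehat{\stab(\ket\psi)}$ with a commuting subgroup of $\hat\mcP_{n-|T|}$.

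Applying \cref{clm:max-Abelian} to this commuting subgroup yields $\dim\widehat{\stab(\ket\psi)}\leq n-|T|$, and since $\sdim(\ket\psi)=\dim\stab(\ket\psi)=\dim\widehat{\stab(\ket\psi)}$, we obtain $\sdim(\ket\psi)\leq n-|T|$. Combining with the lower bound $\sdim(\ket\psi)\geq n-t$ from \cref{lem:stab-dim-bound} gives $n-t\leq n-|T|$, i.e. $|T|\leq t$. Therefore at least $n-|T|\geq n-t$ qubits are acted on non-trivially by some element of $\stab(\ket\psi)$.

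There is no real obstacle here; the argument is essentially a dimension-counting pigeonhole. The only subtlety worth being careful about is ensuring that we correctly pass between the stabilizer group as a subgroup of $\mcP_n$ (which carries phases) and its phase-less image in $\hat\mcP_n$ (to which \cref{clm:max-Abelian} applies), and confirming that the dimension is preserved under $\pi$. Once that identification is made, the bound $\dim M\leq \ell$ for a commuting subgroup $M\leq\hat\mcP_\ell$ applied with $\ell=n-|T|$ immediately yields the claim.
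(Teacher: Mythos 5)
Your argument is correct and is essentially the paper's own proof, just phrased directly rather than by contradiction: both define the set of trivially-acted-on qubits, identify $\stab(\ket\psi)$ with a commuting subgroup on the remaining qubits, apply \cref{clm:max-Abelian} to upper-bound the stabilizer dimension, and compare against the lower bound from \cref{lem:stab-dim-bound}. Your extra care in passing to the phase-less group $\hat\mcP_n$ is a fine (and already-justified-in-the-preliminaries) detail, not a departure.
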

\begin{proof}
    Suppose, for contradiction, that $\stab(\ket\psi)$ acts trivially on greater than $t$ qubits. This implies that $\stab(\ket\psi)$ is isomorphic to a stabilizer subgroup of the Pauli group on $\ell< n-t$ qubits, which by \cref{clm:max-Abelian} is generated by at most $\ell$ independent elements. Since the stabilizer dimension of $\ket\psi$ is the size of any independent generating set of $\stab(\ket\psi)$, we have that $\sdim(\ket\psi)\leq\ell<n-t$, which contradicts \cref{lem:stab-dim-bound}.
\end{proof}
\cref{thm:lb-t-T-gates} is now an immediate consequence.
\begin{proof}[Proof of \cref{thm:lb-t-T-gates}]
    Suppose $\ket\psi$ can be prepared by at most $t$ Pauli-rotation gates. By \cref{clm:dmagic-2} the set of qubits which are acted on non-trivially by an element from $\stab(\ket\psi)$,  $\mathcal{I}\subseteq[n]$, has size at least $n-t$. Thus,
    \begin{equation}
        \bra\psi\dmagic\ket\psi \geq \frac{1}{n}\sum_{i\in \mathcal{I}} \bra\psi\frac{\eye-\had_i}{2}\ket\psi.
    \end{equation}
    The bound from \cref{clm:dmagic-1} holds for each $i\in\mathcal{I}$, so we have
    \begin{align}
        \bra\psi\dmagic\ket\psi \geq \frac{\abs{\mathcal{I}}}{n}\sin^2\left(\frac{\pi}{8}\right) \geq \frac{n-t}{n}\sin^2\left(\frac{\pi}{8}\right).
    \end{align}
\end{proof}


\vspace{-2em}
\section{Rotated NLTS Hamiltonian}\label{sec:simultaneous}

We now turn to our main result, which is stronger than simultaneous NLACS/NLTS.
\NLTSNLACS*

The proof of \cref{thm:joint-NLTS-NLACS} will follow the same general outline as the proof of \cref{thm:lb-t-T-gates}: we will show a property of a stabilizer group which gives a local energy lower bound, and we will show that this property holds for many Hamiltonian terms when the number of non-Clifford gates is not too close to $n$. Unfortunately, the property of the stabilizer group will be more complicated than in the case of $\dmagic$, and we will not be able to give a tight lower bound on the energy.

We will first build up some intuition for this property. For simplicity, take $h=\frac{1}{2}(\eye-\had^{\otimes k})_A$ to be some Hadamard-type term of $\Tilde{\ham}_{NLTS}$ (as defined in \cref{subsec:prelim-ham}) acting on the qubits $A\subseteq[n]$.

If $\ket\psi$ is a pure stabilizer state at $A$, i.e. $\ket\psi=\ket\varphi_{A}\otimes \ket{\xi}_{[n]\setminus A}$ for some $k$-qubit\footnote{We require that $k$ be odd, throughout.} stabilizer state $\ket\varphi$, then the results of \cite{CCN+23} show that $\bra\psi h\ket\psi = \frac{1}{2}-\frac{1}{2}\bra\varphi\had^{\otimes k}\ket\varphi\geq \sin^2(\pi/8)$. Essentially, if $\stab(\ket\varphi)$ contains elements of certain forms (e.g., if there is a $P\in\stab(\ket\varphi)$ consisting of only $X$ and $Z$ operators) then the energy lower bound holds for this state. The proof in \cite{CCN+23} relies on the fact $\sdim(\ket\varphi) = k$, which is as large as possible, to show that $\stab(\ket\varphi)$ \emph{must} contain a Pauli operator which produces an energy lower bound.

Now, when $\ket\psi$ is not a stabilizer state at $A$ what can be said? In \cite{CCN+23} we considered examining the subgroup of $\stab(\ket\psi)$ which acts trivially outside of $A$. This turns out to be far too restrictive: even when $\ket\psi$ is a stabilizer state this subgroup may only contain the identity operator, in which case it will be useless in lower-bounding the energy of $\ket\psi$. In the present work, we instead consider the so-called \emph{local view} of $\stab(\ket\psi)$ at $A$. For technical reasons, we will need to utilize phase-less Pauli operators, $\hat\mcP_n\equiv\mcP_n / \langle i\eye\rangle$, as defined in \cref{sec:prelim}.

\begin{definition*}
For a subset of qubits, $A\subseteq[n]$, we define the projection onto $A$, $\rho_A:\mcP_n\rightarrow\hat\mcP_n$, by
\begin{equation}
    \rho_A(P)_i = \begin{cases}
        P_i & \text {if } i\in A, \\
        \eye & \text{otherwise}.
    \end{cases}
\end{equation}
For $P\in\mcP_n$, we say that $\rho_A(P)$ is the \textbf{local view} of $P$ at $A$. In particular, if $P=P_1\otimes\dots\otimes P_n$, then $\rho_A(P)$ is equal to $P$ on qubits in $A$, and is identity elsewhere. Note that since we have passed to the phase-less Pauli group, $\rho_A$ is a group homomorphism. 
\end{definition*}

The reason to consider local views of Pauli operators is that if $P\in\stab(\ket\psi)$, then \emph{only} the part of $P$ which acts on $A$ can interact with the Hamiltonian term $h$. In other words, $h$ only ever ``sees'' $\rho_A(P)$, not the entirety of $P$.
Even though $\ket\psi$ may not be a stabilizer state on $A$, it is possible that some of the elements of $\stab(\ket\psi)$ ``look like'' a stabilizer group. In the following, let $S\subseteq\mcP_n$ be a subset of Pauli operators. Recall from \cref{sec:prelim} that two phase-less Pauli operators are said to \textbf{commute} if they commute as operators in $\mcP_n$.

\begin{definition}\label{def:local-view}
    The \textbf{local view} of $S$ at $A$ is defined as the set $\rho_A(S)\equiv\br{\rho_A(P)\mid P\in S}$. We say that $S$ is \textbf{$A$-locally-commuting} if the local view of $S$ at $A$ is a commuting subgroup of $\hat\mcP_n$. That is, even though $S$ may have no group structure, if it is $A$-locally-commuting then at least on the qubits in $A$ it acts like a commuting subgroup.
\end{definition}
\begin{example}
    Take $S\equiv\br{iIIXI,IXZY,-XIZX,XXYI}$. All terms in $S$ pair-wise anti-commute and $S$ is not a group, yet $S$ is $A$-locally-commuting for $A=\br{1,2}$ since $\rho_A(S)=\br{IIII,IXII,XIII,XXII}$.
\end{example}

We note that while $A$-locally-commuting subsets can, in principle, be rather large, the sizes of their local views are restricted by the size of $A$.
\begin{lemma}\label{lem:Abelian-subgroup-max-size}
    If $S\subseteq \mcP_n$ is $A$-locally-commuting, then $\dim \rho_A(S)\leq\abs{A}$.
\end{lemma}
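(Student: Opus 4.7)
The plan is to reduce this to the already-stated \cref{clm:max-Abelian}, which bounds the dimension of commuting subgroups of $\hat\mcP_\ell$ by $\ell$. The only real content is verifying that $\rho_A(S)$, though formally a subgroup of $\hat\mcP_n$, can be identified with a commuting subgroup of $\hat\mcP_{\abs{A}}$ without losing any structure.

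First I would unpack the hypothesis: by \cref{def:local-view}, $S$ being $A$-locally-commuting means exactly that $\rho_A(S)$ is a commuting subgroup of $\hat\mcP_n$. Next, observe that by the definition of $\rho_A$, every element of $\rho_A(S)$ is the identity on each qubit $i\notin A$. Define the natural restriction map $r_A:\rho_A(S)\to\hat\mcP_{\abs{A}}$ that keeps only the tensor factors indexed by $A$. This $r_A$ is a group homomorphism (tensor factors multiply component-wise in $\hat\mcP_n$), and it is injective on $\rho_A(S)$ because any preimage differs from its image only on qubits outside $A$, where it is already the identity.

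The key point is that $r_A$ preserves the commutator $\llbracket \cdot, \cdot \rrbracket$: for $P,Q\in\rho_A(S)$, the commutator is the product over qubits of the single-qubit commutators, and every qubit outside $A$ contributes trivially since both operators act as $\eye$ there. Hence $r_A(\rho_A(S))$ is a commuting subgroup of $\hat\mcP_{\abs{A}}$ of the same order as $\rho_A(S)$. Applying \cref{clm:max-Abelian} to $\hat\mcP_{\abs{A}}$ then gives $\dim r_A(\rho_A(S))\leq\abs{A}$, and since $r_A$ is an isomorphism onto its image we conclude $\dim\rho_A(S)\leq\abs{A}$.

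There is no real obstacle here; the argument is essentially bookkeeping. The one place one has to be a little careful is that $\rho_A(S)$ lives in the phase-less Pauli group, so we do not have to worry about global phases while restricting, which is precisely why the passage to $\hat\mcP_n$ was built into \cref{def:local-view}. If one attempted the same argument with honest Paulis in $\mcP_n$, the restriction map would not be a well-defined homomorphism, so the setup of the definition is doing exactly the right work.
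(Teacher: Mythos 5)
Your proof is correct and follows the same route as the paper: the paper's one-line argument is precisely that $\rho_A(S)$ acts non-trivially only on qubits in $A$, hence is isomorphic to a commuting subgroup of $\hat\mcP_{\abs{A}}$, after which \cref{clm:max-Abelian} applies. You have simply made the implicit isomorphism (your restriction map $r_A$) and its commutator-preserving property explicit, which is fine but adds nothing beyond the paper's reasoning.
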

\begin{proof}
    As $\rho_A(S)$ acts non-trivially only on qubits in $A$ it is isomorphic to a commuting subgroup on $\abs{A}$ qubits. The bound holds by \cref{clm:max-Abelian}.
\end{proof}

Locally-commuting subsets that saturate this bound for a given $A$ will play a crucial role in our local energy lower bounds. 
\begin{definition}
    Let $\ket\psi$ be a state with stabilizer group $G\equiv\stab(\ket\psi)$. For a set of qubits $A\subseteq[n]$, we say that the state $\ket\psi$ is a \textbf{pseudo-stabilizer state} at $A$ if some subset $S\subseteq G$ is $A$-locally-commuting and $\rho_A(S)$ has maximal size, $\dim \rho_A(S)= \abs{A}$.\footnote{The subset, $S$, can always be taken to be a subgroup of $G$, but this is not necessary for our results.}

    The sets of qubits we consider will always be sets acted on by some term of a local Hamiltonian. By abuse of terminology, we say that $\ket\psi$ is a pseudo-stabilizer state at a term $h$ of $\ham$ if it is a pseudo-stabilizer state at the qubits where $h$ acts non-trivially.
\end{definition}
If $\ket\psi$ is stabilizer at $A$--- i.e., $\ket\psi=\ket\varphi_{A}\otimes \ket{\xi}_{[n]\setminus A}$ for some $\abs{A}$-qubit stabilizer state $\ket\varphi$--- then it is also pseudo-stabilizer at $A$. This follows since $\stab(\ket\psi)$ contains a dimension-$\abs{A}$ subgroup which is $\stab(\ket\varphi)$ on $A$ and identity outside of $A$. 

The pseudo-stabilizer state property yields the first main ingredient in the proof of \cref{thm:joint-NLTS-NLACS}. In particular, in \cref{subsec:local-bound} we prove following generalization of \cref{clm:dmagic-1} to the case of Hadamard-type Hamiltonian terms.

\begin{restatable}{claim}{localboundcomplex}\label{clm:local-bound}
    Let $h=\frac{1}{2}(\eye-\had^{\otimes k})_A$ be a Hadamard-type term of $\Tilde{\ham}_{NLTS}$ acting on $A\subseteq[n]$ ($\abs{A}=k$, odd). If $\ket\psi$ is a pseudo-stabilizer state at $A$ then 
    \begin{equation*}
        \bra\psi h\ket\psi\geq \sin^2\left(\frac{\pi}{8}\right).
    \end{equation*}
\end{restatable}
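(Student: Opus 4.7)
The plan is to show $\bigl|\bra\psi(\had^{\otimes k})_A\ket\psi\bigr|\leq \tfrac{1}{\sqrt{2}}$, which immediately gives $\bra\psi h\ket\psi\geq \tfrac{1}{2}\bigl(1-\tfrac{1}{\sqrt{2}}\bigr)=\sin^2(\pi/8)$. By \cref{lem:fidelitybound} it suffices to produce a single $P'\in\stab(\ket\psi)$ that anti-commutes with $(\had^{\otimes k})_A$, since then $P'$ and $(\had^{\otimes k})_A P'(\had^{\otimes k})_A$ are anti-commuting Paulis stabilizing $\ket\psi$ and $(\had^{\otimes k})_A\ket\psi$ respectively. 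Because $P'$ and $(\had^{\otimes k})_A P'(\had^{\otimes k})_A$ agree outside of $A$, their commutator in $\mcP_n$ depends only on the local view $\rho_A(P')$; qubitwise one checks that $\had P_i\had$ anti-commutes with $P_i$ exactly for $P_i\in\{X,Z\}$ and commutes for $P_i\in\{I,Y\}$. Defining the $\FF_2$-linear functional $\phi:\hat\mcP_k\to\FF_2$ by $\phi(Q)\equiv\bigl|\{i\in A:Q_i\in\{X,Z\}\}\bigr|\bmod 2$, the required anti-commutation is exactly $\phi(\rho_A(P'))=1$.

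By the pseudo-stabilizer hypothesis there is $S\subseteq\stab(\ket\psi)$ whose local view $M\equiv\rho_A(S)$ is a commuting subgroup on $A$ of maximal dimension $|A|=k$ allowed by \cref{clm:max-Abelian}. If some $\bar P\in M$ satisfies $\phi(\bar P)=1$, then lifting to any $P'\in S$ with $\rho_A(P')=\bar P$ gives the desired anti-commuting stabilizer and \cref{lem:fidelitybound} closes this easy case. So assume $\phi|_M\equiv 0$. The key observation is that, viewing $\hat\mcP_k\cong\FF_2^{2k}$ with its standard symplectic form $\omega$, the functional $\phi$ is represented by $\omega(v,\cdot)$ for a unique $v\in\hat\mcP_k$; in the coordinates $X_i\leftrightarrow(e_i,0)$, $Z_i\leftrightarrow(0,e_i)$, $Y_i\leftrightarrow(e_i,e_i)$, a direct check identifies $v=Y^{\otimes k}$. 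Thus $\phi(Q)=0$ is precisely the statement that $Q$ commutes with $Y^{\otimes k}$ in $\hat\mcP_k$, and our assumption becomes: every element of $M$ commutes with $Y^{\otimes k}$.

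Now if $Y^{\otimes k}\notin M$, then $M+\langle Y^{\otimes k}\rangle$ would be a commuting subgroup on $A$ of dimension $k+1$, contradicting \cref{clm:max-Abelian}; hence $Y^{\otimes k}\in M$. Lifting to $P'\in S\subseteq\stab(\ket\psi)$ with $\rho_A(P')=Y^{\otimes k}$ and using $\had Y\had=-Y$ together with the hypothesis that $k$ is odd gives $(\had^{\otimes k})_A P'(\had^{\otimes k})_A = -P'$, i.e.\ $P'$ anti-commutes with $(\had^{\otimes k})_A$ in $\mcP_n$. Combined with $P'\ket\psi=\ket\psi$ this forces $\bra\psi(\had^{\otimes k})_A\ket\psi = -\bra\psi(\had^{\otimes k})_A\ket\psi = 0$, so $\bra\psi h\ket\psi=\tfrac{1}{2}\geq\sin^2(\pi/8)$. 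The main obstacle is exactly this second case: identifying $Y^{\otimes k}$ as the symplectic dual of $\phi$, and using that $k$ odd is precisely what makes $Y^{\otimes k}$ itself furnish the required anti-commutation, is what allows maximality of $\dim M = k$ to close the argument when no element of $M$ already provides it in the easy case.
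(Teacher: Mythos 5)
Your proof is correct, and it reaches the same dichotomy as the paper (our \cref{cor:pseudo-stab-implies-types}: the local view contains either an element with an odd number of $X$'s and $Z$'s, handled via \cref{lem:fidelitybound}, or an all-$Y$ element giving overlap zero outright), but the combinatorial heart of the argument is done by a genuinely different and slicker route. The paper's \cref{lemma:stabilizerodd} proceeds by Gauss--Jordan elimination on the binary matrix recording the $X/Z$-support of a generating set, reducing to a block form $(\eye\;\;B)$, identifying the complementary $Y$-type generators with $(B^T\;\;\eye)$, and case-splitting on the column weights of $B$. You instead observe that the parity functional $\phi(Q)=\abs{\{i:Q_i\in\{X,Z\}\}}\bmod 2$ is exactly the symplectic pairing $\omega(Y^{\otimes k},\cdot)$, so that ``no Type II element'' means $M\subseteq\{Y^{\otimes k}\}^\perp$; maximality of $\dim M=k$ (via \cref{clm:max-Abelian}) then forces $Y^{\otimes k}\in M$, and oddness of $k$ makes $Y^{\otimes k}$ itself the Type I witness. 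This avoids the row-reduction entirely, yields the stronger conclusion that the witness in the second case is always exactly $Y^{\otimes k}$, and is very much in the spirit of the symplectic reformulation we give in \cref{app:symplectic} (which the paper only applies to the results of Section 4.2, not to this lemma). One small imprecision: in your opening paragraph the condition you need for \cref{lem:fidelitybound} is that $P'$ and $(\had^{\otimes k})_A P'(\had^{\otimes k})_A$ anti-commute \emph{with each other} (equivalently $\phi(\rho_A(P'))=1$), not that $P'$ anti-commutes with $(\had^{\otimes k})_A$ as an operator; the latter is the separate mechanism you correctly invoke in the $Y^{\otimes k}$ case, where the overlap vanishes without appeal to \cref{lem:fidelitybound}. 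Your subsequent qubitwise computation makes the intended meaning clear, so this is a wording issue rather than a gap.
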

For \cref{thm:joint-NLTS-NLACS} to hold we will need states prepared with (possibly many) Pauli-rotation gates to be pseudo-stabilizer at many Hamiltonian terms. In \cref{subsec:many-pseudo-stabilizer} we show the following analogue of \cref{clm:dmagic-2}:

\begin{restatable}{claim}{conditioncomplex}\label{clm:condition}
    Let $\ham$ be any $k$-local Hamiltonian with $\Theta(n)$ terms. There are constants  $c, N>0$ such that for all $n\geq N$, if an $n$-qubit state, $\ket\psi$, can be prepared with $\leq cn$ Pauli-rotation gates then $\ket\psi$ is a pseudo-stabilizer state at $\Omega(n)$ terms of $\ham$.
\end{restatable}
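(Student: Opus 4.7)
We begin by applying Lemma~7 (the stabilizer-dimension bound) to conclude that the stabilizer group $G = \stab(\ket\psi)$ has dimension at least $n - t \geq (1-c)n$. The goal is to show that this lower bound on $\dim G$ forces $\ket\psi$ to be pseudo-stabilizer at linearly many terms of $\ham$, by upper-bounding the number of terms where the pseudo-stabilizer condition \emph{fails}.

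For each term supported on $A_i$, let $m_i$ denote the maximum dimension of $\rho_{A_i}(S)$ over all $A_i$-locally-commuting subsets $S \subseteq G$. By \cref{lem:Abelian-subgroup-max-size} we have $m_i \leq |A_i|$, and the state is pseudo-stabilizer at $A_i$ precisely when $m_i = |A_i|$. Define the local defect $\delta_i := |A_i| - m_i \geq 0$, so a term is \emph{deficient} iff $\delta_i \geq 1$. The heart of the argument is a key lemma controlling the total defect by the stabilizer-dimension deficit: we aim to show
\begin{equation*}
\dim G \;\leq\; n - \tfrac{1}{k}\sum_{i} \delta_i,
\end{equation*}
or an inequality of this shape with a $k$-dependent constant factor. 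The intuition is that each unit of defect at a term corresponds to a missing independent stabilizer generator, and the factor $1/k$ arises because each qubit sits in at most $k$ terms, so a single missing generator can produce defects at up to $k$ overlapping terms.

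To prove this inequality the plan is to pass to the symplectic formalism developed in \cref{app:symplectic}: identify $\hat\mcP_n$ with $\FF_2^{2n}$ equipped with the symplectic form, so that $\hat G$ is an isotropic subspace of dimension $\dim G$. For each $A_i$, the projection $\rho_{A_i}$ sends $\hat G$ into $V_{A_i} \cong \FF_2^{2|A_i|}$, and $m_i$ equals the maximal isotropic dimension in $\rho_{A_i}(\hat G)$, namely $\tfrac{1}{2}\bigl(\dim \rho_{A_i}(\hat G) + \dim \rad \rho_{A_i}(\hat G)\bigr)$. When $\delta_i > 0$, the image $\rho_{A_i}(\hat G)$ is either too small or insufficiently isotropic inside $V_{A_i}$, and this missing local structure must be absorbed by a deficit in either the kernel of $\rho_{A_i}|_{\hat G}$ (which is isotropic in $V_{[n]\setminus A_i}$ and hence has dimension at most $n - |A_i|$) or in the dimension of $\hat G$ itself. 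The main obstacle I anticipate is combining these local deficiencies across overlapping terms without over-counting; this is exactly where the at-most-$k$-terms-per-qubit assumption is used, and where I would expect an averaging or double-counting argument to yield the $1/k$ factor cleanly.

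Once the key inequality is established, the claim follows immediately. Combining with $\dim G \geq (1-c)n$ we obtain
\begin{equation*}
\sum_i \delta_i \;\leq\; k\bigl(n - \dim G\bigr) \;\leq\; k c n.
\end{equation*}
Because every deficient term contributes $\delta_i \geq 1$, the number of deficient terms is at most $kcn$. Since $\ham$ has $m = \Theta(n)$ terms, choosing $c$ small enough that $kcn \leq m/2$ (and $N$ correspondingly large for the asymptotic $\Theta(n)$ count to kick in) guarantees that at least $m/2 = \Omega(n)$ terms are pseudo-stabilizer, as desired. As a sanity check, in the $1$-local case $k=1$ the inequality reduces exactly to the statement proved in \cref{clm:dmagic-2}, so one can view the present claim as the natural $k$-local generalization of that argument.
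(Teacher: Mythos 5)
Your overall architecture matches the paper's: lower-bound $\dim G$ via the stabilizer-dimension bound, upper-bound $\dim G$ in terms of local defects at the Hamiltonian terms, and compare. But the proof has a genuine gap at its center: the inequality $\dim G \leq n - \tfrac{1}{k}\sum_i \delta_i$ is the entire content of the claim, and you never establish it --- you state it as something you ``aim to show'' and explicitly flag the combination of deficiencies across overlapping terms as an anticipated obstacle. That obstacle is real. The natural dimension bound one can prove (and the one the paper proves, \cref{lemma:stabilizerdimension}) is $\dim G \leq \sum_i c(\rho_{A_i}(G))$ for a \emph{disjoint covering} $\{A_i\}$ of $[n]$: its proof takes a canonical basis $\{\mcS_{A_i},\mcX_{A_i},\mcZ_{A_i}\}$ of each local view and uses disjointness to argue that the union of these bases is independent and is itself a canonical basis of the group it generates, which contains $\hat G$ as a commuting subgroup. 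When the $A_i$ overlap, the union of local canonical bases is neither independent nor canonical, so no version of this argument goes through directly, and your proposed $1/k$-weighted inequality over all (overlapping) terms does not follow from anything you have written down.

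The paper sidesteps overlap entirely rather than averaging over it: since each qubit meets at most $k$ terms, one first greedily extracts $p=\Omega(n)$ terms acting on pairwise disjoint sets $A_1,\dots,A_p$ (\cref{lem:many-disjoint}), appends $A_{p+1}=[n]\setminus\bigcup_i A_i$ to get a partition of $[n]$, and only then applies the disjoint-cover dimension bound; the count $\dim G \leq n-p+\abs{\mathcal I}$ then yields $\abs{\mathcal I}\geq p-cn$. If you want to salvage your route, you could partition all $m$ terms into $O(k^2)$ classes of mutually disjoint terms and apply the disjoint-cover bound to each class, which does give an inequality of your proposed shape with constant $1/(k^2-k+1)$ --- but either way you must first prove the disjoint-cover bound (equivalently, the statement that local maximal-commuting dimensions over a partition control $\dim G$), which is the missing technical core. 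Your identification of $m_i$ with $\tfrac{1}{2}(\dim\rho_{A_i}(\hat G)+\dim\rad\rho_{A_i}(\hat G))$ is correct and is the right starting point, but the proposal stops exactly where the work begins.
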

We are now prepared to prove our main result using \cref{clm:local-bound} and \cref{clm:condition}.
\NLTSNLACS*

\begin{proof}[Proof of \cref{thm:joint-NLTS-NLACS}]
    Let $\ham_{NLTS}$ be the NLTS Hamiltonian from \cite{ABN22}. We show in Section 4 of \cite{CCN+23} that $\ham_{NLTS}$ can be chosen so that all of its terms act on an odd number of qubits, and indeed we make that choice here, as well.
    By Theorem 5 of \cite{ABN22}, there exists a constant $\epsilon_0>0$ such that $\ham_{NLTS}$ is $\epsilon_0$-NLTS, and by Lemma 4 of \cite{CCN+23} this implies that $\Tilde{\ham}_{NLTS}$ is also $\epsilon_0$-NLTS (conjugating by a constant-depth circuit preserves the NLTS property).

    Now, let $c,N>0$ be the constants obtained by applying \cref{clm:condition} to Hadamard-type terms of $\Tilde{\ham}_{NLTS}$, and suppose $\ket\psi$ is a state prepared by $\leq cn$ Pauli-rotation gates. By \cref{clm:condition} there are $\Omega(n)$ Hadamard-type terms of $\Tilde{\ham}_{NLTS}$ where $\ket\psi$ is a pseudo-stabilizer state; denote the index set of the terms by $\mathcal{I}\subseteq [m]$.
    
    We bound the energy of $\ket\psi$:
    \begin{align}
      \bra\psi\Tilde{\ham}_{NLTS}\ket\psi &\geq \frac{1}{m}\sum_{i\in\mathcal{I}}\bra\psi h_i\ket\psi , \\
      \text{\color{since}(by \cref{clm:local-bound})}\hspace{3em} &\geq \frac{\abs{\mathcal{I}}}{m}\sin^2\left(\frac{\pi}{8}\right).
    \end{align}
    Since $m=\Theta(n)$ and $\abs{\mathcal{I}}=\Omega(n)$, for all sufficiently large $n$ the energy is lower-bounded by $\epsilon_1>0$, a constant. Thus, any state with energy at most $\epsilon_1$ requires at least $cn=\Omega(n)$ Pauli-rotation gates to prepare. Taking $\epsilon\equiv\min\br{\epsilon_0,\epsilon_1}$ the result is proven.
\end{proof}

\subsection{Local energy lower bound}\label{subsec:local-bound}
In \cite{CCN+23} we showed that all $k$-qubit stabilizer states have an energy bound under $\frac{1}{2}(\eye-\had^{\otimes k})$; our goal in this section is to generalize this result by showing that if an $n$-qubit state is pseudo-stabilizer at $A\subseteq[n]$ ($\abs{A}=k$, odd), then it has an energy lower bound under the local term $\frac{1}{2}(\eye-\had^{\otimes k})_A$. The proof of this statement, which is \cref{clm:local-bound}, follows a similar strategy to the proof in \cite{CCN+23}, but we are forced to take a slightly different approach here. 

Throughout this section, let $\ket\psi$ be an $n$-qubit state with stabilizer group $G\equiv\stab(\ket\psi)$, and let $A\subseteq[n]$ be a subset of qubits with $\abs{A}=k$, odd. We will prove two main results in this section which will in turn imply \cref{clm:local-bound}:
\begin{enumerate}
    \item If $G$ contains an element of one of two types, then the energy bound will hold.
    \item If $G$ is pseudo-stabilizer at $A$ then it \emph{must} contain such an element. 
\end{enumerate}
\begin{definition}\label{def:stab-group-types}
    We define the following two types of stabilizer groups:
    \begin{itemize}[leftmargin=*]
        \item We say $G$ is of \textbf{Type I} at $A$ if there is a $Q\in G$ whose local view at $A$ contains only $Y$'s and $\eye$'s, and the total number of $Y$'s in $\rho_A(Q)$ is odd.
        \item We say $G$ is of \textbf{Type II} at $A$ if there is a $P\in G$ whose local view at $A$ contains an odd total number of $X$'s and $Z$'s. That is, the total number of $X$'s and $Z$'s in $\rho_A(P)$ is odd.
    \end{itemize}
\end{definition}

\begin{lemma}
    \label{lemma:odd}
 Let $\frac{1}{2}(\eye-\had^{\otimes k})_A$ be a Hadamard-type term of $\Tilde{\ham}_{NLTS}$ acting on $A$. If $G$ is \textbf{Type I} or \textbf{Type II} at $A$, then $\bra\psi\frac{1}{2}(\eye-\had^{\otimes k})_A\ket\psi\geq\sin^2\left(\frac{\pi}{8}\right)$.
\end{lemma}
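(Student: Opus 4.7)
The plan is to reduce both cases to showing $\lvert\bra\psi(\had^{\otimes k})_A\ket\psi\rvert\leq \tfrac{1}{\sqrt{2}}$, since then $\bra\psi\tfrac{1}{2}(\eye-\had^{\otimes k})_A\ket\psi \geq \tfrac{1}{2}(1-\tfrac{1}{\sqrt{2}})=\sin^2(\pi/8)$. The central tool is \cref{lem:fidelitybound}: if $\ket\psi$ and $(\had^{\otimes k})_A\ket\psi$ admit anti-commuting Pauli stabilizers, their overlap is at most $\tfrac{1}{\sqrt{2}}$. So my task is to manufacture such stabilizers from a Type I or Type II element of $G$.

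The bookkeeping device is that whenever $P\in G$, the conjugate $\widetilde{P}\equiv (\had^{\otimes k})_A P (\had^{\otimes k})_A$ stabilizes $(\had^{\otimes k})_A\ket\psi$, since $\widetilde{P}(\had^{\otimes k})_A\ket\psi = (\had^{\otimes k})_A P\ket\psi = (\had^{\otimes k})_A\ket\psi$. I would then compute $\llbracket P,\widetilde{P}\rrbracket$ qubit-by-qubit using $\had X\had=Z$, $\had Z\had=X$, $\had Y\had=-Y$. Outside $A$ the two operators agree and contribute $+1$; inside $A$, positions where $P_i=\eye$ or $P_i=Y$ contribute $+1$ (the global minus sign from $\had Y\had=-Y$ drops out of any commutator), while positions where $P_i\in\{X,Z\}$ contribute $\llbracket X,Z\rrbracket=-1$. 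Hence $\llbracket P,\widetilde{P}\rrbracket = (-1)^{r}$, where $r$ is the total number of $X$'s and $Z$'s appearing in $\rho_A(P)$.

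For the Type II case, I pick $P\in G$ whose local view at $A$ has an odd number of $X$'s and $Z$'s. Then $r$ is odd, so $P$ and $\widetilde{P}$ are anti-commuting Pauli stabilizers of $\ket\psi$ and $(\had^{\otimes k})_A\ket\psi$ respectively, and \cref{lem:fidelitybound} yields the required overlap bound.

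For the Type I case, let $Q\in G$ have $\rho_A(Q)$ containing only $Y$'s and $\eye$'s with an odd number $y$ of $Y$'s. A direct calculation gives $\widetilde{Q} = (-1)^{y}Q = -Q$, which means that $Q$ itself anti-commutes with $(\had^{\otimes k})_A$. Using $Q\ket\psi=\ket\psi$ together with $Q^2=\eye$, we find
\begin{equation*}
\bra\psi(\had^{\otimes k})_A\ket\psi = \bra\psi Q(\had^{\otimes k})_A Q\ket\psi = -\bra\psi(\had^{\otimes k})_A\ket\psi,
\end{equation*}
so the overlap vanishes and the energy is $\tfrac{1}{2}\geq \sin^2(\pi/8)$. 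The only subtle point in the whole argument is carefully tracking the phase $\had Y\had=-Y$: it is precisely this sign that makes Type I behave differently from Type II (in Type I it produces outright anti-commutation with $(\had^{\otimes k})_A$, whereas in Type II it cancels harmlessly out of the commutator), and keeping it straight is what I expect to be the main source of book-keeping care.
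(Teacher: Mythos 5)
Your proposal is correct and matches the paper's own proof: Type II is handled by showing $\had_A P\had_A\in\stab(\had_A\ket\psi)$ anti-commutes with $P$ (one $-1$ per $X$ or $Z$ in $\rho_A(P)$) and invoking \cref{lem:fidelitybound}, while Type I uses the odd number of $Y$'s to force $\had_A$ to anti-commute with $Q$, making the overlap vanish. The sign bookkeeping you flag is exactly the point the paper relies on, so there is nothing to add.
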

\begin{proof}
    We denote $\had^{\otimes k}_A$ by $\had_A$ for simplicity. The proof follows the same argument as Claim \ref{clm:dmagic-1}. If $G$ is \textbf{Type I}, let $Q\in G$ be the Pauli satisfying the definition. Then, $\bra\psi\had_A\ket\psi=\bra\psi\had_AQ\ket\psi=\bra\psi (-Q)\had_A\ket\psi=-\bra\psi\had_A\ket\psi=0$, since $\had Y \had= -Y$ and there is an odd number of $Y$'s supported on $A$. So, $\bra{\psi}h\ket{\psi} = \frac{1}{2}\left(1-\bra\psi\had_A\ket\psi\right)=1/2$.

    If $G$ is \textbf{Type II}, let $P\in G$ be the Pauli satisfying the definition. Recall that $\had_A P\had_A\in\stab(\had_A\ket\psi)$ and notice $\llbracket\had_A P\had_A , P\rrbracket=-1$ since the two Pauli operators only anticommute on the qubits within $A$ that contain an $X$ or $Z$ and there is an odd number of such qubits. Since $\ket\psi$ and $H_A \ket\psi$ are stabilized by elements that anticommute with each other, \cref{lem:fidelitybound} can be applied giving the bound: $\abs{\bra\psi \had_A\ket\psi}\leq \frac{1}{\sqrt{2}}$, and so $\bra{\psi}h\ket{\psi}=\frac{1}{2}\left(1-\bra\psi\had_A\ket\psi\right) \geq \sin^2\left(\frac{\pi}{8}\right)$.
\end{proof}

Next, we must show that if $\ket\psi$ is a pseudo-stabilizer state at $A$ then $G$ must be of \textbf{Type I} or \textbf{Type II}. Recall that by definition of being a pseudo-stabilizer state at $A$, there exists some subset $S \subseteq \stab(\ket\psi)$ such that $\rho_A(S)$ is a maximally-sized commuting subgroup, meaning its dimension is $|A|$. In Lemma \ref{lemma:stabilizerodd}, we will show that any maximally-sized commuting subgroup on an odd number of qubits always has an element satisfying one of the conditions in \cref{def:stab-group-types}, and therefore $G$ must be at least one of \textbf{Type I} or \textbf{Type II} at $A$.

\begin{lemma}
    \label{lemma:stabilizerodd}
    Let $K\leq\hat\mcP_k$ be a commuting subgroup on $k$ qubits for odd $k$. If $\dim K=k$, then at least one of the following must hold: (1) there exists a $Q^*\in K$ which contains an odd number of $Y$'s and no $X$'s or $Z$'s or (2) there exists a $P^*\in K$ whose total number of $X$'s and $Z$'s is odd.
\end{lemma}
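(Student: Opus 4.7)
My plan is to argue by contrapositive: I assume condition (2) fails and derive condition (1), producing the explicit witness $Q^* = Y^{\otimes k}$. This witness is natural because $k$ is odd, so $Y^{\otimes k}$ has $k$ (odd) $Y$'s and no $X$'s or $Z$'s, exactly the structure demanded by condition (1). The whole proof hinges on showing $Y^{\otimes k} \in K$ under the assumed hypothesis.

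First, I would record a one-line single-qubit calculation: since $Y$ commutes with $\eye$ and $Y$ but anticommutes with $X$ and $Z$, for any Pauli $P = P_1 \otimes \cdots \otimes P_k$ we have
\begin{equation*}
\llbracket Y^{\otimes k}, P \rrbracket = (-1)^{c(P)}, \qquad c(P) \equiv \#\{i \in [k] : P_i \in \{X, Z\}\}.
\end{equation*}
Condition (2) failing is precisely the statement that every $P \in K$ has $c(P)$ even, so by the above, $Y^{\otimes k}$ commutes (in the sense of $\hat\mcP_k$) with every element of $K$.

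Next, I would use the maximality hypothesis $\dim K = k$ to upgrade ``commutes with all of $K$'' to ``lies in $K$.'' Concretely, if $Y^{\otimes k} \notin K$, then $\langle K, Y^{\otimes k} \rangle$ would be a commuting subgroup of $\hat\mcP_k$ of dimension $k+1$, contradicting \cref{clm:max-Abelian}. Hence $Y^{\otimes k} \in K$, and taking $Q^* = Y^{\otimes k}$ verifies condition (1) since $k$ is odd.

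I do not anticipate any serious obstacle: the argument is a few lines once one notices that the symmetry between ``$X$'s and $Z$'s'' in condition (2) and ``$Y$'s'' in condition (1) is exactly the $Y$-versus-$\{X,Z\}$ anticommutation pattern. The only place the parity of $k$ actually enters is in the last step (to ensure $Y^{\otimes k}$ has an \emph{odd} number of $Y$'s); without $k$ odd, the same argument would still force $Y^{\otimes k} \in K$, but this element would not witness either condition, which is consistent with the fact that the lemma genuinely requires odd $k$.
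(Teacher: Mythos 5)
Your proof is correct, and it takes a genuinely different and substantially shorter route than the paper's. The paper proves \cref{lemma:stabilizerodd} by running Gauss--Jordan elimination on the binary matrix recording the $X/Z$-support of a generating set, bringing the generators into the block form $(\eye\;\;B)$ plus $Y$-type generators, identifying the latter with the kernel of $(\eye\;\;B)$ via $(B^T\;\;\eye)$, and then casing on the column parities of $B$. You instead observe that the parity appearing in condition (2) is exactly the commutation pairing with $Y^{\otimes k}$: the negation of (2) says $Y^{\otimes k}$ commutes with all of $K$, and then maximality ($\dim K = k$ together with \cref{clm:max-Abelian}) forces $Y^{\otimes k}\in K$, which witnesses (1) since $k$ is odd. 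The only steps you leave implicit --- that $\langle K, Y^{\otimes k}\rangle$ is commuting because $\llbracket PQ,R\rrbracket=\llbracket P,R\rrbracket\llbracket Q,R\rrbracket$, and that its dimension is $k+1$ because $\hat\mcP_k$ is an elementary abelian $2$-group so adjoining one new element doubles the subgroup --- are routine in the paper's formalism (they are exactly the symplectic vector-space facts of \cref{app:symplectic}). Your argument buys brevity, a universal explicit witness ($Q^*=Y^{\otimes k}$ whenever (2) fails), and a clean identification of where both hypotheses enter; your closing remark that odd $k$ is used only to make $Y^{\otimes k}$ a valid witness is also correct (e.g.\ $K=\langle XX,ZZ\rangle$ shows the lemma fails for $k=2$). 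What the paper's longer argument buys is a constructive canonical generating set in the style of its canonical-basis machinery, but for the purposes of this lemma your proof is strictly preferable.
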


\begin{proof}
We will find such an element. In particular, given an independent generating set, $\mathcal{S}\equiv\{S^{(i)}\}_{i=1}^k$, of $K$, we will construct a new generating set, $\Tilde{\mathcal{S}}$, for which it will be easy to construct an element satisfying either (1) or (2). We define the following map from phase-less Pauli operators to $k$-bit strings, $f:\hat\mcP_k\rightarrow\{0,1\}^k$, which treats $X$ and $Z$ as 1 and treats $Y$ and $\eye$ as 0:
\begin{equation}
    f(P)_i = \begin{cases}
        1 & \text {if } P_i=X \text{ or }Z, \\
        0 & \text {if } P_i=\eye \text{ or }Y.
    \end{cases}
\end{equation}
Note that $f$ is a group homomorphism. The goal of defining $f$ is that we can use vector operations on elements in $f(\mathcal{S})$ to produce new generating sets for $K$.

We also define a matrix version of $f$ which takes an ordered set of $r$ Pauli operators, which we denote by $\hat\mcP_k^r$, and produces a binary matrix. Let $F:\hat\mcP_k^r\rightarrow\{0,1\}^{r\cross k}$ map a set of operators $\{S^{(i)}\}_{i=1}^r$ to a binary matrix whose $i,j$ entry is given by $F(\{S^{(i)}\}_{i=1}^r)_{i,j} = f(S^{(i)})_j$. In particular, the $i$-th row of $F(\{S^{(i)}\}_{i=1}^r)$ is precisely the bit string $f(S^{(i)})$. By abuse of notation we do not include the value $r$ in the definition of $F$, but it will be clear from the input set how many rows the output of $F$ will have.

Now, we perform Gauss-Jordan elimination on the matrix $F(\mathcal{S})$ and at each step of the algorithm we perform the corresponding product operation on the set $\mathcal{S}$ (which will change throughout the algorithm). 
Replacing a generator with a product of itself and another generator will always produce a set of independent generators that generates the same group. Because $f$ is a group homomorphism, this Gauss-Jordan procedure will produce a new set of generators that is in reduced row echelon form with respect to the location of the $X$'s and $Z$'s.
Up to a reordering of the qubits, the new set of generators $\Tilde{\mathcal{S}}\equiv \{\tilde{S}^{(i)}\}_{i=1}^k$ will have the form:
\begin{equation}
    F(\Tilde{\mathcal{S}}) = \begin{pmatrix}[cc]
                    \eye_\ell & B \\
                    0_{(k-\ell)\times \ell} & 0_{k-\ell}
        \end{pmatrix},
\end{equation}
for some arbitrary binary matrix $B$ with $\ell$ rows, and $k-\ell$ all 0 rows. Note that either $\ell=k$ or $\ell=0$ is possible. We will handle these cases at the end.

Consider $0 < \ell < k$. The last $k-\ell$ stabilizer generators $\{\tilde{S}^{(i)}\}_{i=\ell+1}^k$ do not contain any $X$'s or $Z$'s; in other words, these generators are $Y$-type Pauli operators, the set of which we denote by $\hat\mcP_{Y,k}$. We now define a mapping similar to $f$ which only acts on these $Y$-type operators. Define $f_Y:\hat\mcP_{Y,k}\rightarrow\{0,1\}^k$ as
\begin{equation}
    f_Y(P)_i = \begin{cases}
        1 & \text {if } P_i=Y, \\
        0 & \text {if } P_i=\eye,
    \end{cases}
\end{equation}
and similarly define $F_Y:\hat\mcP_{Y,k}^r\rightarrow\{0,1\}^{r\cross k}$ which maps an ordered set of $Y$-type operators to a binary matrix given by $F_Y(\{S^{(i)}\}_{i=1}^r)_{i,j} = f_Y(S^{(i)})_j$. Note that $f_Y$ is, in fact, a group isomorphism.

Since the last $k-\ell$ generators only contain $Y$'s and $\eye$'s, they can only anti-commute with generators whose locations of $X$'s and $Z$'s overlap with the locations of $Y$'s by an odd amount. Therefore, the requirement that all generators in $\Tilde{\mathcal{S}}$ commute is equivalent to
\begin{align}
    F(\{\tilde{S}^{(i)}\}_{i=1}^\ell) F_Y(\{\tilde{S}^{(i)}\}_{i=\ell+1}^k)^T &= 0,\\
    \begin{pmatrix}[cc]
                    \eye & B
        \end{pmatrix} F_Y(\{\tilde{S}^{(i)}\}_{i=\ell+1}^k)^T &= 0,
\end{align}
where addition is performed modulo 2.
In other words, the row space of $F_Y(\{S_i\}_{i=\ell+1}^k)$ is orthogonal to the row space of $(\eye \;\; B)$. Note that $F_Y(\{\tilde{S}^{(i)}\}_{i=\ell+1}^k)$ has rank $k-\ell$ since each generator is independent and the mapping $f_Y$ is an isomorphism. 

Since $(\eye\;\; B)$ has rank $\ell$, by the rank-nullity theorem the kernel of $(\eye\;\; B)$ has dimension $k-\ell$. Since the row space of $F_Y(\{S_i\}_{i=\ell+1}^k)$ is in the kernel of $(\eye\;\;B)$ and its dimension equals that of the kernel, the row space of $F_Y(\{S_i\}_{i=\ell+1}^k)$ is precisely the kernel of $(\eye\;\; B)$. Therefore, any matrix whose rows generate the kernel of $(\eye\;\; B)$ will also generate the row space of $F_Y(\{S_i\}_{i=\ell+1}^k)$.

One such matrix is $(B^T \;\; \eye)$, since $(\eye\;\; B)(B^T \;\; \eye)^T = B+B \equiv 0\mod 2$. Again using the fact that $f_Y$ is an isomorphism, for each row, $b$, of $(B^T \;\; \eye)$ we can find a unique pre-image, $f_Y^{-1}(b)\in K$. In particular, $F_Y^{-1}((B^T \;\; \eye))\equiv\{\Hat{S}^{(i)}\}_{i=\ell+1}^k$ is well defined, and $\{\Hat{S}^{(i)}\}_{i=\ell+1}^k$ and $\{\Tilde{S}^{(i)}\}_{i=\ell+1}^k$ generate the same subgroup of $K$. Altogether, $K$ is generated by $\{\tilde{S}^{(i)}\}_{i=1}^\ell\cup \{\Hat{S}^{(i)}\}_{i=\ell+1}^k$, where $F(\{\tilde{S}^{(i)}\}_{i=1}^\ell) = (\eye\;\; B)$, elements of $\{\Hat{S}^{(i)}\}_{i=\ell+1}^k$ are $Y$-type operators, and $F_Y(\{\Hat{S}^{(i)}\}_{i=\ell+1}^k)=(B^T\;\;\eye)$.

Finally, we are ready to find an element of the stabilizer group with the desired property. If $B$ has a column with even weight then there is a row of $B^T$ with even weight and therefore there exists some generator, $Q^*\in F_Y^{-1}((B^T \;\; \eye))=\{\Hat{S}^{(i)}\}_{i=\ell+1}^k$, with an odd number of $Y$'s and no $X$'s or $Z$'s. This $Q^*$ satisfies condition (1) of the lemma. On the other hand, if every column of $B$ has odd weight then let $P^*$ be the product of all of the elements in $\{\tilde{S}^{(i)}\}_{i=1}^\ell$. Since each column in $(\eye\;\; B)$ has odd weight, the sum of all of the rows is the all ones vector. Therefore $P^*$ will consist solely of $X$ and $Z$ type operators, and since $k$ is odd $P^*$ satisfies condition (2) of the lemma.

Lastly, we consider the edge cases. If $\ell = 0$ there are no $X$'s or $Z$'s contained in any element of the stabilizer group. Since the group has dimension $k$ it must equal all of $\hat\mcP_{Y,k}$, and so the operator $Q^*\in K$ which consists of all $Y$'s will satisfy condition (1). If $\ell = k$ then Gauss-Jordan elimination produces a set of generators such that $F(\{\tilde{S}^{(i)}\}_{i=1}^k) = \eye$ and so multiplying these generators together will result in a $P^*$ satisfying condition (2).
\end{proof}
\begin{corollary}\label{cor:pseudo-stab-implies-types}
    If $\ket\psi$ is pseudo-stabilizer at $A$ ($\abs{A}$, odd) then $G$ is \textbf{Type I} or \textbf{Type II} at $A$.
\end{corollary}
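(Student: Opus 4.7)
The plan is to derive this corollary as a direct application of \cref{lemma:stabilizerodd} to the local view at $A$ of the subset witnessing the pseudo-stabilizer property. By hypothesis, there is a subset $S\subseteq G$ that is $A$-locally-commuting and satisfies $\dim\rho_A(S)=\abs{A}$. As noted in the definition of the pseudo-stabilizer property, we may take $S$ to be a subgroup of $G$, which makes the subsequent bookkeeping slightly cleaner (though not strictly necessary). Since every element of $\rho_A(S)$ acts as identity off $A$, $\rho_A(S)$ is canonically isomorphic to a commuting subgroup $K\leq\hat\mcP_{\abs{A}}$ of dimension $\abs{A}$ on the qubits of $A$.

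Next, I would invoke \cref{lemma:stabilizerodd} on $K$, using that $\abs{A}$ is odd. This produces some $R\in K$ satisfying either condition~(1) --- only $Y$'s and $\eye$'s on $A$, with an odd number of $Y$'s --- or condition~(2) --- an odd total number of $X$'s and $Z$'s on $A$. Identifying $R$ with the corresponding element $\Tilde{R}\in\rho_A(S)$ (which agrees with $R$ on $A$ and is identity elsewhere), I would lift to $G$ by choosing any preimage $P\in S\subseteq G$ with $\rho_A(P)=\Tilde{R}$; such a $P$ exists because $\rho_A|_S$ surjects onto $\rho_A(S)$. Since the local view $\rho_A(P)$ has exactly the Pauli content of $R$ on $A$, $P$ certifies that $G$ is Type I at $A$ in case~(1), and Type II at $A$ in case~(2).

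The main obstacle is essentially nonexistent: all of the substantive combinatorics has been absorbed into \cref{lemma:stabilizerodd}. The only step that requires a moment of care is verifying that $\rho_A(S)$ sits inside $\hat\mcP_n$ as a dimension-$\abs{A}$ commuting subgroup supported entirely on $A$, so that \cref{lemma:stabilizerodd} applies on exactly $\abs{A}$ qubits. This is immediate from the definition of $\rho_A$ (which sends coordinates outside $A$ to identity) together with the hypothesis $\dim\rho_A(S)=\abs{A}$, after which the lemma does all of the remaining work.
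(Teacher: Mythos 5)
Your proposal is correct and follows essentially the same route as the paper: apply \cref{lemma:stabilizerodd} to the maximal-dimension commuting local view $\rho_A(S)$ (identified with a subgroup of $\hat\mcP_{\abs{A}}$), then lift the resulting witness back to a preimage $P\in S\subseteq G$, which certifies \textbf{Type I} or \textbf{Type II} since both types are defined purely in terms of the local view at $A$.
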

\begin{proof}
    Since $\ket\psi$ is a pseudo-stabilizer state at $A$ there exists some  $S\subseteq \stab(\ket\psi)$ such that $\rho_A(S)$ is a commuting subgroup with dimension $|A|$. Applying \cref{lemma:stabilizerodd} to $\rho_A(S)$, we have either a $Q^*\in\rho_A(S)$ which contains an odd number of $Y$'s and no $X$'s or $Z$'s or a $P^*\in \rho_A(S)$ whose total number of $X$'s and $Z$'s is odd. In the former case we can find $Q\in S$ such that $\rho_A(Q)=Q^*_A\otimes\eye_{[n]\setminus A}$, and this $Q$ shows that $G$ is \textbf{Type I}. The latter case similarly shows that $G$ is \textbf{Type II}.
\end{proof}
The main result of this section is now a simple consequence of the above work.
\localboundcomplex*
\begin{proof}
    By \cref{cor:pseudo-stab-implies-types}, $\stab(\ket\psi)$ is \textbf{Type I} or \textbf{Type II} at $A$, so the bound holds by \cref{lemma:odd}.
\end{proof}

\subsection{Many terms satisfy the condition}\label{subsec:many-pseudo-stabilizer}
Suppose $\ket\psi$ is an $n$-qubit state with stabilizer group $G\equiv\stab(\ket\psi)$. The goal of this section is to prove \cref{clm:condition}, which essentially says that $\ket\psi$ is pseudo-stabilizer on many subsets of $k$-qubits as long as $\ket\psi$ can be prepared with not too many Pauli-rotation gates. This will be done in two stages:
\begin{enumerate}
    \item We will give an upper bound on the dimension of $G$ in terms of locally-commuting subsets of $G$.
    \item The lower bound on $\dim G$ from \cref{lem:stab-dim-bound} then implies that many of the local views of these subsets will have maximal size.
\end{enumerate}
For the interested reader, in \cref{subapp:many-pseudo-stabilizer} we give alternative proofs of many of the results in this section using the formalism of symplectic vector spaces.

We first give some known facts about Pauli subgroups \cite{FCY+04}. We include proofs for the sake of completeness.
\begin{definition}\label{def:canonical-basis}
    Let $M\leq\hat\mcP_n$ be any subgroup of the phase-less Pauli group. The \textbf{center} of $M$ is the subgroup of $M$ which commutes with everything in $M$, $Z(M)\equiv\br{P\in M\mid \llbracket P,Q\rrbracket=1 \text{ for all }Q\in M}$. A \textbf{canonical basis for $M$} is a collection of three subsets of $M$, $$\br{\mcS\equiv\br{S^{(j)}}_{j=1}^r, \mcX\equiv\br{\overline{X}^{(j)}}_{j=1}^\ell, \mcZ\equiv\br{\overline{Z}^{(j)}}_{j=1}^\ell},$$ which satisfy the following:
    \begin{enumerate}
        \item All of the elements are independent,
        \item $\langle \mcS\rangle = Z(G)$, i.e. the $S^{(j)}$'s generate the center of $G$,
        \item For all $i,j\in[\ell]$, $\llbracket \overline{X}^{(i)},\overline{Z}^{(j)}\rrbracket = (-1)^{\delta_{i,j}}$, i.e. $\overline{X}^{(i)}$ anti-commutes \emph{only} with $\overline{Z}^{(j)}$,
        \item $\langle\mcS\cup \mcX\cup \mcZ\rangle = M$, i.e. the union generates all of $M$.
    \end{enumerate}
    $\br{\mcS,\mcX,\mcZ}$ is called a \textbf{canonical basis} if it is a canonical basis for some $M\leq\hat\mcP_n$.
\end{definition}

\begin{fact}
    \label{fact:canonicalgenerators}
    Every $M\leq\hat\mcP_n$ has a canonical basis.
\end{fact}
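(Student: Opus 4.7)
The plan is to exploit the $\FF_2$-vector space structure on $\hat\mcP_n$ (every nontrivial element squares to $\eye$) together with the alternating bilinear form $\omega(P,Q)\in\FF_2$ defined by $\llbracket P,Q\rrbracket = (-1)^{\omega(P,Q)}$. Under this identification, $M \le \hat\mcP_n$ becomes a subspace and the center $Z(M)$ is precisely the radical of $\omega|_M$. Constructing a canonical basis is then exactly the task of exhibiting a decomposition $M = Z(M) \oplus H_1 \oplus \cdots \oplus H_\ell$, where each $H_j$ is a two-dimensional hyperbolic plane spanned by an anti-commuting pair $(\overline{X}^{(j)}, \overline{Z}^{(j)})$.

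First I would pick any independent generating set $\mcS = \{S^{(1)},\ldots,S^{(r)}\}$ of $Z(M)$, which immediately handles items (1) and (2) of Definition~\ref{def:canonical-basis} for the center component. Next, I would pass to the quotient $M/Z(M)$, on which $\omega$ descends to a well-defined, non-degenerate alternating form (well-defined because every element of $Z(M)$ pairs trivially with every element of $M$; non-degenerate because we have quotiented out precisely the radical). On this quotient I would build the $\overline{X}^{(j)}, \overline{Z}^{(j)}$ pairs by a symplectic Gram--Schmidt procedure: pick any nonzero coset $\overline{X}^{(1)}$, use non-degeneracy to locate some $\overline{Z}^{(1)}$ with $\omega(\overline{X}^{(1)}, \overline{Z}^{(1)}) = 1$, and then restrict attention to the orthogonal complement of $\mathrm{span}(\overline{X}^{(1)}, \overline{Z}^{(1)})$ in $M/Z(M)$, which is again a non-degenerate symplectic subspace of dimension two smaller. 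Iterating yields $\ell$ hyperbolic pairs and forces $\dim(M/Z(M)) = 2\ell$.

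Finally, I would lift each coset representative arbitrarily back to an element of $M$. Because $\omega$ descends to the quotient, the lifts automatically satisfy the anti-commutation pattern required in item (3), independently of the specific choice of representative. Item (4)---that $\mcS \cup \mcX \cup \mcZ$ generates $M$---follows because $\mcX \cup \mcZ$ projects onto a basis of $M/Z(M)$ while $\mcS$ generates the kernel of the projection. Item (1)---independence of the full union---follows because any nontrivial relation among $\mcS \cup \mcX \cup \mcZ$ would either project to a nontrivial relation among the $\overline{X}^{(j)}, \overline{Z}^{(j)}$ in $M/Z(M)$ (impossible since those cosets are independent by construction) or reduce to a relation within $\mcS$ (impossible since $\mcS$ is a basis of $Z(M)$).

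The step most deserving of care is the symplectic reduction itself---verifying that restricting $\omega$ to the orthogonal complement of a hyperbolic pair yields a non-degenerate form on a subspace of dimension exactly two smaller, so that the induction closes. This is standard $\FF_2$-symplectic linear algebra: every element of $M/Z(M)$ admits a unique decomposition as the sum of a piece in $\mathrm{span}(\overline{X}^{(1)},\overline{Z}^{(1)})$ and a piece in its perpendicular complement (using the values of $\omega$ against $\overline{X}^{(1)}$ and $\overline{Z}^{(1)}$ as ``coordinates''), and any element in the radical of the restriction would pair trivially with $\overline{X}^{(1)}, \overline{Z}^{(1)}$ and with its complement alike, hence lie in the radical of $\omega|_{M/Z(M)}$, contradicting non-degeneracy on the quotient.
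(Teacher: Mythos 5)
Your proof is correct. It reaches the same decomposition as the paper --- a generating set split into a basis of the center plus hyperbolic (anti-commuting) pairs --- but by a genuinely different route. The paper works directly inside $M$: it takes an arbitrary independent generating set, locates an anti-commuting pair $(g_1,g_2)$, and ``sweeps'' every other generator into the commutant of that pair by multiplying it by $g_2$ or $g_1$ as needed, iterating until only mutually commuting generators remain; the center is then identified \emph{a posteriori} as whatever is left unpaired. You instead identify $Z(M)$ as the radical of the form up front, pass to the quotient $M/Z(M)$ where the induced form is non-degenerate, run symplectic Gram--Schmidt there, and lift representatives back to $M$. The two arguments are the same Witt-type reduction in different clothing: the paper's version is more elementary and constructive (it never leaves the group and makes the generator updates explicit, which is convenient if one actually wants to compute a canonical basis), while yours cleanly separates the degenerate part from the non-degenerate part, makes the invariance $\dim(M/Z(M)) = 2\ell$ manifest, and dovetails with the symplectic formalism the paper develops in \cref{app:symplectic} (compare \cref{lem:degenerate-radical-sum}, which is exactly your decomposition $M = Z(M)\oplus H_1\oplus\cdots\oplus H_\ell$ stated for subspaces). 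Your closing verification that the orthogonal complement of a hyperbolic plane is again non-degenerate of dimension two smaller is the right point to insist on, and the argument you give for it is sound.
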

\begin{proof}
    First find any independent set of generators $H$ for the subgroup $M$. Now consider any two generators $g_1$, $g_2 \in H$ that anti-commute with each other. Replace any $g_i \in H$ that anti-commutes with $g_1$ with $g_2g_i$. $g_1$ and $g_2g_i$ now commute whereas the commutation relation between $g_2$ and $g_2g_i$ is the same as $g_2$ and $g_i$. We can now repeat the process to ensure that $g_2$ also commutes with all other generators by replacing any $g_j \in H$ that anti-commutes with $g_2$ with $g_1g_j$. This process ensures that all the new generators of $M$ commute with $g_1$ and $g_2$. This can be repeated for any pair of anti-commuting generators and each such pair is labeled by $\overline{X}^{(j)}$ and $\overline{Z}^{(j)}$ in the resulting set of generators. The unpaired generators at the end will commute with all other generators and so they form the generators of $Z(M)$ and are labeled by $S^{(j)}$.
\end{proof}
\begin{fact}\label{fact:logicals-in-canonical}
    If $\br{\mcS,\mcX,\mcZ}$ is a canonical basis, then the group generated by $\mcX$ and $\mcZ$ is isomorphic to a Pauli group, $\langle \mcX\cup\mcZ\rangle\cong\hat\mcP_{\abs{X}}$. $\langle \mcX\cup\mcZ\rangle$ is known as the \textbf{logical part} of $\langle\mcS\cup\mcX\cup\mcZ\rangle$.
\end{fact}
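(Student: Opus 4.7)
The plan is to construct an explicit isomorphism $\phi : \hat\mcP_\ell \to \langle \mcX\cup\mcZ\rangle$ (where $\ell \equiv |\mcX|$) by setting $\phi(X_j) = \overline{X}^{(j)}$ and $\phi(Z_j) = \overline{Z}^{(j)}$ for each $j\in[\ell]$, and extending multiplicatively. Since both $\hat\mcP_\ell$ and $\hat\mcP_n$ are elementary abelian $2$-groups, the only relations needed for $\phi$ to extend to a well-defined group homomorphism are that each image squares to $\eye$ and that the images pairwise commute as elements of the group, and both hold automatically in $\hat\mcP_n$. Surjectivity is immediate from the construction, and for injectivity it suffices to compare cardinalities: $|\hat\mcP_\ell| = 4^\ell$, and by Property~1 of \cref{def:canonical-basis} the $2\ell$ elements of $\mcX\cup\mcZ$ are independent in the elementary abelian $2$-group $\hat\mcP_n$, so they generate a subgroup of order $2^{2\ell} = 4^\ell$, forcing $\phi$ to be a bijection.

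To upgrade this to an isomorphism of ``Pauli-like'' structures---that is, one that additionally preserves the commutator bracket $\llbracket\cdot,\cdot\rrbracket$, which is the structurally interesting content of the statement---I would verify agreement of brackets on generators. Property~3 of \cref{def:canonical-basis} directly supplies $\llbracket \overline{X}^{(i)}, \overline{Z}^{(j)}\rrbracket = (-1)^{\delta_{i,j}} = \llbracket X_i, Z_j\rrbracket$. The remaining relations $\llbracket \overline{X}^{(i)}, \overline{X}^{(j)}\rrbracket = +1$ and $\llbracket \overline{Z}^{(i)}, \overline{Z}^{(j)}\rrbracket = +1$, matching the corresponding relations in $\hat\mcP_\ell$, follow from the pair-by-pair construction used to prove \cref{fact:canonicalgenerators}: each time a pair $(\overline{X}^{(j)}, \overline{Z}^{(j)})$ is isolated, every remaining generator is explicitly modified to commute with both members. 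Since $\llbracket\cdot,\cdot\rrbracket$ descends bilinearly from commutator phases on $\mcP_n$, agreement on generators extends to all elements of $\langle\mcX\cup\mcZ\rangle$.

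The main subtlety lies in this last step: the axiomatic definition of a canonical basis only records the $\overline{X}$--$\overline{Z}$ commutation relations explicitly, and one must appeal to the specific construction in \cref{fact:canonicalgenerators} (not merely to the definition itself) to conclude that the $\overline{X}^{(i)}$'s mutually commute and similarly for the $\overline{Z}^{(j)}$'s. Once this observation is in place, the remainder of the argument amounts to routine checking of relations, together with the straightforward cardinality count above.
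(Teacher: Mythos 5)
Your proof is correct and follows essentially the same route as the paper's: both define the generator-matching map between $\langle\mcX\cup\mcZ\rangle$ and $\hat\mcP_{\abs{\mcX}}$ and conclude it is an isomorphism because the two generating sets obey the same commutation relations. You supply more detail than the paper does, and your observation that the mutual commutation of the $\overline{X}^{(i)}$'s (and of the $\overline{Z}^{(j)}$'s) is guaranteed by the construction in \cref{fact:canonicalgenerators} rather than by the literal wording of \cref{def:canonical-basis} is a fair point that the paper glosses over.
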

\begin{proof}
    Define a map $\varphi:\langle \mcX\cup\mcZ\rangle\rightarrow\mcP_{\abs{X}}$ on the generators $\mcX\cup\mcZ$ as follows:
    \begin{enumerate}
        \item $\overline{X}^{(i)}\mapsto X_i$ where $X_i$ is the $\abs{\mcX}$-qubit Pauli operator with $i$-th entry equal to $X$ and $\eye$ elsewhere.
        \item $\overline{Z}^{(i)}\mapsto Z_i$ where $Z_i$ is the $\abs{\mcX}$-qubit Pauli operator with $i$-th entry equal to $Z$ and $\eye$ elsewhere.
    \end{enumerate}
    Since $\mcX\cup\mcZ$ and $\br{X_i}\cup\br{Z_i}$ obey the same commutation relations this map is an isomorphism.
\end{proof}

\begin{definition}
    For a subgroup $M\leq \hat\mcP_n$, let $c(M)$ denote the \textbf{dimension of the largest commuting subgroup} of $M$. That is, there is a commuting subgroup $K\leq M$ with $\dim K=c(M)$, and for any commuting subgroup $K'\leq M\Rightarrow\dim K'\leq \dim K$.
\end{definition}

\begin{fact}
    \label{lemma:maxAbeliansubgroup}
    Suppose $M\leq\hat\mcP_n$ has canonical basis $\br{\mcS, \mcX,\mcZ}$, with $\abs{\mcS}=r$ and $\abs{\mcX}=\abs{\mcZ}=\ell$. Then $c(M)=r+\ell$.
\end{fact}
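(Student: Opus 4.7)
The plan is to prove the two inequalities separately, leveraging the product structure afforded by the canonical basis.

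For the lower bound $c(M)\geq r+\ell$, I would exhibit $\langle\mcS\cup\mcX\rangle$ as an explicit commuting subgroup of dimension $r+\ell$. Each element of $\mcS$ lies in $Z(M)$, so it commutes with everything in $M$; and by \cref{fact:logicals-in-canonical} the elements of $\mcX$ correspond under the isomorphism $\varphi$ to $\{X_1,\dots,X_\ell\}\subseteq\hat\mcP_\ell$, which are pairwise commuting. Independence of the canonical basis (property 1 of \cref{def:canonical-basis}) gives $\dim\langle\mcS\cup\mcX\rangle=r+\ell$.

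For the upper bound $c(M)\leq r+\ell$, let $K\leq M$ be any commuting subgroup. Since $\hat\mcP_n$ is Abelian and $\mcS\cup\mcX\cup\mcZ$ is an independent generating set of $M$, every $k\in K$ admits a unique decomposition $k=s_k\,x_k\,z_k$ with $s_k\in\langle\mcS\rangle$, $x_k\in\langle\mcX\rangle$, and $z_k\in\langle\mcZ\rangle$. Define
\[
    \pi:K\to \langle\mcX\cup\mcZ\rangle,\qquad \pi(k)=x_k z_k.
\]
Because $M$ is Abelian (as a subgroup of $\hat\mcP_n$), $\pi$ is a group homomorphism. The crucial step is observing that $\pi(K)$ is itself a commuting subgroup: for any $k_1,k_2\in K$, since $s_{k_1},s_{k_2}$ are central in $M$ we have $\llbracket k_1,k_2\rrbracket=\llbracket x_{k_1}z_{k_1},x_{k_2}z_{k_2}\rrbracket=\llbracket \pi(k_1),\pi(k_2)\rrbracket$, and this equals $+1$ because $K$ is commuting. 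Transporting through the isomorphism of \cref{fact:logicals-in-canonical}, $\pi(K)$ is identified with a commuting subgroup of $\hat\mcP_\ell$, so \cref{clm:max-Abelian} gives $\dim\pi(K)\leq\ell$.

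To conclude, I would bound the kernel: $k\in\ker\pi$ iff $x_k z_k=\eye$, in which case $k=s_k\in\langle\mcS\rangle$, so $\ker\pi\leq\langle\mcS\rangle$ and $\dim\ker\pi\leq r$. The first isomorphism theorem then yields $\dim K=\dim\ker\pi+\dim\pi(K)\leq r+\ell$, completing the upper bound. The main subtlety is guaranteeing the uniqueness of the decomposition $k=s_k x_k z_k$ and that the induced commutation relation on $\pi(K)$ really does pull back the commutation on $K$; both points follow directly from the independence axiom of \cref{def:canonical-basis} together with \cref{fact:logicals-in-canonical}, so no new structural result is needed beyond what has already been established.
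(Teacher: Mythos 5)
Your proof is correct and follows essentially the same route as the paper's: both split $K$ into its central part (bounded by $r$ via $\mcS$) and its image in the logical part $\langle\mcX\cup\mcZ\rangle\cong\hat\mcP_\ell$ (bounded by $\ell$ via \cref{clm:max-Abelian}), and both realize the bound with an explicit commuting subgroup. Your projection homomorphism $\pi$ and the first isomorphism theorem make rigorous a step the paper states only informally (that a generating set of a maximal $K$ can be taken as $\mcS\cup\mathcal{L}$ with $\mathcal{L}\subseteq\langle\mcX\cup\mcZ\rangle$), so no gap remains.
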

\begin{proof}
    Let $\mathcal{K}$ be a generating set of a maximal commuting subgroup $K\leq M$ where $\dim K = \abs{\mathcal{K}}= c(M)$. Note that $Z(M)\leq K$: if $P\in Z(M)\setminus K$ then $\langle P,\mathcal{K}\rangle$ would be a commuting subgroup of $M$ strictly larger than $K$, contradicting the maximality of $K$. Thus, $\mathcal{K} = \mcS\cup\mathcal{L}$ where $\mcS=\br{S_j}$ is the generating set of $Z(M)$ from the canonical basis, and $\mathcal{L}\subseteq \langle\mcX\cup\mcZ\rangle$ is a commuting set of independent operators in the logical part of $M$. The value of $c(M)$ therefore depends on the maximal number of commuting independent elements in $\langle \mcX\cup\mcZ\rangle$.
    
    Thus \cref{fact:logicals-in-canonical}, $\langle \mcX\cup\mcZ\rangle\cong\mcP_\ell$, so by \cref{clm:max-Abelian} any maximal commuting subgroup of $\langle \mcX\cup\mcZ\rangle$ has dimension $\ell$. So $\abs{\mathcal{L}}=\ell\Rightarrow\abs{\mathcal{K}}= \abs{\mathcal{S}}+ \abs{\mathcal{L}}=r+\ell$, which completes the proof.
\end{proof}

We now prove the first ingredient for our proof of \cref{clm:condition}, namely, an upper bound on the dimension of stabilizer groups.

\begin{lemma}
    \label{lemma:stabilizerdimension}
    Let $G\leq\mcP_n$ be a stabilizer group and $\br{A_i}_{i=1}^p$ any disjoint covering of $[n]$. Consider the local views of $G$, $\rho_{A_i}(G)$, from \cref{def:local-view}. The following inequality holds:
    \begin{align*}
        \dim G \leq \sum_{i=1}^p c(\rho_{A_i}(G)).
    \end{align*}
\end{lemma}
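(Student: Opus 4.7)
The plan is to leverage canonical bases to decompose each local view of $\hat G$ into a ``center'' part and a ``logical'' part, and then bound their contributions to $\dim \hat G = \dim G$ separately. First, I would use Fact~\ref{fact:canonicalgenerators} to pick, for each $i$, a canonical basis $\{\mcS_i, \mcX_i, \mcZ_i\}$ of $V_i \equiv \rho_{A_i}(\hat G)$, with $\abs{\mcS_i} = r_i$ and $\abs{\mcX_i} = \abs{\mcZ_i} = \ell_i$, so that Fact~\ref{lemma:maxAbeliansubgroup} gives $c(V_i) = r_i + \ell_i$. Since $\langle\mcS_i\rangle$ is central in $V_i$ and intersects $\langle \mcX_i \cup \mcZ_i\rangle$ trivially, each $\rho_{A_i}(g)$ factors uniquely as $\rho_{A_i}(g) = s_i(g)\cdot l_i(g)$ with $s_i(g)\in\langle\mcS_i\rangle$ and $l_i(g)\in\langle\mcX_i\cup\mcZ_i\rangle$, and the maps $g\mapsto s_i(g)$ and $g\mapsto l_i(g)$ are $\FF_2$-linear.

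Next, I would study the ``logical'' map $\tilde\phi:\hat G \to \prod_i \langle\mcX_i\cup\mcZ_i\rangle$ defined by $g\mapsto (l_i(g))_i$. Because the $A_i$'s are disjoint, the product group embeds into $\hat\mcP_n$ as a subgroup isomorphic, by Fact~\ref{fact:logicals-in-canonical}, to $\hat\mcP_{\sum_i \ell_i}$. The key step is to show $\tilde\phi(\hat G)$ is a commuting subgroup: starting from $\llbracket g,h\rrbracket = 1$ for $g,h\in\hat G$ together with the product identity $\llbracket g,h\rrbracket = \prod_i \llbracket\rho_{A_i}(g),\rho_{A_i}(h)\rrbracket$ (valid because the $A_i$'s partition $[n]$), bilinearity of the commutator plus the fact that $s_i(g), s_i(h)$ lie in the center of $V_i$ eliminate all but the logical cross terms, giving $\prod_i \llbracket l_i(g), l_i(h)\rrbracket = 1$. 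The disjoint supports of the $l_i$'s then promote this to $\llbracket \tilde\phi(g), \tilde\phi(h)\rrbracket = 1$ in $\hat\mcP_n$. Applying Fact~\ref{clm:max-Abelian} to the embedded $\hat\mcP_{\sum_i\ell_i}$ bounds $\dim \tilde\phi(\hat G) \leq \sum_i \ell_i$.

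For the kernel, observe that $g \in \ker\tilde\phi$ forces $\rho_{A_i}(g) = s_i(g) \in \langle\mcS_i\rangle$ for each $i$. The $\FF_2$-linear map $\ker \tilde\phi \to \prod_i \langle\mcS_i\rangle$ given by $g \mapsto (s_i(g))_i$ is injective, since the $A_i$'s partition $[n]$ implies $g = \prod_i \rho_{A_i}(g)$; hence $\dim \ker \tilde\phi \leq \sum_i r_i$. Rank-nullity applied to $\tilde\phi$ then yields
\[
\dim G \;=\; \dim \hat G \;=\; \dim \ker \tilde\phi + \dim \tilde\phi(\hat G) \;\leq\; \sum_i (r_i + \ell_i) \;=\; \sum_i c(\rho_{A_i}(G)),
\]
which is the desired inequality.

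The main obstacle I expect is not a computational one but rather the bookkeeping: ensuring that the center/logical decomposition is well-defined and $\FF_2$-linear, and verifying that the logical components from different $A_i$'s assemble into a genuine copy of $\hat\mcP_{\sum_i\ell_i}$ inside $\hat\mcP_n$ without any accidental collapse. Both points rest ultimately on the disjointness of the $A_i$'s and on the centrality property in the canonical basis, so once those are cleanly stated the remainder is linear algebra.
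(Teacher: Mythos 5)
Your proof is correct, and it rests on the same essential ingredients as the paper's: a canonical basis $\br{\mcS_i,\mcX_i,\mcZ_i}$ for each local view $\rho_{A_i}(\hat G)$ (\cref{fact:canonicalgenerators}) and the count $c(\rho_{A_i}(G)) = r_i + \ell_i$ (\cref{lemma:maxAbeliansubgroup}). Where you genuinely differ is in how the blocks are reassembled. The paper forms the single group $Q \equiv \langle\bigcup_i(\mcS_i\cup\mcX_i\cup\mcZ_i)\rangle$, observes that by disjointness of the $A_i$ the union of the per-block canonical bases is itself a canonical basis for $Q$, and then quotes $c(Q)=\sum_i(r_i+\ell_i)$ together with the containment of $\hat G$ in $Q$ as a commuting subgroup. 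You instead build the explicit $\FF_2$-linear ``logical projection'' $\tilde\phi$, bound $\dim\im\tilde\phi\leq\sum_i\ell_i$ (its image is a commuting subgroup of a copy of $\hat\mcP_{\sum_i\ell_i}$, by the same centrality-plus-bimultiplicativity computation that underlies \cref{lemma:maxAbeliansubgroup}) and $\dim\ker\tilde\phi\leq\sum_i r_i$ (injectivity of $g\mapsto(\rho_{A_i}(g))_i$, from disjointness plus covering), and finish by rank--nullity. This is in effect an unfolding of the paper's one-line application of \cref{lemma:maxAbeliansubgroup} to $Q$: slightly longer, but it makes visible exactly where the $\sum_i r_i$ and $\sum_i\ell_i$ contributions come from. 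The points you flag as needing care---well-definedness and linearity of the center/logical split, and the commuting-image claim---do go through for precisely the reasons you give (independence of the canonical basis elements for the direct-sum decomposition, and disjointness of supports for the commutator factorization).
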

\begin{proof}
    For each subset, $A_i$, let $\mcS_{A_i}\equiv\br{S^{(j)}}_{j=1}^{r_i}$, $\mcX_{A_i}\equiv \br{\overline{X}^{(j)}}_{j=1}^{\ell_i} $, $\mcZ_{A_i}\equiv\br{\overline{Z}^{(j)}}_{j=1}^{\ell_i}$ denote the sets constituting a canonical basis for $\rho_{A_i}(G)$ given by \cref{fact:canonicalgenerators}. Note that $\abs{\mcS_{A_i}}\equiv r_i$ and $\abs{\mcX_{A_i}}=\abs{\mcZ_{A_i}}\equiv \ell_i$. Applying \cref{lemma:maxAbeliansubgroup} to each $\rho_{A_i}(G)$ we get that the dimension of a maximal commuting subgroup of $\rho_{A_i}(G)$ satisfies $c(\rho_{A_i}(G)) = r_i + \ell_i$.
    
    Now consider the subgroup $Q\leq \hat\mcP_n$ which is generated by all of the $\mcS_{A_i},\mcX_{A_i},\mcZ_{A_i}$. That is, for the sets $\mcS\equiv\bigcup\mcS_{A_i},\mcX\equiv\bigcup\mcX_{A_i}$, and $\mcZ\equiv\bigcup\mcZ_{A_i}$ we define $Q \equiv \langle \mcS\cup\mcX\cup\mcZ\rangle $. 
    Recall from \cref{sec:prelim} the phase-less version of $G$, $\hat G\equiv G \langle i\eye\rangle$, which is isomorphic to $G$. Since $\hat G \leq Q$ is a commuting subgroup of $Q$, by bounding the dimension of \emph{any} commuting subgroup of $Q$ we also bound $\dim G=\dim\hat G$. 

    By definition of local view these sets are independent and constitute a canonical basis for $Q$ as in \cref{def:canonical-basis}. Thus, applying \cref{lemma:maxAbeliansubgroup} to $Q$ we have 
    \begin{equation}
        c(Q)=\abs{\mcS} + \abs{\mcX} = \sum_{i=1}^p r_i + \sum_{i=1}^p \ell_i = \sum_{i=1}^p c(\rho_{A_i}(G)).
    \end{equation}
    
    Since $\hat G$ is a commuting subgroup of $Q$, $\dim G= \dim \hat G \leq c(Q)$ and the result holds.
\end{proof}

\begin{corollary}\label{prop:group-size-bound}
    Let $G\leq\mcP_n$ be a stabilizer group and $\br{A_i}_{i=1}^p$ any disjoint covering of $[n]$. For each $i\in[p]$ there is an $A_i$-locally-commuting subset $G_i\subseteq G$, and together these subsets satisfy the bound\\
    \begin{equation}
        \dim G\leq\sum_{i=1}^p \dim \rho_{A_i}(G_i).
    \end{equation}
\end{corollary}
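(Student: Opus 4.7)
The plan is to derive this corollary directly from \cref{lemma:stabilizerdimension} by exhibiting, for each $i$, a concrete $A_i$-locally-commuting subset $G_i \subseteq G$ whose local view saturates the dimension $c(\rho_{A_i}(G))$. Once such subsets are produced, the stated bound is just the chain $\dim G \leq \sum_i c(\rho_{A_i}(G)) = \sum_i \dim \rho_{A_i}(G_i)$, with the inequality supplied by \cref{lemma:stabilizerdimension} and the equality supplied by the construction.

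For the construction itself, the key point is that $\rho_{A_i}:\mcP_n\to\hat\mcP_n$ is a group homomorphism (it factors as truncation to qubits in $A_i$ followed by the quotient by the phase subgroup), so $\rho_{A_i}(G)$ is an honest subgroup of $\hat\mcP_n$. By the very definition of $c(\cdot)$, there exists a maximal commuting subgroup $K_i \leq \rho_{A_i}(G)$ with $\dim K_i = c(\rho_{A_i}(G))$. I would then set
\[
G_i \;\equiv\; G \cap \rho_{A_i}^{-1}(K_i),
\]
which is a subgroup of $G$ since it is the intersection of $G$ with the preimage of a subgroup under a homomorphism. The one detail to verify is that $\rho_{A_i}(G_i) = K_i$ (equality, not just containment): every $k\in K_i$ lies in $\rho_{A_i}(G)$ and hence has some preimage $g\in G$, and this $g$ automatically lies in $G_i$ by the definition of $G_i$; so $K_i\subseteq \rho_{A_i}(G_i)$, and the reverse inclusion is immediate. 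Consequently $\rho_{A_i}(G_i) = K_i$ is a commuting subgroup of $\hat\mcP_n$, certifying that $G_i$ is $A_i$-locally-commuting with $\dim \rho_{A_i}(G_i) = c(\rho_{A_i}(G))$.

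Putting the pieces together, \cref{lemma:stabilizerdimension} gives
\[
\dim G \;\leq\; \sum_{i=1}^p c(\rho_{A_i}(G)) \;=\; \sum_{i=1}^p \dim \rho_{A_i}(G_i),
\]
which is exactly the claim. I do not expect any real obstacle here; the proof is essentially the observation that maximal commuting subgroups of the local image $\rho_{A_i}(G)$ can be lifted back into $G$ by taking preimages, and the only subtle step is confirming the surjectivity-style equality $\rho_{A_i}(G_i)=K_i$, which is routine given that $K_i$ was chosen inside $\rho_{A_i}(G)$.
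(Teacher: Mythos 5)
Your proposal is correct and follows essentially the same route as the paper: both take a maximal commuting subgroup $K_i\leq\rho_{A_i}(G)$ realizing $c(\rho_{A_i}(G))$, lift it to a subset of $G$, and invoke \cref{lemma:stabilizerdimension}. Your explicit choice $G_i = G\cap\rho_{A_i}^{-1}(K_i)$ and the verification that $\rho_{A_i}(G_i)=K_i$ merely spell out the lifting step the paper leaves implicit (and incidentally makes $G_i$ a subgroup, as the paper's footnote remarks is always possible).
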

\begin{proof}
    For each $A_i$, let $K_i\leq \rho_{A_i}(G)$ be a maximal commuting subgroup of the local view of $G$ at $A_i$. Since $K_i$ is in the image of $\rho_{A_i}$ there exists a subset $G_i\subseteq G$ such that $\rho_{A_i}(G_i)=K_i$.
\end{proof}
\cref{prop:group-size-bound} allows us to upper-bound $\dim G$ by the sizes of $A_i$-locally-commuting subsets of $G$. In order to combine this with \cref{lem:stab-dim-bound} and give a lower bound on the number of $A_i$ for which $\ket\psi$ is pseudo-stabilizer, we must show there are \emph{disjoint} Hamiltonian terms which act on a constant fraction of the qubits. This follows immediately by the assumption that local Hamiltonians contain a linear number of terms:

\begin{lemma}\label{lem:many-disjoint}
    Let $\ham$ be a $k$-local Hamiltonian with $m=\Theta(n)$ local terms. There are $p=\Omega(n)$ terms of $\ham$ which act on independent subsets of qubits. That is, there are $p=\Omega(n)$ disjoint subsets, $\br{A_i}_{i=1}^p$, each of size $\abs{A_i}=k$, and terms of $\ham$, $\br{h_i}_{i=1}^p$, satisfying $h_i\mid_{[n]\setminus A_i} = \eye^{\otimes n-k}$.
\end{lemma}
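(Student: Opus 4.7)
The plan is a standard greedy hypergraph matching argument. Recall from \cref{subsec:prelim-ham} that we make the convenient assumption that each qubit is acted on non-trivially by at most $k$ of the terms $h_1,\dots,h_m$. View the Hamiltonian as a $k$-uniform hypergraph $H$ on vertex set $[n]$, where each hyperedge is the support $\mathrm{supp}(h_j)\subseteq[n]$ of some local term. Our goal is to extract a matching (a collection of pairwise-disjoint hyperedges) of size $\Omega(n)$ from $H$.

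First, I will construct the collection $\{A_i\}$ iteratively. Maintain a pool $\mcT\subseteq[m]$ of available term-indices, initialized to $\mcT=[m]$. At step $i$, pick any $j_i\in\mcT$, set $A_i\equiv\mathrm{supp}(h_{j_i})$ (so $\abs{A_i}=k$), and then remove from $\mcT$ every index $j$ such that $\mathrm{supp}(h_j)\cap A_i\neq\emptyset$. Stop when $\mcT=\emptyset$. By construction each chosen $A_i$ is disjoint from all previously chosen $A_{i'}$, since the indices of overlapping terms were removed at step $i'$, and each selected $h_{j_i}$ satisfies $h_{j_i}\mid_{[n]\setminus A_i}=\eye^{\otimes n-k}$ by definition of its support.

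Next I count: at step $i$, the set $A_i$ consists of $k$ qubits, and by the preliminaries each of those qubits lies in the support of at most $k$ terms of $\ham$. Hence at most $k\cdot k=k^2$ indices can be removed from $\mcT$ in this step (including $j_i$ itself). Since we start with $\abs{\mcT}=m=\Theta(n)$ and each step shrinks $\mcT$ by at most $k^2=O(1)$, the procedure runs for at least $p\geq m/k^2=\Omega(n)$ steps, producing the required disjoint collection $\{(A_i,h_{j_i})\}_{i=1}^p$.

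I do not anticipate a main obstacle in this argument; it is essentially a one-line greedy bound on matchings in bounded-degree hypergraphs. The only place where one must be a bit careful is in invoking the constant-degree assumption of \cref{subsec:prelim-ham} — without it, a single qubit could appear in $\omega(1)$ terms and the removal count per step could grow with $n$, in which case we would need a more refined argument (e.g., a Turán-type bound). Under the stated assumption, however, the bound $k^2$ is immediate and the conclusion follows directly.
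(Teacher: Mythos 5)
Your greedy matching argument is correct and is essentially the paper's own proof, which likewise uses the bounded-degree assumption to note each term shares qubits with at most $k(k-1)$ others and selects $p=\Omega(n)$ disjoint terms. Your constant $m/k^2$ is marginally weaker than the paper's $\lfloor m/(k^2-k)\rfloor$, but both give the required $\Omega(n)$.
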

\begin{proof}
    By definition (see \cref{subsec:prelim-ham}), as $\ham$ is a $k$-local Hamiltonian each qubit can be non-trivially acted on by at most $k$ Hamiltonian terms $h_i$. In particular, any Hamiltonian term can share qubits with at most $k(k-1)$ other terms. Therefore we can always select terms $\br{h_i}_{i=1}^p$ and disjoint subsets, $\br{A_i}_{i=1}^p$ , satisfying $h_i\mid_{[n]\setminus A_i} = \eye^{\otimes n-k}$ with $p = \left\lfloor \frac{m}{k^2 - k} \right\rfloor$, where $m=\Theta(n)$ is the number of Hamiltonian terms.
\end{proof}

We now prove the main result of this section.
\conditioncomplex*
\begin{proof}
    Let $\br{A_i}_{i=1}^p$ be the collection of disjoint subsets from \cref{lem:many-disjoint}. Although these subsets may not cover all of the qubits, we can still apply the previous results by adding an extra subset which contains the remaining qubits. Define the set $A_{p+1}\equiv [n]\setminus\left(\bigcup_{i=1}^p A_i\right)$, so that $\br{A_i}_{i=1}^{p+1}$ is a disjoint covering of $[n]$. 
    
    Suppose $\ket\psi$ can be prepared by $\leq cn$ Pauli-rotation gates, where $c\in(0,1)$ will be chosen later. Let $G\equiv\stab(\ket\psi)$. By \cref{prop:group-size-bound}, there are subsets of $G$, $\br{G_i}$, where for each $i\in[p+1]$, $G_i$ is $A_i$-locally-commuting, and $\dim{G}\leq\sum_{i=1}^{p+1} \dim{\rho_{A_i}(G_i)}$. Since $\abs{A_i}=k$ for $i\in[p]$, $\abs{A_{p+1}}\leq n-kp$. Applying \cref{lem:Abelian-subgroup-max-size} to $\rho_{A_{p+1}}(G_{p+1})$, we have
    \begin{equation}\label{eq:group-bd-w-Si}
        \dim{G}\leq n-kp + \sum_{i=1}^p \dim{\rho_{A_i}(G_i)}.
    \end{equation}
    
    \cref{lem:Abelian-subgroup-max-size} tells us that for $i\in[p]$ the value of $\dim{\rho_{A_i}(G_i)}$ is at most $k$. Let $\mathcal{I}\subseteq[p]$ be the indices of those $G_i$ whose local views have maximal size. By definition, $\ket\psi$ is a pseudo-stabilizer state at $A_i$ for every $i\in \mathcal{I}$, so a linear lower bound on $\abs{\mathcal{I}}$ will prove the claim.
    
    Note that for every $i\notin \mathcal{I}$, $\dim{\rho_{A_i}(G_i)}\leq k-1$. 
    From \cref{eq:group-bd-w-Si}, we have
    \begin{align}
        \dim G &\leq n-kp + k\abs{\mathcal{I}} + (k-1)(p-\abs{\mathcal{I}}), \\
            &= n-p+\abs{\mathcal{I}}. \label{eq:group-bd-I}
    \end{align}

    We now use the relationship between $\T$-count and stabilizer dimension given in \cref{lem:stab-dim-bound}. In particular, since $\ket\psi$ is prepared by $\leq cn$ Pauli-rotation gates, $\dim{G}\geq n-cn$. Combining this with \cref{eq:group-bd-I}, we have
    \begin{align}
        n-cn&\leq n-p + \abs{\mathcal{I}}, \\
        \Longrightarrow p-cn&\leq \abs{\mathcal{I}}.
    \end{align}
    By \cref{lem:many-disjoint}, $p=\Omega(n)$, so there are constants $N,d>0$ such that for all $n\geq N\Rightarrow p\geq dn$. Choosing $c\equiv d-\delta$ for constant $\delta\in(0,d)$, we have $\abs{\mathcal{I}}\geq (d-c)n = \delta n$ for all $n\geq N$.
\end{proof}


\newpage
\bibliography{NLACSref}


\appendix
\section{Proof of \cref{lem:fidelitybound}}\label{app:fidelity-proof}
\fidelitybound*
\begin{proof}
    By assumption there exist $g_1\in\stab(\ket\psi)$ and $g_2\in\stab(\ket\phi)$ such that $g_1g_2=-g_2g_1$. We bound
    \begin{align}
        \text{Tr}[\ketbra{\psi}\ketbra{\phi}] &= \text{Tr}\left[\ketbra{\psi} \frac{I+g_1}{2} \frac{I+g_2}{2}\ketbra{\phi}\right], \\
        &= \text{Tr}\left[\ketbra{\phi}\ketbra{\psi} \frac{I+g_1}{2} \frac{I+g_2}{2}\right], \\
        \text{\color{since}(By Hölder's Inequality)} \hspace{3em}&
        \leq \big\lVert \ketbra{\phi}\ketbra{\psi}\big\rVert_1 \left\lVert \frac{I+g_1}{2} \frac{I+g_2}{2} \right\rVert_{op}. 
    \end{align}
    The term $\lVert \ketbra{\phi}\ketbra{\psi}\rVert_1=\abs{\bra{\psi}\ket{\phi}}$; our goal is to bound the latter term. We have:
    \begin{align}
         \left\lVert \frac{I+g_1}{2} \frac{I+g_2}{2} \right\rVert_{op} &= \max_{\ket{\xi}} \left\lVert \frac{I+g_1}{2} \frac{I+g_2}{2} \ket{\xi} \right\rVert, \\
        &= \max_{\ket{\xi}} \sqrt{\bra{\xi}\frac{I+g_2}{2} \frac{I+g_1}{2} \frac{I+g_2}{2}\ket{\xi}}, \\
        \text{\color{since}(By Anti-Commutation)} \hspace{3em}&=  \max_{\ket{\xi}} \sqrt{\bra{\xi}\frac{I+g_2}{4}\ket{\xi}}, \\
        &\leq \frac{1}{\sqrt{2}}.
    \end{align}
    Thus $\left| \braket{\psi}{\phi} \right|^2 =\text{Tr}[\ketbra{\psi}\ketbra{\phi}] \leq \left|\braket{\psi}{\phi}\right| \frac{1}{\sqrt{2}}$, which proves the result.
\end{proof}

\newpage
\section{Stabilizer Groups and Symplectic Geometry}\label{app:symplectic}
The follow is a largely self-contained introduction to symplectic vector spaces, their relationship to Pauli groups, and the implications they have on the structure of Pauli subgroups.

We adopt the notation of \cite{Rom08}. Though \cite{Rom08} contains little information on the special types of subspaces necessary for our work (namely, isotropic and Lagrangian subspaces), it is an excellence reference on the structure of symplectic vectors spaces and of more general metric vector spaces.
\subsection{Symplectic Vector Spaces}\label{subapp:symplectic-vector-spaces}
Let $V$ be a vector space over $\FF_2$.
For subspaces $A,B\subseteq V$, their sum is defined as $A+B\equiv\br{a+b\mid a\in A, b\in B}$, and their intersection is defined as $A\cap B\equiv\br{v\in V\mid v\in A\cap B}$.

\begin{definition}
    Let $V$ be a vector space $V$ over $\FF_2$. A \textbf{symplectic product} for $V$ is a function $\omega:V\times V\rightarrow \FF_2$ satisfying:
    \begin{enumerate}
        \item (Bilinearity) $\omega$ is linear in each of its arguments, i.e. for all $x,y,z\in V$ and $a\in\FF_2$,
        \begin{enumerate}
            \item $\omega(x+y,z) = \omega(x,z)+\omega(y,z)$,
            \item $\omega(x,y+z) = \omega(x,y)+\omega(x,z)$,
            \item $\omega(ax,y)=a\omega(x,y)$,
            \item $\omega(x,ay)=a\omega(x,y)$,
        \end{enumerate}
        \item (Alternating) $\omega(x,x)=0$ for all $x\in V$.
    \end{enumerate}
    Since the base field is taken to be $\FF_2$, alternation implies that a symplectic form is also symmetric.
    
    A \textbf{binary symplectic vector space}, $(V,\omega)$, is a finite-dimensional vector space $V$ over $\FF_2$ together with a symplectic product. The form $\omega$ is \textbf{non-degenerate} if $\omega(z,x)=0$ for all $x\in V$ implies that $z=0$. $V$ is said to be \textbf{non-degenerate} if $\omega$ is non-degenerate, and is called \textbf{degenerate} otherwise, i.e. if there is a $z\in V\setminus \br{0}$ such that $\omega(x,z)=0$ for all $x\in V$.
\end{definition}
We note that although the form, $\omega$, is often assumed to be non-degenerate we will only assume this when necessary. Although not obvious, if $\omega$ is non-degenerate then $V$ is even-dimensional.

The standard example of a binary symplectic vector space is $(\FF_2^{2n},\omega)$ with symplectic product
\begin{equation}
    \omega(x,y) = x^T\begin{bmatrix}
        0_n & I_n \\
        I_n & 0_n
    \end{bmatrix}
    y,
\end{equation}
where $I_n$ is the $n\times n$ identity matrix. In particular, if $x=[x_1, x_2]^T, y=[y_1,y_2]^T$ where $x_1,x_2,y_1,y_2\in\FF_2^n$, then $\omega(x,y) = x_1^T y_2 + x_2^T y_1$. In fact, one can show that in some sense \emph{every} binary symplectic space is equivalent to $\FF_2^{2n}$ for some $n$.

\begin{definition}[Orthogonality]
    Given $x,y\in V$, we say that $x$ is \textbf{orthogonal} to $y$ if $\symp{x}{y}=0$.  Given a subspace $W\subseteq V$, the \textbf{orthogonal complement} of $W$, denoted $W^\perp$, is the set of all vectors in $V$ which are orthogonal to all vectors in $W$. That is
    \begin{equation*}
        W^\perp\equiv\br{x\in V \mid \symp{x}{w}=0 \text{ for all } w\in W}.
    \end{equation*}
    It is straightforward to verify that $W^\perp$ is, in fact, a subspace of $V$. Unlike the case of inner product spaces, in general $W+W^\perp\neq V$. This happens when $W\cap W^\perp\neq \br{0}$, which is often the case in a symplectic vector space. Nonetheless, if $V$ is non-degenerate then
    \begin{equation}\label{eq:dim-sum}
        \dim(W)+\dim(W^\perp) = \dim(V),
    \end{equation}
    as expected (see \cite{Rom08} for a proof).
    Lastly, orthogonality is an order-reversing operation. That is, $A\subseteq B\Leftrightarrow B^\perp\subseteq A^\perp$.
    
    The \textbf{radical} of a subspace $W\subseteq V$ is the set of all vectors in $W$ which are orthogonal to $W$, i.e. $\rad(W)\equiv W\cap W^\perp$. In particular, $\rad(V)=\br{0}$ if and only if $V$ is non-degenerate, and $\dim(W+W^\perp)=\dim V-\dim(\rad(W))$. Like orthogonality, taking radicals is an order-reversing operation.
\end{definition}
The following are useful facts relating subspace sums, intersections, and orthogonal complements.
\begin{lemma}\label{fact:sum-and-orthogonality}
    For $A,B\subseteq V$, $(A+B)^\perp = A^\perp\cap B^\perp$.
\end{lemma}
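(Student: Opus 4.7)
The plan is to prove both containments directly from the definitions, using bilinearity of $\omega$ and the fact that $A, B \subseteq A+B$.

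For the forward containment $(A+B)^\perp \subseteq A^\perp \cap B^\perp$, I would take any $x \in (A+B)^\perp$. Since $0 \in B$, every $a \in A$ can be written as $a+0 \in A+B$, so $a \in A+B$; similarly $B \subseteq A+B$. Thus $\omega(x, a) = 0$ for every $a \in A$ and $\omega(x, b) = 0$ for every $b \in B$, giving $x \in A^\perp$ and $x \in B^\perp$, hence $x \in A^\perp \cap B^\perp$. This direction is essentially just the order-reversing property of $\perp$ applied to the inclusions $A, B \subseteq A+B$.

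For the reverse containment $A^\perp \cap B^\perp \subseteq (A+B)^\perp$, I would take $x \in A^\perp \cap B^\perp$ and an arbitrary element $v \in A+B$, which by definition can be written as $v = a + b$ with $a \in A$ and $b \in B$. Bilinearity of $\omega$ then gives
\begin{equation*}
    \omega(x, v) = \omega(x, a+b) = \omega(x, a) + \omega(x, b) = 0 + 0 = 0,
\end{equation*}
so $x \in (A+B)^\perp$. Combining the two containments yields the claimed equality. There is no real obstacle here: the statement is a routine consequence of bilinearity and the definition of orthogonal complement, and the proof does not rely on $\omega$ being non-degenerate or on any special features of the base field $\mathbb{F}_2$.
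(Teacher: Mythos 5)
Your proof is correct and follows essentially the same route as the paper: the forward containment via $A,B\subseteq A+B$ and the order-reversing property of orthogonal complements, and the reverse containment by bilinearity of $\omega$ applied to an arbitrary $a+b\in A+B$. Your observation that neither non-degeneracy nor the choice of base field is needed is also accurate.
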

\begin{proof}
    Since $A\subseteq A+B\Rightarrow (A+B)^\perp\subseteq A^\perp$, and similarly for $B^\perp$, so $(A+B)^\perp \subseteq A^\perp\cap B^\perp$. 
    
    Now suppose $v\in A^\perp\cap B^\perp$ and consider an $a+b\in A+B$. By bilinearity we have $\omega(v,a+b)=\omega(v,a)+\omega(v,b)=0+0$, by assumption. So $A^\perp\cap B^\perp\subseteq (A+B)^\perp$.
\end{proof}
\begin{corollary}\label{cor:inductive-sum-orthogonality}
    If $\br{A_\alpha}_{\alpha=1}^m$ are all subspaces of $V$, then $(\sum_{\alpha=1}^m A_\alpha)^\perp = \bigcap_{\alpha=1}^m A_\alpha^\perp$.
\end{corollary}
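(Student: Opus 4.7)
The plan is a straightforward induction on $m$, using \cref{fact:sum-and-orthogonality} as the base step. For $m=1$ the statement reads $A_1^\perp = A_1^\perp$, which is trivial; the $m=2$ case is exactly \cref{fact:sum-and-orthogonality}.

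For the inductive step, suppose the claim holds for some $m\geq 2$, and consider subspaces $A_1,\dots,A_{m+1}\subseteq V$. First I would write the sum as
\begin{equation*}
    \sum_{\alpha=1}^{m+1} A_\alpha = \left(\sum_{\alpha=1}^{m} A_\alpha\right) + A_{m+1},
\end{equation*}
which is well-defined since a sum of subspaces is itself a subspace. Applying \cref{fact:sum-and-orthogonality} with the two summands on the right yields
\begin{equation*}
    \left(\sum_{\alpha=1}^{m+1} A_\alpha\right)^\perp = \left(\sum_{\alpha=1}^{m} A_\alpha\right)^\perp \cap A_{m+1}^\perp.
\end{equation*}
The induction hypothesis then replaces the first factor with $\bigcap_{\alpha=1}^m A_\alpha^\perp$, giving $\bigcap_{\alpha=1}^{m+1} A_\alpha^\perp$, as desired.

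Since every step is a direct application of a previously established fact, I do not expect any genuine obstacle here; the only thing to be careful about is that $\sum_\alpha A_\alpha$ really is a subspace (so that \cref{fact:sum-and-orthogonality} applies at each stage of the induction), which follows immediately from the definition of the sum of subspaces over $\FF_2$. No non-degeneracy assumption on $\omega$ is needed, since \cref{fact:sum-and-orthogonality} itself does not require it.
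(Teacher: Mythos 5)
Your proof is correct and is exactly the argument the paper intends: the corollary is stated without proof as an immediate consequence of \cref{fact:sum-and-orthogonality}, obtained by the routine induction you spell out. Nothing to add.
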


Non-degenerate symplectic vector spaces are in many ways more well-behaved than degenerate ones, but the following result gives us a way to work with degenerate spaces.

\begin{lemma}[Symplectic equivalent of \cref{fact:canonicalgenerators} and \cref{fact:logicals-in-canonical}]\label{lem:degenerate-radical-sum}
    Every symplectic vector space can be decomposed as $V=\rad(V)\oplus S$, where $S$ is a non-degenerate symplectic vector space, and $\dim V = \dim(\rad(V))+\dim S$.
\end{lemma}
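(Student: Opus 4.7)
The plan is to prove the decomposition by picking any vector-space complement of $\rad(V)$ inside $V$ and then showing that the symplectic form automatically becomes non-degenerate when restricted to this complement.

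First I would note that, since $V^\perp \subseteq V$ by definition, the radical satisfies $\rad(V) = V \cap V^\perp = V^\perp$. Because $\FF_2$ is a field and $\rad(V)$ is a subspace of the finite-dimensional space $V$, I can extend any basis of $\rad(V)$ to a basis of $V$ and let $S$ be the span of the added basis vectors. This immediately gives a vector-space direct sum $V = \rad(V) \oplus S$ and the dimension identity $\dim V = \dim(\rad(V)) + \dim S$. So the only real content is to verify that $(S, \omega\mid_{S \times S})$ is a non-degenerate symplectic vector space.

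Next I would check non-degeneracy of the restricted form. The form $\omega\mid_{S \times S}$ is still bilinear and alternating, since these are inherited from $\omega$. Suppose $s \in S$ is orthogonal to every element of $S$, i.e. $\omega(s, s') = 0$ for all $s' \in S$. Any $v \in V$ decomposes uniquely as $v = r + s'$ with $r \in \rad(V)$ and $s' \in S$; by bilinearity,
\begin{equation*}
\omega(s,v) = \omega(s,r) + \omega(s,s') = 0 + 0 = 0,
\end{equation*}
where $\omega(s,r) = 0$ follows from $r \in \rad(V) = V^\perp$ (using that $\omega$ over $\FF_2$ is symmetric). Hence $s \in V^\perp = \rad(V)$, and since $s \in S$ as well, the direct sum decomposition forces $s = 0$. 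This shows the restriction is non-degenerate.

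I do not expect any real obstacle here; the proof is essentially bookkeeping. The one subtle point to get right is identifying $\rad(V)$ with $V^\perp$ (so that orthogonality of $s$ to all of $V$ places $s$ in $\rad(V)$, not merely in some abstract annihilator), which relies on the convention already used in the paper that $V^\perp$ consists of vectors in $V$ orthogonal to $V$. Once that observation is in hand, the direct-sum complement exists by elementary linear algebra over $\FF_2$ and the non-degeneracy argument above closes the proof.
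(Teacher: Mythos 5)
Your proposal is correct and follows essentially the same route as the paper: take any linear complement $S$ of $\rad(V)$ and observe that any $s\in S$ orthogonal to all of $S$ is then orthogonal to all of $V=\rad(V)\oplus S$, hence lies in $\rad(V)\cap S=\{0\}$. You simply spell out in detail the step the paper states as ``$\rad(S)\subseteq\rad(V)$.''
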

\begin{proof}
    (Theorem 11.7 of \cite{Rom08}) Let $S$ be any subspace such that $V=\rad(V)\oplus S$. Since their intersection is trivial and $\rad(S)\subseteq \rad(V)$, $S$ is non-degenerate. The result on dimensions also holds since they have trivial intersection.
\end{proof}

We mention two important types of subspaces:
\begin{definition}
    A subspace $W\subseteq V$ of a symplectic vector space is said to be:
    \begin{enumerate}
        \item \textbf{Isotropic} or \textbf{weakly self-dual} if $W\subseteq W^\perp\Longleftrightarrow \forall x,y\in W, \omega(x,y)=0$,
        \item \textbf{Lagrangian} or \textbf{self-dual} if $W=W^\perp$.
    \end{enumerate}
    An isotropic subspace is \textbf{maximal} if it is not strictly contained in any larger isotropic subspace.
\end{definition}
\begin{lemma}\label{lem:dim-lagrangian}
    If $V$ is non-degenerate and $W\subseteq V$ is Lagrangian, then $\dim(W)=\dim(V)/2$.
\end{lemma}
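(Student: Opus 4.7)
The plan is to derive this as an immediate consequence of the dimension formula \eqref{eq:dim-sum} stated earlier in the excerpt, namely that for any subspace $W$ of a non-degenerate symplectic space $V$,
\begin{equation*}
\dim(W) + \dim(W^{\perp}) = \dim(V).
\end{equation*}
Since $V$ is non-degenerate by hypothesis, this formula applies to our Lagrangian subspace $W$.

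By definition, $W$ being Lagrangian means $W = W^{\perp}$, so in particular $\dim(W) = \dim(W^{\perp})$. Substituting into the dimension formula above gives $2\dim(W) = \dim(V)$, hence $\dim(W) = \dim(V)/2$, as claimed. Note that this argument also implicitly confirms that $\dim(V)$ is even whenever a Lagrangian subspace exists, which is consistent with the remark in the excerpt that non-degenerate symplectic vector spaces are even-dimensional.

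There is no real obstacle here: the proof is essentially a one-line calculation once the dimension formula \eqref{eq:dim-sum} is taken as a black box. The only thing to be careful about is that \eqref{eq:dim-sum} genuinely requires non-degeneracy of $V$ (in a degenerate space the formula fails in general because $W + W^{\perp}$ need not fill out the ambient space), which is precisely why the hypothesis is included in the lemma statement.
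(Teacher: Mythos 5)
Your proof is correct and is exactly the paper's argument: the paper likewise notes $W = W^{\perp}$ and invokes the dimension formula \eqref{eq:dim-sum}, which requires non-degeneracy of $V$. Your additional remarks about evenness of $\dim(V)$ and the failure of \eqref{eq:dim-sum} in the degenerate case are accurate but not needed.
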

\begin{proof}
    By definition $W=W^\perp$. The result follows from \cref{eq:dim-sum}.
\end{proof}

\begin{lemma}\label{lem:max-iso-equals-lagrangian}
    An isotropic subspace is maximal if and only if it is Lagrangian.
\end{lemma}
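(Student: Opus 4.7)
The plan is to prove both implications by contrapositive, using only the definitions and the alternating/bilinear properties of $\omega$. The key observation is that because the form is alternating, any single vector $v$ automatically satisfies $\omega(v,v)=0$, so extending an isotropic subspace by a vector in its orthogonal complement preserves isotropy.

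For the easier direction, suppose $W$ is Lagrangian, i.e.\ $W = W^\perp$, and suppose toward contradiction that $W \subsetneq W'$ for some isotropic $W'$. Then $W' \subseteq (W')^\perp$, and by order-reversal of orthogonal complements, $W \subseteq W'$ implies $(W')^\perp \subseteq W^\perp = W$. Chaining these gives $W' \subseteq W$, contradicting $W \subsetneq W'$. Hence $W$ is maximal isotropic.

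For the reverse direction, I will show the contrapositive: if $W$ is isotropic but \emph{not} Lagrangian, then $W$ is not maximal. Since $W \subseteq W^\perp$ holds by isotropy, failure of the Lagrangian condition means $W \subsetneq W^\perp$, so we may pick $v \in W^\perp \setminus W$. Set $W' \equiv W + \langle v \rangle$, which strictly contains $W$. To check that $W'$ is isotropic, take two arbitrary elements $x + av,\; y + bv \in W'$ with $x,y \in W$ and $a,b \in \FF_2$. Bilinearity gives
\begin{equation*}
    \omega(x+av,\; y+bv) \;=\; \omega(x,y) + a\,\omega(v,y) + b\,\omega(x,v) + ab\,\omega(v,v).
\end{equation*}
The first term vanishes since $W$ is isotropic, the middle two vanish because $v \in W^\perp$ is orthogonal to all of $W$, and the last vanishes by the alternating property. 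Thus $W'$ is an isotropic subspace strictly larger than $W$, so $W$ is not maximal.

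There is no real obstacle here; the only subtlety is remembering to invoke the alternating property so that the single vector $v$ may be adjoined without checking $\omega(v,v)$ by hand, and to use order-reversal of $(\cdot)^\perp$ in the first direction. Both tools were established in the preceding subsection, so the argument is short and self-contained.
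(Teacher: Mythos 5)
Your proof is correct and follows essentially the same route as the paper's: both directions hinge on adjoining a vector $v\in W^\perp\setminus W$ to produce a strictly larger isotropic subspace (you verify isotropy by direct bilinear expansion where the paper invokes its lemma on $(A+B)^\perp$, a cosmetic difference), and on order-reversal of orthogonal complements for the Lagrangian-implies-maximal direction. No gaps.
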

\begin{proof}
    ($\Rightarrow$) Let $T\subseteq V$ be a maximal isotropic subspace. Take $t\in T^\perp$. We will show that $t\in T$ since otherwise $T$ would not be maximally isotropic.
    
    Suppose, for contradiction, that $t\in T^\perp\setminus T$ and consider the space $\{t\}+T\supsetneq T$. By \cref{fact:sum-and-orthogonality}, $\{t\}+T$ is isotropic if $\{t\}+T \subseteq (\{t\}+T)^\perp = \{t\}^\perp\cap T^\perp$.
    Indeed, we have
    \begin{enumerate}
        \item $t\in T^\perp$ by assumption,
        \item $t\in \{t\}^\perp$ since $\omega$ is alternating,
        \item $T\subseteq T^\perp$ since $T$ is isotropic, and
        \item $T\subseteq \{t\}^\perp\Leftrightarrow t\in T^\perp$, again by assumption.
    \end{enumerate}
    Thus, such a $\{t\}+T$ would be a strictly larger isotropic subspace than $T$, contradicting the maximality of $T$. So, $T=T^\perp$ is a Lagrangian subspace.

    ($\Leftarrow$) Let $T=T^\perp$ be Lagrangian. Suppose, for contradiction, that $K\subseteq K^\perp$ is an isotropic subspace strictly containing $T$. As $T\subset K$, by the order-reversing property of orthogonality we have $T \subset K\subseteq K^\perp\subset T^\perp$, and so $\dim(T) < \dim(T^\perp)$. But $T=T^\perp$, so this is a contradiction.
\end{proof}
\begin{corollary}\label{lemma:iso-in-lagrangian}
    Every isotropic subspace is contained in some Lagrangian subspace.
\end{corollary}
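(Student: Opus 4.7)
The plan is to combine \cref{lem:max-iso-equals-lagrangian}, which already identifies maximal isotropic subspaces with Lagrangian subspaces, together with the fact that $V$ is finite-dimensional. Given an isotropic $W \subseteq V$, it therefore suffices to produce a maximal isotropic subspace of $V$ that contains $W$.

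To do this, I would pick $T$ to be an isotropic subspace containing $W$ of maximum possible dimension; such a $T$ exists because $W$ itself is a valid candidate and $\dim V < \infty$ bounds the dimensions of all candidates. Then $T$ is maximal isotropic in $V$ in the sense of the paper: if $T' \supseteq T$ were isotropic, then $T' \supseteq W$ and $\dim T' \geq \dim T$, so by the maximality of $\dim T$ we get $T' = T$. Applying \cref{lem:max-iso-equals-lagrangian} to $T$, we conclude that $T = T^\perp$ is Lagrangian, and $W \subseteq T$ by construction.

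A more constructive variant is also available and mirrors the $(\Rightarrow)$ argument inside the proof of \cref{lem:max-iso-equals-lagrangian}. Initialize $T \leftarrow W$; while $T \subsetneq T^\perp$, choose any $t \in T^\perp \setminus T$ and replace $T$ with $T + \langle t \rangle$. Using \cref{fact:sum-and-orthogonality} exactly as in that earlier proof, one checks $T + \langle t \rangle \subseteq \{t\}^\perp \cap T^\perp = (T + \langle t \rangle)^\perp$, so the enlargement remains isotropic. Since $\dim T$ strictly increases each step and is bounded above by $\dim V$, the procedure terminates at some $T^*$ with $T^* = (T^*)^\perp$, i.e., a Lagrangian subspace containing $W$.

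There is essentially no obstacle to overcome: the corollary is immediate from \cref{lem:max-iso-equals-lagrangian} plus finite-dimensionality. The only minor point to verify is that ``maximal isotropic among subspaces containing $W$'' coincides with ``maximal isotropic in $V$'' once $W$ has been fixed, which is trivial by transitivity of inclusion.
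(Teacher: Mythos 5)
Your proposal is correct and matches the paper's proof, which likewise invokes finite-dimensionality to obtain a maximal isotropic subspace containing $W$ and then applies \cref{lem:max-iso-equals-lagrangian}; you simply spell out the maximality argument (and an equivalent greedy construction) in more detail.
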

\begin{proof}
    Since $V$ is finite dimensional, every isotropic subspace is contained in some maximal isotropic subspace. By \cref{lem:max-iso-equals-lagrangian} this subspace is also Lagrangian.
\end{proof}
\begin{corollary}\label{cor:max-iso-dimension}
Let $V$ be a binary symplectic vector space and let $W\subseteq V$ be an isotropic subspace.
    \begin{enumerate}
        \item If $V$ is non-degenerate then $\dim W\leq\frac{\dim V}{2}$. (Symplectic equivalent of \cref{clm:max-Abelian})
        \item For arbitrary $V$, $\dim W\leq\frac{\dim V + \dim(\rad(V))}{2}$. (Symplectic equivalent of \cref{lemma:maxAbeliansubgroup})
    \end{enumerate}
Both hold with equality if and only if $W$ is Lagrangian.
\end{corollary}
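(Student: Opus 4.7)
The plan is to reduce both bounds to the structural results already proved in \cref{lemma:iso-in-lagrangian}, \cref{lem:dim-lagrangian}, and \cref{lem:degenerate-radical-sum}. Part (1) follows essentially by inspection from the fact that every isotropic subspace sits inside a Lagrangian one, while part (2) is obtained by pushing $W$ through the canonical decomposition $V=\rad(V)\oplus S$ and applying part (1) to the non-degenerate summand $S$.

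For part (1), I would invoke \cref{lemma:iso-in-lagrangian} to embed $W$ in a Lagrangian $L\subseteq V$. Non-degeneracy plus \cref{lem:dim-lagrangian} gives $\dim L=\dim V/2$, and monotonicity of dimension yields $\dim W\leq \dim V/2$. The equality case is immediate: if $\dim W=\dim V/2=\dim L$ then $W=L$ is Lagrangian, and conversely any Lagrangian $W$ saturates the bound by \cref{lem:dim-lagrangian}.

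For part (2), I would apply \cref{lem:degenerate-radical-sum} to write $V=\rad(V)\oplus S$ with $S$ non-degenerate, and let $\pi_S:V\to S$ be the projection along $\rad(V)$. For $v,w\in W$, decomposing $v=r_v+s_v$ and $w=r_w+s_w$, bilinearity combined with the defining property $\symp{\rad(V)}{V}=0$ gives $\symp{v}{w}=\symp{s_v}{s_w}$. Thus $\pi_S(W)$ is isotropic in $S$, and part (1) applied to $S$ yields $\dim \pi_S(W)\leq (\dim V-\dim\rad(V))/2$. The restricted projection $\pi_S|_W$ has kernel $W\cap\rad(V)$ of dimension at most $\dim\rad(V)$, so rank--nullity gives
\begin{equation*}
\dim W=\dim(W\cap\rad(V))+\dim\pi_S(W)\leq \tfrac{\dim V+\dim\rad(V)}{2}.
\end{equation*}

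The main delicate step is the equality case of part (2): saturation forces both $\rad(V)\subseteq W$ and $\pi_S(W)$ Lagrangian in $S$, and I need to argue this is equivalent to $W$ being Lagrangian in $V$. The easy direction is that Lagrangian $W$ satisfies $\rad(V)\subseteq V^\perp\subseteq W^\perp=W$, and one checks $\pi_S(W)=\pi_S(W)^{\perp_S}$ using the splitting. For the converse, given $v\in W^\perp$, I would decompose $v=r+s$ and note that for every $w\in W$ the orthogonality relation $\symp{v}{w}=\symp{s}{\pi_S(w)}=0$ places $s$ in $\pi_S(W)^{\perp_S}=\pi_S(W)$; thus $s=\pi_S(w')$ for some $w'\in W$, and $v-w'\in\rad(V)\subseteq W$, giving $v\in W$. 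The only real obstacle is tracking projections and orthogonal complements cleanly through the direct-sum decomposition; once that bookkeeping is in place, everything reduces to dimension counting and part~(1).
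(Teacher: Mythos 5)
Your argument is correct, and for part (2) it takes a genuinely different (dual) route to the paper's. The paper goes \emph{upward}: it extends $W$ to a Lagrangian $T\subseteq V$ via \cref{lemma:iso-in-lagrangian}, argues by maximality that $\rad(V)\subseteq T$, splits $T=\rad(V)\oplus L$ with $L$ Lagrangian in the non-degenerate summand $S$, and computes $\dim T$ directly. You instead go \emph{downward}: you project $W$ onto $S$ along $\rad(V)$, observe via $\symp{v}{w}=\symp{s_v}{s_w}$ that the image is isotropic in $S$, and combine part (1) with rank--nullity applied to $\pi_S|_W$. Both proofs rest on the same two pillars (\cref{lem:degenerate-radical-sum} and the non-degenerate case), and your identity $\symp{v}{w}=\symp{s_v}{s_w}$ is the same observation the paper uses implicitly when it notes $\rad(V)$ and $S$ are orthogonal. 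What your version buys is a cleaner treatment of the equality claim: the paper's proof only establishes the inequalities and leaves the ``equality iff Lagrangian'' assertion to be read off from its computation that every Lagrangian $T$ has dimension $\frac{\dim V+\dim(\rad(V))}{2}$ (whence saturation forces $W=T$), whereas you track exactly where slack can occur ($W\cap\rad(V)$ smaller than $\rad(V)$, or $\pi_S(W)$ not Lagrangian) and verify both directions of the equivalence explicitly; the bookkeeping in your converse step, showing $W^\perp\subseteq W$ from $\rad(V)\subseteq W$ and $\pi_S(W)=\pi_S(W)^{\perp_S}$, checks out. The paper's version is marginally shorter for the inequality alone, since it only has to measure one Lagrangian rather than reason about kernels and images.
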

\begin{proof}
    The first part follows directly from \cref{lem:dim-lagrangian} and \cref{lemma:iso-in-lagrangian}.

    Suppose $W\subseteq V$ is an isotropic subspace. By \cref{lemma:iso-in-lagrangian} there is a Lagrangian subspace, $T$, containing $W$. We will calculate the dimension of such a subspace.
    
    By \cref{lem:degenerate-radical-sum} there is a non-degenerate subspace $S\subseteq V$ such that $V=\rad(V)\oplus S$ and $\dim V = \dim(\rad(V))+\dim S$. Note that $\rad(V)\subseteq T$: if $t\in\rad(V)\setminus T$ then $\br{t}+T$ is a strictly larger isotropic subspace than $T$, contradicting the maximality of $T$ given in \cref{lem:max-iso-equals-lagrangian}. Therefore, since $\rad(V)$ and $S$ are orthogonal and have trivial intersection, there is an $L\subseteq S$ such that $T=\rad(V)\oplus L$.

    Since $L$ is a subspace of the isotropic subspace $T$, it is also isotropic. In particular, $L$ is an isotropic subspace of $S$ which is non-degenerate. $L$ is, in fact, a Lagrangian subspace of $S$, since otherwise it would again be contained in a strictly larger isotropic subspace, contradicting maximality. Thus, applying the first part of \cref{cor:max-iso-dimension} to $L$ we get that $\dim L = \dim(S)/2 = (\dim V-\dim(\rad(V)))/2$, so $\dim W \leq \dim(\rad(V))+(\dim V-\dim(\rad(V)))/2 = \dim(V)/2 +\dim(\rad(V))/2$.

\end{proof}


\subsection{Pauli Vectors and Symplectic Spaces}\label{subapp:pauli-to-symp}
The connection between the Pauli group and symplectic vector spaces has been known since stabilizer codes were first defined \cite{CSS97}, and many results about the Pauli, Clifford, and stabilizer groups can be proved using the symplectic formalism, see \cite{Wil09,Got10,RCK+18,GIK+23b} for just a few examples. 

Given a phase-less $\ell$-qubit Pauli operator, $P=P_1\otimes\dots\otimes P_n\in\hat\mcP_\ell\equiv \mcP_\ell / \langle i\eye\rangle$, its \textbf{Pauli vector}, $x_P = [a,b]_P\in\FF_2^{2\ell}$ is defined by
\begin{equation}
\begin{aligned}
    a_i = 0, b_i=0 &\Longleftrightarrow P_i = \eye, \\
    a_i = 1, b_i=0 &\Longleftrightarrow P_i = X, \\
    a_i = 0, b_i=1 &\Longleftrightarrow P_i = Z, \\
    a_i = 1, b_i=1 &\Longleftrightarrow P_i = Y.
    \end{aligned}
\end{equation}
The map $P\mapsto x_P$ defines an isomorphism between $\hat\mcP_\ell$ and $\FF_2^{2\ell}$, with inverse given by $x\mapsto \text{P}^x$ where
\begin{equation*}
    \text{P}^x\equiv X^{a_1}Z^{b_1}\otimes\dots\otimes X^{a_\ell}Z^{b_\ell}.
\end{equation*}
Define a form $\omega: \FF_2^{2\ell}\times \FF_2^{2\ell}\rightarrow \FF_2$ by
\begin{equation}
    \symp{x}{y} \equiv \frac{1}{2}(1-\llbracket \text{P}^x, \text{P}^y \rrbracket).
\end{equation}
That is, $\symp{x}{y}=0$ when the corresponding Pauli operators (considered as elements of $\mcP_\ell$) commute and $\symp{x}{y}=1$ when they anti-commute. Since all Pauli operators commute with themselves we have that $\symp{x}{x}=0$, and since the only Pauli operator which commutes with every Pauli operator is $\eye$ we have that $\omega$ is non-degenerate. Thus, $\FF_2^{2\ell}\cong\hat\mcP_\ell$ together with $\omega$ defines a non-degenerate symplectic vector space.
\begin{lemma}\label{fact:Abelian-to-isotropic}
    A commuting subgroup\footnote{As in \cref{sec:prelim}, by ``commuting'' we mean that all operators commute as elements of the full Pauli group with phases, $\mcP_\ell$.} $M\leq\hat\mcP_\ell$ is isomorphic to an isotropic subspace $W_M\subseteq \FF_2^{2\ell}$.
\end{lemma}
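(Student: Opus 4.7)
The plan is to transport the statement directly along the isomorphism $\hat{\mathcal{P}}_\ell \cong \mathbb{F}_2^{2\ell}$ already set up via Pauli vectors, and then verify that the ``commuting'' hypothesis on $M$ is exactly what makes its image isotropic with respect to $\omega$. Concretely, define $W_M \equiv \{x_P : P \in M\} \subseteq \mathbb{F}_2^{2\ell}$, i.e.\ the image of $M$ under $P \mapsto x_P$.

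First I would check that $W_M$ is a subspace of $\mathbb{F}_2^{2\ell}$. Since $P \mapsto x_P$ is a group isomorphism $\hat{\mathcal{P}}_\ell \to \mathbb{F}_2^{2\ell}$ (sending the product of Pauli operators, up to phase, to the sum of their Pauli vectors), it carries the subgroup $M$ to a subgroup of the additive group $(\mathbb{F}_2^{2\ell},+)$. Every additive subgroup of an $\mathbb{F}_2$-vector space is automatically an $\mathbb{F}_2$-subspace, since the only nonzero scalar is $1$. This gives the vector-space structure on $W_M$ and the group isomorphism $M \cong W_M$.

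Next I would verify the isotropy condition. For any $P, Q \in M$, commutation in $\mathcal{P}_\ell$ means $\llbracket P,Q\rrbracket = +1$, which by the defining formula
\begin{equation*}
    \omega(x_P, x_Q) = \tfrac{1}{2}(1 - \llbracket \mathrm{P}^{x_P}, \mathrm{P}^{x_Q}\rrbracket)
\end{equation*}
gives $\omega(x_P, x_Q) = 0$. Hence every pair of vectors in $W_M$ is orthogonal under $\omega$, which is exactly the statement $W_M \subseteq W_M^{\perp}$, i.e.\ $W_M$ is isotropic.

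There is no real obstacle here; the only point to be slightly careful about is making sure the map is well-defined on $\hat{\mathcal{P}}_\ell$ (i.e.\ that passing to the phase-less Pauli group is what allows $P \mapsto x_P$ to be a bijective homomorphism) and that commutation in the phase-less group was defined in Section~\ref{sec:prelim} precisely via $\llbracket \cdot,\cdot\rrbracket$ of any representatives in $\mathcal{P}_\ell$, so the equivalence between ``commuting subgroup of $\hat{\mathcal{P}}_\ell$'' and ``isotropic subspace of $\mathbb{F}_2^{2\ell}$'' is exact.
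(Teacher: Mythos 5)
Your proof is correct and follows essentially the same route as the paper's: take the image $W_M$ of $M$ under $P\mapsto x_P$, note it is a subspace because the map is a group isomorphism onto the additive group of $\FF_2^{2\ell}$, and conclude isotropy directly from $\omega(x_P,x_Q)=\tfrac{1}{2}(1-\llbracket P,Q\rrbracket)=0$ for commuting $P,Q$. The extra care you take about well-definedness on the phase-less group is a fair point but not a substantive difference.
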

\begin{proof}
    For $P\in M$ let $x_P\in\FF_2^{2\ell}$ be its corresponding vector and let $W_M\equiv \br{x_P\mid P\in M}$. The fact that $W_M$ is a subspace of $\FF_2^{2\ell}$ follows directly from $G$ being a subgroup.

    Since $\llbracket P,Q\rrbracket=+1$ for every $P,Q\in M$, we have that for all $x_P,x_Q\in W_M$, 
    \begin{align}
        \omega(x_P,x_Q)=\frac{1}{2}(1-\llbracket \text{P}^{x_P},\text{P}^{x_Q}\rrbracket)
        =\frac{1}{2}(1-\llbracket P,Q\rrbracket)
        = 0,
    \end{align}
    which is the condition for $W_M$ to be isotropic.
\end{proof}
Since the set of Pauli operators can be viewed as a symplectic vector space, and subgroups of $\hat\mcP_n$ are subspaces, we can reason about the structure of the Pauli group and its subgroups using results from the previous section. As an example, we can easily prove \cref{clm:max-Abelian}.

\maxAbeliansize*
\begin{proof}
    By \cref{fact:Abelian-to-isotropic}, every commuting subgroup, $M\leq\hat\mcP_\ell$, is isomorphic to an isotropic subspace, $W_M\subseteq\FF_2^{2\ell}$. The minimal size of a generating set is equal to the dimension of $W_M$, which by \cref{cor:max-iso-dimension} is at most $\ell$.
\end{proof}

\cref{table:equivalence} sums up the relationship between Pauli groups and symplectic spaces that we have discussed.

\begin{table}[H]
\begin{center}
\begin{tabular}{|l | l|} 
 \hline
 \textbf{Pauli View} & \textbf{Symplectic View}  \\ [0.5ex] 
 \hline\hline
 Pauli group: $\hat\mcP_n$ & Non-degenerate symplectic space: $(\FF_2^{2n},\omega)$  \\  [1ex]\hline Commutator: $\llbracket P,Q\rrbracket$  & Symplectic product: $\omega$  \\ [1ex] 
 \hline
 Subgroup: $M\leq\hat\mcP_n$  & Subspace: $W\subseteq\FF_2^{2n}$ \\  [1ex]\hline
 Commuting subgroup: $M\leq\hat\mcP_n$  & Isotropic subspace: $W\subseteq W^\perp$ \\  [1ex]\hline
 Maximal Commuting subgroup: $M\leq\hat\mcP_n$  & Lagrangian subspace: $W=W^\perp$  \\ [1ex] 
 \hline
 Center: $Z(M)$  & Radical: $\rad(W)$  \\ [1ex] \hline
 Canonical basis of $M$: $\br{\mcS,\mcX,\mcZ}$  & Subspace decomposition: $W=\rad(W)\oplus S$  \\ [1ex] \hline
 $\langle \mcX\cup\mcZ \rangle\cong\hat\mcP_\ell$  & $S\approx (\FF_2^{2\ell},\omega)$ \\ [1ex] 
 \hline
\end{tabular}
\caption{Equivalent notions in the Pauli operator view and the symplectic vector space view.}
\label{table:equivalence}
\end{center}
\end{table}

\subsection{Symplectic Equivalent of Section \ref{subsec:many-pseudo-stabilizer}}\label{subapp:many-pseudo-stabilizer}
Our goal is to prove an analogue of \cref{prop:group-size-bound} in the symplectic view of Pauli operators. Throughout, let $(\FF_2^{2n},\omega)$ be the standard symplectic product on $\FF_2^{2n}$ as defined in \cref{subapp:symplectic-vector-spaces}.

First, we will need a symplectic version of local views.
Let $A\subseteq[n]$ and denote $i+A\equiv\br{i+a\mid a\in A}$. The projection onto $A$, $\rho_A:\FF_2^{2n}\rightarrow\FF_2^{2n}$, is defined by
\begin{equation}
    \rho_A(x)_i = \begin{cases}
        x_i & \text {if } i\in A\cup(n+A), \\
        0 & \text{otherwise}.
    \end{cases}
\end{equation}
\begin{lemma}\label{lem:non-interacting}
    For $A,B\subseteq [n]$ and $x,y\in\FF_2^{2n}$, if $A\cap B=\emptyset$ then $\omega(\rho_A(x),\rho_B(y))=0$.
\end{lemma}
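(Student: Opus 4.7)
The plan is to reduce this to a straightforward coordinate computation using the explicit formula for the standard symplectic product on $\FF_2^{2n}$, since the geometric content here is simply that vectors with disjoint supports must be orthogonal under $\omega$.

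First I would write $x = [a,b]^T$ and $y = [c,d]^T$ with $a,b,c,d \in \FF_2^n$, so by the explicit matrix form of $\omega$ introduced in \cref{subapp:symplectic-vector-spaces} we have $\omega(x,y) = a^T d + b^T c$. Next, I would unpack the definition of $\rho_A$: writing $\rho_A(x) = [\tilde a, \tilde b]^T$, the coordinates satisfy $\tilde a_i = a_i$ and $\tilde b_i = b_i$ for $i \in A$, and $\tilde a_i = \tilde b_i = 0$ otherwise. Analogously, $\rho_B(y) = [\tilde c, \tilde d]^T$ is supported on $B$ in each block.

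Applying the formula then yields
\begin{equation*}
\omega(\rho_A(x),\rho_B(y)) \;=\; \tilde a^T \tilde d + \tilde b^T \tilde c \;=\; \sum_{i \in A \cap B}\!\!\bigl(a_i d_i + b_i c_i\bigr) \;=\; 0,
\end{equation*}
where the sum is empty because $A \cap B = \emptyset$.

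There is no real obstacle: this is the symplectic translation of the obvious Pauli fact that operators supported on disjoint qubits commute. In fact, an alternative one-line argument is to observe that $\text{P}^{\rho_A(x)}$ and $\text{P}^{\rho_B(y)}$ act non-trivially only on $A$ and $B$ respectively, hence commute as elements of $\mcP_n$, so $\omega(\rho_A(x),\rho_B(y)) = \tfrac{1}{2}(1 - \llbracket \text{P}^{\rho_A(x)}, \text{P}^{\rho_B(y)}\rrbracket) = 0$ by the correspondence of \cref{subapp:pauli-to-symp}. Either form of the argument is a few lines and requires no ingredient beyond what has already been established.
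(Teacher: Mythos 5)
Your proof is correct and follows essentially the same route as the paper's: both arguments observe that $\rho_A(x)$ and $\rho_B(y)$ have disjoint supports (within each block), so every term in the defining sum for $\omega$ vanishes. Your coordinate computation just makes the paper's one-line support argument explicit.
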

\begin{proof}
    By definition, $\rho_A(x)$ is zero outside of the set $A\cup(n+A)$ and $\rho_B(y)$ is zero outside of the set $B\cup(n+B)$. $\rho_A(x)$ and $\rho_B(y)$ can both be nonzero only on the set $(A\cup (n+A))\cap(B\cup(n+B))=\emptyset$, so by definition of $\omega$ their symplectic product is 0.
\end{proof}
\begin{corollary}\label{cor:non-interacting}
    Let $W\subseteq\FF_2^{2n}$ be a subspace. Suppose $A,B\subseteq [n]$ and $A\cap B=\emptyset$. Then for $x\in\rho_A(W)$ and $y\in\rho_B(W)$, $\omega(x,y)=0$.
\end{corollary}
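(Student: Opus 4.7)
The plan is to simply unpack the definitions and invoke Lemma \ref{lem:non-interacting}. By definition, an element $x \in \rho_A(W)$ is of the form $x = \rho_A(w_x)$ for some $w_x \in W$, and similarly $y = \rho_B(w_y)$ for some $w_y \in W$. We then want $\omega(\rho_A(w_x), \rho_B(w_y)) = 0$. This is exactly the statement of Lemma \ref{lem:non-interacting}, which requires only that $A \cap B = \emptyset$ (an assumption we have) and makes no use of the fact that the two inputs come from a common subspace.

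Consequently, the corollary requires essentially no new argument beyond remarking that $\rho_A(W) \equiv \{\rho_A(w) : w \in W\}$ so that every element has a preimage under $\rho_A$, and likewise for $\rho_B(W)$. In particular, we do not need to exhibit a single $w$ mapping to both $x$ and $y$: the two choices of preimage, $w_x$ and $w_y$, can be completely unrelated, and the conclusion still follows from Lemma \ref{lem:non-interacting} applied to those two vectors.

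There is no real obstacle here; the corollary is stated separately from the lemma mainly so that it can be applied directly at the level of subspaces in the symplectic analogue of the results in \cref{subsec:many-pseudo-stabilizer}. I expect its role to be the following: when $\{A_i\}$ is a disjoint cover of $[n]$ and $W$ corresponds to the phase-less version of a stabilizer group $G$, the images $\rho_{A_i}(W)$ live in pairwise symplectically-orthogonal regions of $\FF_2^{2n}$. This is the ingredient that lets one sum the dimensions of the local views (which become isotropic-in-their-own-block statements) to upper bound $\dim W = \dim G$, i.e., the symplectic restatement of \cref{lemma:stabilizerdimension} and \cref{prop:group-size-bound}.
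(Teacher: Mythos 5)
Your argument is correct and is exactly the intended (and essentially only) deduction: every element of $\rho_A(W)$ has a preimage in $W$, so \cref{lem:non-interacting} applies verbatim, and the paper accordingly states the corollary without further proof. No gap here.
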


\begin{definition}
    Given a subspace $W\subseteq \FF_2^{2n}$ and a set $A\subseteq[n]$, the \emph{local view} of $W$ at $A$ is defined as the vector subspace $\rho_A(W)\subseteq \FF_2^{2n}$. Note that since $\rho_A(W)$ is nonzero on at most $2\abs{A}$ locations, $\rho_A(W)\approx W_A\subseteq\FF_2^{2\abs{A}}$ for some subspace, $W_A$, of $\FF_2^{2\abs{A}}$.
\end{definition}

\begin{restatable}[Symplectic equivalent of \cref{prop:group-size-bound}]{theorem}{sympDimBound}
    Let $W\subseteq\FF_2^{2n}$ be an isotropic subspace and let $\br{A_i}_{i=1}^p$ be any disjoint covering of $[n]$. There are subspaces $W_1,\dots,W_p\subseteq W$ such that each $\rho_{A_i}(W_i)$ is Lagrangian, and that satisfy
    \begin{equation*}
        \dim W\leq \sum_{i=1}^p\dim \rho_{A_i}(W_i).
    \end{equation*}
\end{restatable}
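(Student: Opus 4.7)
The plan is to mirror the Pauli-view argument of \cref{prop:group-size-bound}, replacing its use of canonical bases with an orthogonal direct sum decomposition of symplectic spaces. Define $V \equiv \sum_{i=1}^p \rho_{A_i}(W)$. Since the $A_i$ are pairwise disjoint, the coordinate supports $A_i\cup(n+A_i)$ are disjoint, so $V = \bigoplus_{i=1}^p \rho_{A_i}(W)$ as vector spaces, and by \cref{cor:non-interacting} the summands are pairwise orthogonal under $\omega$. Thus $\omega\vert_V$ is an orthogonal direct sum of the forms $\omega\vert_{\rho_{A_i}(W)}$, which in particular gives $\rad(V) = \bigoplus_{i=1}^p \rad(\rho_{A_i}(W))$. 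Moreover, since $\br{A_i}$ covers $[n]$, every $w\in W$ decomposes as $w = \sum_i \rho_{A_i}(w)$, so $W\subseteq V$.

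Next I would construct the subspaces $W_i$. For each $i$, let $L_i \subseteq \rho_{A_i}(W)$ be a maximal isotropic subspace of $\rho_{A_i}(W)$ equipped with the restricted form. By \cref{lem:max-iso-equals-lagrangian}, $L_i$ is Lagrangian in $\rho_{A_i}(W)$, and by \cref{cor:max-iso-dimension}
\begin{equation*}
\dim L_i = \frac{\dim \rho_{A_i}(W) + \dim \rad(\rho_{A_i}(W))}{2}.
\end{equation*}
Surjectivity of $\rho_{A_i}\colon W\to\rho_{A_i}(W)$ lets me lift $L_i$ to a subspace $W_i\subseteq W$ with $\rho_{A_i}(W_i)=L_i$, for example by choosing a basis of $L_i$ and taking preimages in $W$.

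To finish, form $L \equiv \bigoplus_{i=1}^p L_i \subseteq V$. Isotropy within each summand combined with the orthogonality between summands shows that $L$ is isotropic in $V$, and summing the dimensions yields
\begin{equation*}
\dim L = \sum_{i=1}^p \dim L_i = \frac{\dim V + \dim \rad(V)}{2},
\end{equation*}
which saturates the upper bound of \cref{cor:max-iso-dimension} for isotropic subspaces of $V$, so $L$ is itself Lagrangian in $V$. Since $W$ is isotropic in $\FF_2^{2n}$ and $W \subseteq V$, $W$ is isotropic as a subspace of $V$, and a second application of \cref{cor:max-iso-dimension} gives
\begin{equation*}
\dim W \leq \frac{\dim V + \dim \rad(V)}{2} = \sum_{i=1}^p \dim \rho_{A_i}(W_i),
\end{equation*}
as required.

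The main obstacle, relative to the general isotropic dimension bound already established in \cref{cor:max-iso-dimension}, is verifying that the disjointness of the $A_i$ passes through to a genuinely orthogonal direct sum decomposition of $V$ on which radicals and maximal-isotropic dimensions add summand-wise. Once that structural statement is in hand via \cref{cor:non-interacting}, the argument is a clean symplectic analogue of the canonical-basis construction of the group $Q$ in the proof of \cref{prop:group-size-bound}: the orthogonal direct sum $V$ plays the role of $Q$, each $L_i$ plays the role of a maximal commuting subgroup of $\rho_{A_i}(G)$, and $W$ plays the role of the commuting subgroup $\hat G$ sitting inside $Q$.
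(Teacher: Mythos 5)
Your proposal is correct and follows essentially the same route as the paper's proof: form the orthogonal direct sum $V=\bigoplus_i\rho_{A_i}(W)$ containing $W$, identify $\rad(V)=\bigoplus_i\rad(\rho_{A_i}(W))$ via \cref{cor:non-interacting}, apply \cref{cor:max-iso-dimension} to bound $\dim W$, and lift maximal isotropic (Lagrangian) subspaces of each local view to subspaces $W_i\subseteq W$. The only cosmetic differences are that the paper spells out the two containments for the radical computation explicitly, and it does not bother to observe that $\bigoplus_i L_i$ is itself Lagrangian in $V$ (a true but unnecessary remark).
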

\begin{proof}
    Since $\br{A_i}_{i=1}^p$ is a disjoint cover of $[n]$, $x=\sum_{i=1}^p \rho_{A_i}(x)$ for all $x\in\FF_2^{2n}$. This implies that $W\subseteq \bigoplus_{i=1}^p \rho_{A_i}(W)$, or in other words, $W$ is an isotropic subspace of the vector space sum of its local views. We will denote the sum of the local views by $V\equiv\bigoplus_{i=1}^p \rho_{A_i}(W)$, which has dimension $\dim V = \sum_{i=1}^p \dim\rho_{A_i}(W)$ since the $\rho_{A_i}(W)$ have trivial intersection.

    $V$ is a (degenerate) symplectic vector space, and $W$ is an isotropic subspace of it, so by \cref{cor:max-iso-dimension} we have
    \begin{align}
        \dim W &\leq\frac{\dim V + \dim(\rad(V))}{2},\nonumber \\
        &= \frac{1}{2}\sum_{i=1}^p \dim\rho_{A_i}(W) + \frac{1}{2}\dim(\rad(V)).\label{eq:dim-W-bound}
    \end{align}
    We will now show $\rad(V)=\bigoplus_{i=1}^p \rad(\rho_{A_i}(W))$, and use this to calculate $\dim(\rad(V))$.

    By definition, $\rad(V)=V\cap V^\perp$, and so by \cref{cor:inductive-sum-orthogonality} we have 
    \begin{equation}\label{eq:rad-V-decomp}
        \rad(V) = \left(\bigoplus_{i=1}^p \rho_{A_i}(W)\right)\cap \left(\bigcap_{i=1}^p \rho_{A_i}(W)^\perp\right).
    \end{equation}
    \begin{enumerate}[label=\textbf{\Roman*.}, leftmargin=*]
        \item ($\subseteq$) Let $w\in\rad(V)$. $w=\rho_{A_1}(w)+\dots+\rho_{A_p}(w)$ where each $\rho_{A_i}(w)\in\rho_{A_i}(W)$. If we can show that each $\rho_{A_i}(w)\in\rho_{A_i}(W)^\perp$ then the result is proved.

        Let $\xi_i\in\rho_{A_i}(W)$. By \cref{eq:rad-V-decomp}, $\omega(w,\xi_i)=0$. Further, we have
        \begin{align}
            \omega(w,\xi_i) &= \omega(\rho_{A_1}(w),\xi_i)+\dots+\omega(\rho_{A_p}(w),\xi_i), \\
            \text{\color{since}(By \cref{cor:non-interacting})}\hspace{3em} &= \omega(\rho_{A_i}(w),\xi_i),
        \end{align}
        and so, indeed, $\rho_{A_i}(w)\in\rad(\rho_{A_i}(W))$.
        \item ($\supseteq$) Let $w = w_1+\dots w_p\in\bigoplus_{i=1}^p \rad(\rho_{A_i}(W))$. By definition $w\in \bigoplus_{i=1}^p \rho_{A_i}(W)$, so by \cref{eq:rad-V-decomp} we must show that $w\in\rho_{A_i}(W)^\perp$ for every $i\in[p]$. 
        
        Let $\zeta_i\in\rho_{A_i}(W)$. Indeed, we have
        \begin{align}
            \omega(w,\zeta_i) &= \omega(w_1,\zeta_i)+\dots+\omega(w_p,\zeta_i), \\
            \text{\color{since}(By \cref{cor:non-interacting})}\hspace{3em} &= \omega(w_i,\zeta_i), \\
            \text{\color{since}(By assumption)}\hspace{3em} &= 0.
        \end{align}
        
    \end{enumerate}
    We have $\rad(V)=\bigoplus_{i=1}^p \rad(\rho_{A_i}(W))$,\footnote{This is the symplectic equivalent to the fact that $\mcS$ from the proof of \cref{lemma:stabilizerdimension} generates $Z(G)$.} and since the $\rad(\rho_{A_i}(W))$ have trivial intersection
    \begin{equation*}
        \dim(\rad(V)) = \sum_{i=1}^p \dim(\rad(\rho_{A_i}(W))).
    \end{equation*}
    Combined with \cref{eq:dim-W-bound}, we have
    \begin{align}
        \dim W &\leq  \frac{1}{2}\sum_{i=1}^p \dim\rho_{A_i}(W) + \frac{1}{2}\sum_{i=1}^p \dim(\rad(\rho_{A_i}(W))), \\
        &= \sum_{i=1}^p \frac{\dim\rho_{A_i}(W) + \dim(\rad(\rho_{A_i}(W)))}{2}.
    \end{align}
    Now, for each $i\in[p]$ choose Lagrangian subspaces $M_i\subseteq\rho_{A_i}(W)$ which by \cref{cor:max-iso-dimension} have dimension equal to $\frac{1}{2}(\dim\rho_{A_i}(W) + \dim(\rad(\rho_{A_i}(W))))$. As $M_i$ is in the image of $\rho_{A_i}$, there are subspaces $W_i\subseteq W$ such that $\rho_{A_i}(W_i)=M_i$. The result holds with these $W_i$.
\end{proof}

\end{document}